\algnewcommand\algorithmicinput{\textbf{Input:}}
\algnewcommand\Input{\item[\algorithmicinput]}
\algnewcommand\algorithmicoutput{\textbf{Output:}}
\algnewcommand\Output{\item[\algorithmicoutput]}
\newtheorem{remark}{Remark}
\newcommand{\eat}[1]{}
\newcommand{\R}{\mathbb R}
\newcommand{\A}{\mathbb A}
\renewcommand{\O}{{\mathbb O}}
\newcommand{\Z}{{\mathbb Z}}
\newcommand{\M}{{\mathbb M}}
\newcommand{\I}{{\mathbb I}}
\newcommand{\D}{{\mathcal D}}
\newcommand{\PP}{{\mathcal P}}
\newcommand{\LL}{{\mathcal L}}
\DeclareMathOperator*{\E}{\mathbb{E}}
\DeclareMathOperator*{\argmin}{\arg\min}
\DeclareMathOperator*{\supp}{supp}
\newcommand{\cC}{\mathcal{C}}
\newcommand{\Csol}{\mathcal{C}_{\mathsf{sol}}}
\newcommand{\Copt}{\mathcal{C}_{\mathsf{opt}}}
\newcommand{\opt}{\mathsf{opt}}
\newcommand{\sol}{\mathsf{sol}}
\newcommand{\abs}[1]{\left| #1 \right|}
\newcommand{\eps}{\epsilon}
\newcommand{\kt}{(k,t)}
\begin{document}
\title{Distributed Partial Clustering}
\thanks{Sudipto Guha was supported in part by NSF award 1546151. Qin Zhang was supported in part by NSF CCF-1525024 and
IIS-1633215.}



\author{Sudipto Guha}
\affiliation{
	\institution{University of Pennsylvania}
	\city{Philadelphia}
	\state{PA}
	\postcode{19104}
	\country{United States}}
\email{sudipto@cis.upenn.edu}

\author{Yi Li}
\affiliation{
	\institution{Nanyang Technological University}
	\country{Singapore}}
\email{yili@ntu.edu.sg}

\author{Qin Zhang}
\affiliation{
	\institution{Indiana University Bloomington}
	\city{Bloomington}
	\state{IN}
	\postcode{47401}
	\country{United States}}
\email{qzhangcs@indiana.edu}

\copyrightyear{2017} 
\acmYear{2017} 
\setcopyright{acmlicensed}
\acmConference{SPAA '17}{July 24-26, 2017}{Washington DC,
USA}\acmPrice{15.00}\acmDOI{10.1145/3087556.3087568}

\begin{abstract}
Recent years have witnessed an increasing popularity of algorithm design for distributed data, largely due to the fact that massive datasets are often collected and stored in different locations. In the distributed setting communication typically dominates the query processing time. Thus it becomes crucial to design communication efficient algorithms for queries on distributed data.
Simultaneously, it has been widely recognized that partial optimizations, where we are allowed to disregard a small part of the data, provide us significantly better solutions. The motivation for disregarded points often arise from noise and other phenomena that are pervasive in large data scenarios.

In this paper we focus on partial clustering problems, $k$-center, $k$-median and $k$-means, in the distributed model, and provide algorithms with  communication sublinear of the input size. As a consequence we develop the first algorithms for the partial $k$-median and means objectives that run in subquadratic running time. 
We also initiate the study of distributed algorithms for clustering uncertain data, where each data point can possibly fall into multiple locations under certain probability distribution.  
\end{abstract}

\settopmatter{printacmref=false, printfolios=false}
\fancyhead{}

\maketitle

\section{Introduction}
\label{sec:intro}

The challenge of optimization over large quantities of data has brought communication efficient {\em distributed} algorithms to the fore.  From the perspective of optimization, it has also become clear that {\em partial optimizations}, where we are allowed to disregard a small part of the input, enable us to provide significantly better optimization solutions compared with those which are forced to account for the whole input. While several algorithms for distributed clustering have been proposed, partial optimizations for clustering problems, introduced by Charikar et al.~\cite{CKMN01}, have not received as much attention. While the results of Chen~\cite{Chen08} improve the approximation ratios, the running time of the $k$-median and $k$-means versions have not been improved and the (at least) quadratic running times have remained as a barrier.

In this paper we study partial clustering under the standard $(k,t)$-median/means/center objective functions, where $k$ is the number of centers we can use and $t$ is the maximum number of points we can ignore. In the distributed setting, let $s$ denote the number of sites. 
The $(k, t)$-center problem has recently been studied by Malkomes et al.~\cite{Malkomes}, who gave a $2$-round $O(1)$-approximation algorithm with $\tilde{O}(sk + st)$ bits of communication\footnote{We hide $\text{poly}\log n$ factors in the $\tilde{O}$ notation.}, assuming that each point can be encoded in $O(1)$ bits.   In fact, we observe that results from streaming algorithms \cite{GMMMO03} can in fact provide us $1$-round $O(1)$-approximation algorithms with $\tilde{O}(sk+st)$ bits of communication for $(k,t)$-center, $(k,t)$-median, and $(k,t)$-means.  However, in many scenarios of interest, we have $n > t \gg k$ and $t \gg s$.  Thus the $st$ term generates a significant communication burden. In this paper we reduce $\tilde{O}(st)$ to $\tilde{O}(t)$ for the $(k,t)$-center problem, as well as for $(k,t)$-median and $(k,t)$-means problems and unify their treatment. We also provide the first subquadratic algorithms for median and means version of this problem.

Large data sets often have erroneous values. Stochastic optimization has recently attracted a lot of attention in the field of databases, and has substantiated as a subfield called `uncertain/probabilistic databases' (see, e.g.,~\cite{suciu}). For the clustering problem, a method of choice is to first model the underlying uncertainty and then cluster the uncertain data.  Clustering under uncertainty has been studied in centralized models \cite{CM08,GM09}, but the algorithms proposed therein do not consider communication costs. Note that it typically requires significantly more communication to communicate a distribution (for an uncertain point) than a deterministic point, and thus black box adaptations of centralized algorithms do not work well in the distributed setting.  In this paper we propose communication-efficient distributed algorithms for handling {\em both} data uncertainty and partial clustering.  To the best of our knowledge neither distributed clustering of uncertain data nor partial clustering of uncertain data has been studied. We note that both problems are fairly natural, and likely to be increasingly useful as distributed cloud computing becomes commonplace.

\medskip
\noindent{\bf Models and Problems.} 
We study the clustering problems in the {\em coordinator} model, in which there are $s$ sites and one central coordinator, who are connected by a star communication network with the coordinator at the center.  However, direct communication between sites can be simulated by routing via the coordinator, which at most doubles the communication. The computation is in terms of rounds. At each round, the coordinator sends a message (could be an empty message) to each site and every site sends a message (could be an empty message) back to the coordinator.  The coordinator outputs the answer at the end. The input $\A$ is partitioned into $(\A_1, \ldots, \A_s)$ among the $s$ sites. Let $n_i = \abs{\A_i}$, and $n= \abs{\A} = \sum_{i \in [s]} n_i$ be the total input size.

We will consider clustering over a graph with $n$ nodes and an oracle distance function $d(\cdot,\cdot)$. An easy example of such is points in Euclidean space. More complicated examples correspond to documents and images represented in a feature space and the distance function is computed via a kernel.  We now give the definitions of $(k, t)$-center/median/means.

\begin{definition}[$(k,t)$-center,median,means]
\label{def:clustering}

Let $\A$ be a set of $n$ points and $k$, $t$ are integer parameters ($1\leq k\leq n$, $0\leq t\leq n$).  In the $(k,t)$-median problem we want to compute 
$$\min_{K, \O \subseteq \A} 
\sum_{p\in \A\setminus \O} d(p, K) \quad \text{subject to} \quad \abs{K} \leq k \ \text{ and } \abs{\O} \leq t,$$ where $d(p, K) = \min_{x\in K} d(p, x)$.  We typically call $K$ the {\em centers} and $\O$ the {\em outliers}.  In the $(k,t)$-means and the $(k,t)$-center problem we replace the objective function $\sum_{p\in \A\setminus \O} d(p, K)$ with $\sum_{p\in \A\setminus \O} d^2(p, K)$
and $\max_{p\in \A\setminus \O} d(p, K)$ respectively.
\end{definition}

In the definition above, we assume that centers are chosen from the input points. In the Eucldiean space, such restriction will only affect the approximation by a factor of $2$.

For the uncertain data, we follow the assigned clustering introduced in \cite{CM08}. Let $\PP$ be a finite set of points in a metric space. There are $n$ input nodes $\A$, where node $j$ follows an independent distribution $\D_j$ over $\PP$.  Each site $i$ knows the distributions $\D_j$ associated with the nodes $j\in \A_i$. 

\begin{definition} [Clustering Uncertain Data]
\label{def:uncertain-clustering}  
In clustering with uncertainty, the output is a subset $K\subseteq \PP$ of size $k$ (centers), a subset $\O\subseteq \PP$ of size at most $t$ (ignored points), as well as a mapping $\pi:\A\to K$. In every realization $\sigma:\A\to \PP$ of the values of the input nodes, node $j\in \A$ (now realized as $\sigma(j)\in \PP$) is assigned to the same center $\pi(j)\in K$.  In uncertain $(k,t)$-median, the goal is to minimize the expected cost
\begin{equation}
\label{aaa02}
\E_{\sigma\sim \prod_{j\in\A} \D_j} \left[\sum_{j\in \A\setminus \O} d(\sigma(j),\pi(j))\right] = \sum_{j\in \A\setminus \O} \E_{\sigma\sim\D_j} \left[d(\sigma(j),\pi(j))\right].
\end{equation}
The definition of uncertain $(k,t)$-means is basically the same as uncertain $(k,t)$-median, except that we replace the objective function (\ref{aaa02}) with $\sum_{j\in \A\setminus \O} \E_{\sigma\sim\D_j} \left[d^2(\sigma(j),\pi(j))\right]$.
For uncertain $(k,t)$-center, we have two objectives:
\begin{eqnarray}
\max_{j\in \A\setminus \O}  \left( \E_{\sigma\sim \D_j} \left[d(\sigma(j),\pi(j)) \right]\right)
\qquad \label{kcensimple}
\\
\E_{\sigma\sim \prod_j \D_j} \left[\max_{j\in \A\setminus \O} d(\sigma(j),\pi(j))\right] \qquad \label{kcenhard}
\end{eqnarray}

Note that these two objectives are {\em not} equivalent, since $\E$ and $\max$ do not commute in Equation~\eqref{kcenhard} and we cannot equate it to \eqref{kcensimple}. Equation~\eqref{kcensimple} is in the same spirit as Equation~\eqref{aaa02}, and corresponds to a {\em per point} measurement. We term this problem as uncertain $(k,t)$-center-pp. Equation~\eqref{kcenhard} corresponds to a more {\em global} measurement and we term this problem as uncertain $(k,t)$-center-g. This version was considered in \cite{CM08,GM09}.
\end{definition}

\begin{table*}[t]
\centering
{\scriptsize
\begin{tabular}{|c|c|c|c|c|c|c|}
\hline
Objective &  Approx. &  Centers & Ignored & Rounds & Total Comm. & Local Time\\
\hline
\rule{0pt}{3ex}
\multirow{2}{*}{median} & $O(1)$ & \multirow{2}{*}{$k$} & $t$ & $ \multirow{2}{*}{2}$ & $\tilde{O}((sk + t)B)$ & 
 $\tilde{O}(n^2_i)$, $\tilde{O}(k^2t^2(sk+t)^3)$ \\
 & $O(1 + 1/\eps)$ & & $(1+\eps)t$ & & $\tilde O((sk + t)B)$  & $\tilde{O}(n^2_i)$, $\tilde{O}((sk+t)^2)$  \\
\hline
\rule{0pt}{3ex}
means & $O(1 + 1/\eps)$ & $k$ & $(1+\eps)t$ & 2 & $\tilde O((sk + t)B)$  & $\tilde{O}(n^2_i)$, $\tilde{O}((sk+t)^2)$  \\
\hline 
\rule{0pt}{3ex}
center & $O(1)$ & $k$ & $t$ & 2 & $\tilde O((sk + t)B)$  & $\tilde{O}((k+t)n_i)$, $\tilde{O}((sk+t)^2)$  \\
\hline 
\begin{tabular}[x]{@{}c@{}}\rule{0pt}{2.5ex} uncertain\\
median/\\means/\\center-pp \end{tabular} & 
\multicolumn{5}{c|}{{as in the regular case above}} & {$+ O(n_i T)$, unchanged}\\
\hline
\rule{0pt}{3ex}
center-g & $O(1 + 1/\eps)$ & $k$ & $(1+\eps)t$ & 2 & $\tilde O(skB + tI + s\log \Delta)$  & $\tilde{O}(n^2_i \log \Delta)$, $\tilde{O}((sk+t)^2)$  \\
\hline
\end{tabular}
}
\caption{Results based on a $2$ round algorithms. $T$ denotes the runtime to compute $1$-median/mean of a node distribution\protect\footnotemark, $B$ the information encoding a point and $I$ the information encoding a node in the uncertain data case.  $\Delta$ is the ratio between the maximum pairwise distance and the minimum pairwise distance in the dataset.}
\label{tab:main-results}
\end{table*}

\medskip\noindent
{\bf Our Results.}
We present our main results in Table~\ref{tab:main-results} and only present the results based on $2$ rounds. The full set of our results can be found in Appendix~\ref{sec:full-results}.  We use $T$ to denote the runtime to compute $1$-median/means of a node distribution, $B$ to denote the information needed to encode a point, and $I$ to denote the information needed to encode a node in the uncertain data case.  In the column of {\em Local Time}, the first is the local computation time of all sites, and the second is the local computation time at the coordinator. Observe that the total running time is $\tilde{O}(\sum_i n^2_i)$, which becomes $\tilde{O}(n^2/s)$ if the partitions are balanced.  This shows that we can reduce the running time by distributing the clustering across many sites.

In particular we have obtained the following. All algorithms finish in $2$ rounds in the coordinator model. We say a solution is an $(\alpha, \beta)$-approximation if it is a solution of cost $\alpha\cC$ while excluding $\beta t$ points, where $\cC$ is the optimum cost for excluding $t$ points.
\begin{enumerate}[(1)]
\item  We give $(O(1), 1)$-approximation algorithms with $\tilde{O}((sk + t)B)$ communication for the $(k,t)$-median (Section~\ref{sec:dist5}) and the $(k,t)$-center (Theorem~\ref{thm:center}) problems. The lower bounds in \cite{CSWZ16} for the $t=0$ case indicate that these communication costs are tight, if we want to output all the outliers (which our algorithms do), up to logarithmic factors. We also give an $(O(1+1/\eps), 1+\eps)$-approximation algorithm with $\tilde{O}((sk + t)B)$ communication for the $(k,t)$-median (with better running time) and the $(k,t)$-means (Theorem~\ref{thm:2-round}) problems. 

\item  We show that for $(k,t)$-median/means and $(k,t)$-center-pp the above results are achievable even on uncertain data (Theorem~\ref{cool-theorem}).  For uncertain $(k,t)$-center-g we obtain an $(O(1+1/\eps), 1+\eps)$-approximation algorithm with $\tilde{O}(skB + tI + s\log \Delta)$ communication, where $I$ is the information to encode the distribution of an uncertain point, and $\Delta$ is the ratio between the maximum pairwise distance and the minimum pairwise distance in the dataset (Theorem~\ref{thm:center-g}).
\end{enumerate}
Our results for the $(k,t)$-center problem improves that in \cite{Malkomes}. And as far as we are concerned, our results on distributed $(k, t)$-median/means and of uncertain input are the first of their kinds.
Our results for distributed $(k,t)$-median or means also lead to {\em subquadratic} time constant factor approximation centralized algorithms, which have been left open for many years.

\medskip\noindent{\bf Technical Overview.} The high level idea of our algorithms is fairly natural: Each site first performs a \emph{preclustering}, i.e., it computes some local solution on its own dataset. Then each site sends the centers of the local solution, number of attached points to each center and the ignored points to the coordinator, who will then solve the induced {\em weighted} clustering problem. 

A major difficulty is to determine how many points to ignore in the local solution at each site. Certainly for the sake of safety each site can ignore $t$ points and send all ignored $t$ points to the coordinator for a final decision. This would however incur $\Theta(st)$ bits of communication. To reduce the communication of this part to $O(t)$, we hope to find $\{t_1, \ldots, t_s\}$ such that $\sum_i t_i = t$ and each site $i$ sends a solution with just $t_i$ ignored points. At the cost of an extra round of communication, we solve the minimization problem $\sum_i f_i(t_i)$ subject to $\sum_i t_i = t$ for convex functions $\{f_i\}$. It is tempting to take $f_i(t_i)$ to be the cost of local solution with $t_i$ ignored points on site $i$, however, such $f_i$ is not necessarily convex. The remedy is to take a lower convex hull of $f_i$ instead, which can be shown to have only a mild effect on the solution cost. The convex hull of $t$ points can be found in $O(t\log t)$ time, and we can further reduce the runtime without compromising approximation ratio by computing local solutions on each site for only $\log t$ geometrically increasing values of $t_i$.

For uncertain data, it is natural to reduce the clustering problems to the deterministic case.  To this end, we `collapse' each node $j$ to its optimal center in $\PP$. For instance, for the $(k,t)$-median problem, each node $j$ is `collapsed' to $y_j = \argmin_{y\in \PP} \E_{\sigma}[d(\sigma(j),y)]$, called the $1$-median of node $j$. It may be tempting to consider the clustering problem on the set of $1$-medians, but the `collapse' cost is lost, hence we construct a \textit{compressed graph} $G$ that allows us to keep track of the collapse costs. The graph looks like a clique with tentacles, see Figure~\ref{fig:compressed_graph}. The $1$-medians form a clique in $G$ with edge weight being the distance in the underlying metric space; for each $1$-median $y_j$, we add a tentacle (an edge) from $y_j$ to a new vertex $p_j$ with edge weight being the collapse cost $\E_{\sigma}[d(\sigma(j),y_j)]$.  We manage to show that the original clustering problem is equivalent, up to a constant factor in cost, to the clustering problem on the compressed graph where the facility vertices are $1$-medians $\{y_j\}$ and the demand vertices are $\{p_j\}$. Our previous framework for deterministic data is then applied to the compressed graph.
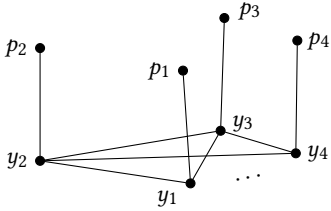
\begin{figure}
\centering
\begin{tikzpicture}
\draw (0,0) node[circle,fill,draw,scale=0.4](y1) {};
\draw (2,-0.3) node[circle,fill,draw,scale=0.4](y2) {};
\draw (2.4, 0.4) node[circle,fill,draw,scale=0.4](y3) {};
\draw (3.4, 0.1) node[circle,fill,draw,scale=0.4](y4) {};
\draw (y1)--(y2);
\draw (y1)--(y3);
\draw (y2)--(y3);
\draw (y3)--(y4);
\draw (y1)--(y4);
\node[rotate=10] at (2.8,-0.25) {$\cdots$};
\draw (0,1.5) node[circle,fill,draw,scale=0.4](p1) {};
\draw (1.9, 1.2) node[circle,fill,draw,scale=0.4](p2) {};
\draw (2.45, 1.9) node[circle,fill,draw,scale=0.4](p3) {};
\draw (3.42, 1.6) node[circle,fill,draw,scale=0.4](p4) {};
\foreach \i in {1,2,3,4}
	\draw (y\i)--(p\i);
\node at (-0.3, 0) {$y_2$};
\node at (-0.3, 1.5) {$p_2$};
\node at (1.7,-0.5) {$y_1$};
\node at (1.6,1.2) {$p_1$};
\node at (2.7,0.5) {$y_3$};
\node at (2.8,2) {$p_3$};
\node at (3.7,0.1) {$y_4$};
\node at (3.72,1.6) {$p_4$};
\end{tikzpicture}
\caption{An example of a compressed graph produced}
\label{fig:compressed_graph}
\end{figure}

Lastly, for the global center problem with uncertain data, we build upon the approach developed in \cite{GM09}, which uses a truncated distance function $\LL_\tau(x,y)=\max\{d(x,y)-\tau,0\}$ instead of the usual metric distance $d(\cdot,\cdot)$. Our algorithm performs a parametric search on $\tau$, and applies our previous framework to solve the global problem using local solutions. Now in the analysis of the approximation ratio we need to relate the optimum solution to the solution with truncated distance function, which is a fairly nontrivial task. 

\footnotetext{For a general discrete distribution on $m$ points in Euclidean space with $\PP$ be the whole space, $T = O(m)$ \cite{Dyer86}; for special distributions such as normal distribution, $T=O(1)$.}

\medskip\noindent\textbf{Related Work.}
In the centralized model, Charikar et al.\ gives a $3$-approximation algorithm for $(k,t)$-center, and an $(O(1), O(1))$ bicriteria algorithm for $(k,t)$-median~\cite{CKMN01}.  This bicriteria was later removed by Chen~\cite{Chen08}, who designed an $O(1)$-approximation algorithm using $\tilde{O}(k^2(k+t)^2n^3)$ time. Feldman and Schulman studied the $(k, t)$-median problem with different loss functions using the {\em coreset} technique~\cite{FS12}.

On uncertain data, Cormode and McGregor considered $k$-center/ median/means where each $\mathcal{D}_i$ is a discrete distribution~\cite{CM08}.  Guha and Munagala provided a technique to reduce the uncertain $k$-center to the deterministic $k$-median problem~\cite{GM09}. Wang and Zhang studied the special case of $k$-center on the line~\cite{WZ15}.  We refer the readers to the survey by Aggarwal~\cite{Aggarwal13}.

Clustering on distributed data has been studied only recently. In the coordinator model,  in the $d$-dimensional Euclidean space, Balcan et al.\ obtained $O(1)$-approximation algorithms with $\tilde{O}((kd + sk)B)$ bits of communication  for both $k$-median and $k$-means~\cite{BEL13}. Their results on $k$-means were further improved by Liang et al.~\cite{LBKW14} and Cohen et al.~\cite{CEMMP15}. Chen et al.\ provided a set lower bounds for these problems~\cite{CSWZ16}.  In the MapReduce model, Ene et al.\ designed several $O(1)$-approximation $O(1)$-round algorithms for the $k$-center and the $k$-median problems~\cite{EIM11}. Im and Moseley further studied the partial clustering variant~\cite{IM15}, however their algorithms require communication polynomial in $n$.  Cormode et al.\ studied the $k$-center maintenance problem in the distributed data stream model where the coordinator can keep track of the cluster centers at any time step~\cite{CMZ07}.


\section{Preliminaries}


\noindent{\bf Notation.}
We use the following notations in this paper.
\begin{itemize}
\item $\sol(Z,k,t,d)$: A solution (computed by an algorithm) to the median/means/center problem on point set $Z$ with at most $k$ centers and at most $t$ outliers, under the distance function $d$;
\item $\opt(Z,k,t,d)$: An optimal solution to the median/means
 or center problem on point set $Z$ with at most $k$ centers and at most $t$ outliers, under $d$;
\item $\Csol(Z,k,t,d)$: The cost of the solution $\sol(Z,k,t,d)$;
\item $\Copt(Z,k,t,d)$: The cost of the solution $\opt(Z,k,t,d)$;
\item $\pi(j)$: The center to which point $j$ is attached.
\end{itemize}
When $Z$ lies in a metric space and $d$ agrees with the distance function on the metric space, we omit the parameter $d$ in the notations above.

\smallskip\noindent{\bf Combining Preclustering Solutions.}
We review a theorem from \cite{GMMMO03}, which
concerns `combining' local solutions into a global solution. The problems considered in the theorem have {\em no} outliers ($t = 0$) and lie in a metric space, so we abbreviate the notation $\sol(Z,k,t,d)$ to $\sol(Z,k)$, etc.

\begin{theorem}[\cite{GMMMO03}]
Suppose that $\A = \A_1 \uplus \cdots \uplus\A_s$ (disjoint union) and $\{\sol(\A_i,k)\}$ are the preclustering solutions at sites. Let $\M = \{\pi(j): j\in \A\}$ and $L = \sum_{j\in \A} d(j, \pi(j))$, where $\pi(j)$ denotes the preclustering assignment. Consider the weighted $k$-median problem on $\M$ where the weight of $m \in \M$ is defined to be the number of points that are assigned to $m$ in the preclustering, that is, $\abs{\{j\ |\ j \in \A, \pi(j) = m\}}$. Then
\begin{enumerate}[(i)]\parskip=0pt
\item There exists a weighted $k$-median solution $\sol(\M,k)$ such that $\Csol(\M,k)\leq 2(L+\Copt(\A,k))$.
\item Given any weighted $k$-median solution $\sol(\M,k)$, there exists a $k$-median solution $\sol(\A,k)$ such that $\Csol(\A,k)\leq \sol(\M,k) + L$.
\end{enumerate}
Consequently, there exists a $k$-median solution $\sol(\A,k)$ such that $\Csol(\A,k)\leq 2\gamma(L+\Copt(\A,k)) + L$ and centers are restricted to $\M$, where $\gamma$ is the best approximation ratio for the $k$-median problem.
\label{thm:p2}
\end{theorem}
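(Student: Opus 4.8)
The plan is to establish the two numbered claims by direct triangle-inequality arguments, and then chain them through a $\gamma$-approximation for weighted $k$-median. Write $w(m)$ for the weight $\abs{\{j\in\A:\pi(j)=m\}}$ of $m\in\M$, and recall that in the weighted $k$-median problem on $\M$ the centers must themselves be chosen from $\M$ (this is exactly what forces the factor $2$ in part (i), rather than a clean $L+\Copt(\A,k)$).

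For part (i), I would start from an optimal $k$-median solution on $\A$: let $K^*$ be its (at most $k$) centers and $\phi\colon\A\to K^*$ its assignment, so $\Copt(\A,k)=\sum_{j\in\A}d(j,\phi(j))$. As an intermediate solution for the weighted instance on $\M$, use $K^*$ itself as the center set; grouping the points $j$ by their precluster center $\pi(j)$ and then applying the triangle inequality gives
\[
\sum_{m\in\M} w(m)\,d(m,K^*) \;=\; \sum_{j\in\A} d(\pi(j),K^*) \;\le\; \sum_{j\in\A} d(\pi(j),\phi(j)) \;\le\; \sum_{j\in\A}\bigl(d(\pi(j),j)+d(j,\phi(j))\bigr) \;=\; L+\Copt(\A,k).
\]
Since $K^*$ need not lie in $\M$, the final step is to ``snap'' each center $c\in K^*$ to the point of its own cluster (the set of $m\in\M$ assigned to $c$) that is closest to $c$: by the triangle inequality $d(m,m_c)\le d(m,c)+d(c,m_c)\le 2d(m,c)$ for every $m$ in that cluster, so the cost at most doubles, the centers now lie in $\M$, and there are still at most $k$ of them. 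This yields a weighted $k$-median solution on $\M$ of cost at most $2(L+\Copt(\A,k))$.

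For part (ii), given any weighted $k$-median solution on $\M$ with centers $K'\subseteq\M$ and assignment $\rho\colon\M\to K'$, I would build a solution on $\A$ by sending each $j\in\A$ to $\rho(\pi(j))$. By the triangle inequality and then regrouping by precluster center,
\[
\sum_{j\in\A} d\bigl(j,\rho(\pi(j))\bigr) \;\le\; \sum_{j\in\A}\Bigl(d(j,\pi(j))+d\bigl(\pi(j),\rho(\pi(j))\bigr)\Bigr) \;=\; L+\sum_{m\in\M} w(m)\,d(m,\rho(m)) \;=\; L+\Csol(\M,k),
\]
and this solution uses the at most $k$ centers $K'\subseteq\M$.

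For the ``consequently'' statement, I would run a $\gamma$-approximation algorithm for weighted $k$-median on $\M$; by part (i) the optimal weighted cost (with centers in $\M$) is at most $2(L+\Copt(\A,k))$, so the computed $\sol(\M,k)$ has cost at most $2\gamma(L+\Copt(\A,k))$ with centers in $\M$, and part (ii) applied to it produces a $k$-median solution on $\A$ of cost at most $2\gamma(L+\Copt(\A,k))+L$ whose centers are those of $\sol(\M,k)$, hence restricted to $\M$. The one place that needs care is the snapping step in part (i): one must verify that moving a center to the nearest client inside its (weighted) cluster costs only a factor $2$ and does not increase the number of centers — everything else is bookkeeping with the triangle inequality.
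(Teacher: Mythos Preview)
The paper does not actually prove this statement: Theorem~\ref{thm:p2} is quoted from \cite{GMMMO03} as a preliminary result, with no proof included. So there is nothing in the paper to compare your argument against directly.

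That said, your proof is correct and is exactly the standard argument. The only point worth noting is that the snapping step you use in part~(i) --- replacing each optimal center $c\in K^*$ by the nearest preclustering center in its cluster, at a cost factor of~$2$ --- is precisely the construction the paper invokes later in the proof of Lemma~\ref{lem1} (there phrased as choosing $\argmin_{u\in\A_i} d(u,k)$ for each optimal center $k$), so your approach is fully consistent with how the paper handles the same idea elsewhere.
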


\begin{corollary}
\label{cor-compose}
The result in Theorem~\ref{thm:p2} extends to 
\begin{enumerate}[(i)]
\item the $k$-center problem;
\item the $k$-means problem with weaker constants, using a relaxed triangle inequality;
\item the $(k,t)$-median/means/center approximation on the 
weighted point set $\M$ (with $\gamma$ being the corresponding bicriteria approximation ratio),  {\em provided} the preclustering does not ignore any points. Otherwise the total number of ignored points is the sum of the ignored points in the clustering and preclustering phases.
\end{enumerate}
\end{corollary}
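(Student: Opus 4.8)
The plan is to re-run, for each of the three cases, the two halves into which the argument behind Theorem~\ref{thm:p2} splits. The \emph{forward} half shows that an optimal solution on $\A$ induces a cheap solution on the weighted instance $\M$, and it is cleanest in two sub-steps: \textbf{(a)} serving the weighted points with the optimal centers $K^*$ of $\A$ (for the moment \emph{without} requiring centers to lie in $\M$) costs at most $L+\Copt(\A,k)$, since the contribution of $m\in\M$ is $\sum_{j:\pi(j)=m} d(m,K^*)\le\sum_{j:\pi(j)=m}\bigl(d(j,\pi(j))+d(j,K^*)\bigr)$; and \textbf{(b)} snapping each chosen center to its nearest point of $\M$ costs only a factor~$2$, because the points of the weighted instance are themselves the elements of $\M$. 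The \emph{backward} half shows that any solution on $\M$ with centers $K\subseteq\M$ lifts to one on $\A$ at additive cost $L$, via the per-point bound $d(j,K)\le d(j,\pi(j))+d(\pi(j),K)$ summed over $j\in\A$. Composing the two halves with a $\gamma$-approximation for the weighted problem yields the stated consequence.

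For part~(i), $k$-center, I would reinterpret $L$ as the $\ell_\infty$ preclustering cost $\max_{j\in\A} d(j,\pi(j))$ and replace every $\sum$ by a $\max$: sub-step~(a) then reads $\max_{m} d(m,K^*)\le L+\Copt(\A,k)$, sub-step~(b) still loses a factor~$2$, and the backward half is the single inequality $d(j,K)\le d(j,\pi(j))+d(\pi(j),K)$ maximized over~$j$; here $\gamma$ is the $k$-center approximation ratio. For part~(ii), $k$-means, I would take $L=\sum_{j\in\A} d^2(j,\pi(j))$ and use the relaxed triangle inequality $d^2(x,z)\le 2d^2(x,y)+2d^2(y,z)$ throughout; each application of it costs a factor of two, so sub-step~(a) becomes $2(L+\Copt(\A,k))$, the snapping factor in sub-step~(b) becomes~$4$, and the additive term in the backward half becomes~$2L$ --- these are the ``weaker constants'' of the statement --- while the structure is unchanged and $\gamma$ is the $k$-means approximation ratio.

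For part~(iii), the weighted $(k,t)$ problem on $\M$ when the preclustering ignores no points, the only genuinely new ingredient is the bookkeeping between a unit of weight at $m\in\M$ and an actual original point mapped to~$m$. In the forward half, I would take an optimal $(k,t)$ solution $(K^*,\O^*)$ on $\A$, delete on the weighted instance one unit of weight at $\pi(j)$ for each $j\in\O^*$ --- a total of $\abs{\O^*}\le t$ units --- bound the cost of serving the surviving weight with $K^*$ by $L+\Copt(\A,k,t)$ exactly as in sub-step~(a), and then snap centers to $\M$ as in sub-step~(b); this gives a weighted $(k,t)$ solution with centers in $\M$, ignoring at most $t$ units, of cost $O(L+\Copt(\A,k,t))$. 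In the backward half, a weighted $(k,t)$ solution with centers $K\subseteq\M$ that ignores a weight profile $\rho$ with $\rho(m)\le w(m)$ and $\sum_m\rho(m)\le t$ is lifted by declaring, for each $m$, any $\rho(m)$ of the original points mapped to $m$ to be outliers; this ignores at most $t$ original points and leaves kept cost at most $L$ plus the weighted cost. Composing with the bicriteria ratio $\gamma$ for the weighted $(k,t)$ problem yields the bound; and if the preclustering itself ignores $t'$ points, those points are present in neither $\M$ nor $L$, so they are simply appended to the outlier set of the lifted solution --- the kept-point cost bound is unaffected and the number of ignored points becomes the sum of the preclustering outliers and the outliers of the weighted solve, which is the ``otherwise'' clause.

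The step I expect to be the main obstacle is precisely this bookkeeping in part~(iii): one must verify that the forward map never needs to delete more than $t$ units of weight, that the backward map never needs to declare more than the weighted solution's outlier budget of original points, and that the preclustering outliers are counted as a \emph{disjoint} addition rather than being double-counted against the $t$ we are already allowed. Parts~(i) and~(ii) are essentially routine once ``$L$'' and ``triangle inequality'' are read in the appropriate objective, the only quantitative change being the degradation of constants in the $k$-means case from the two uses of the relaxed inequality.
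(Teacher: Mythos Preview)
Your proposal is correct and, in fact, more detailed than what the paper provides: the paper states this corollary without proof, treating it as an immediate extension of Theorem~\ref{thm:p2} (which is itself imported from~\cite{GMMMO03}). Your forward/backward decomposition, the $\ell_\infty$ reinterpretation of $L$ for $k$-center, the use of $(a+b)^2\le 2a^2+2b^2$ for $k$-means, and the unit-of-weight bookkeeping for the $(k,t)$ case are exactly the arguments one would expect and are consistent with how the paper handles analogous steps elsewhere (e.g., the same relaxed triangle inequality appears in the proof of Lemma~\ref{lem1} and in Appendix~\ref{proof:altmedian}).
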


\section{$\protect\kt$-Median and $\protect\kt$-Means}
\label{sec:dist5}

Our algorithm for distributed $(k,t)$-median clustering is provided in Algorithm~\ref{alg:involved2a}. For integer pairs $(i,q)$, we consider the lexicographical order as partial order, that is, 
\begin{equation}\label{eqn:prec}
(i_1,q_1)\prec(i_2,q_2) \quad\text{if}\quad 
\left\{
\begin{array}{l}
i_1<i_2;\text{ or}\\
i_1=i_2\text{ and }q_1<q_2.
\end{array}
\right.
\end{equation}

\begin{algorithm}[ht]
\begin{algorithmic}[1]
\Input $\A = \A_1 \uplus \cdots \uplus \A_s$, $k\geq 1$, $t\geq 0$ and $\rho>1$
\Output $\sol(\A,k,(1+\epsilon)t)$ such that $\Csol(\A,k,(1+\epsilon)t) = O(1+1/\epsilon)\cdot\Copt(\A,k,t)$
\For{each site $i$}
\State $\I\gets \{ \lfloor \rho^r \rfloor: 1 \leq r \leq \lfloor \log_\rho t\rfloor, \ r\in \mathbb{\Z} \} \cup \{0, t\}$
\State Compute $\sol(\A_i,2k,q)$ for each $q\in\I$ \label{initial-step}
\State Compute the (lower) convex hull of the point set $\{(q,\Csol(\A_i,2k,q))\}_{q\in \I}$, which induces a function $f_i(\cdot)$ defined on $\{0,\dots,t\}$
\State Send the function $f_i(\cdot)$ to the coordinator
\EndFor
\State Coordinator computes $\ell(i,q)=f_i(q-1)-f_i(q)$ for each $1\leq i\leq s$ and each $1\leq q\leq t$
\State Coordinator \emph{stably} sorts all $\{\ell(i,q)\}$ in decreasing order\footnotemark
\State Coordinator finds $\ell(i_0,q_0)$ of rank\footnotemark $\rho t$ and sends $\ell(i_0, q_0)$, $i_0$ and $q_0$ to all sites \label{labelmerge}
\For{each site $i$}
\State $t_i \gets \max\{q: \ell(i,q) \geq \ell(i_0,q_0)\}$ \label{alg:step:t_i}
\Comment{define $\max\emptyset = 0$}
\If{$i = i_0$}\label{alg:step:if_block_begins}
\State $t_i\!\gets\! \min\{q \in \I: q \geq q_0\text{ and }\Csol(\A_i,2k,q_0) = f_{i_0}(q_0)\}$\label{alg:step-recomp}
\EndIf
\State Send the coordinator the $2k$ centers built in $\sol(\A_i,2k,t_i)$, the number of points
attached to each center, and the $t_i$ unassigned points \label{alg:mstep2}
\EndFor
\State Coordinator considers the union of the centers obtained from each site and the unassigned points, and  
applies Theorem~\ref{thm:alt-median} and outputs $\sol(\A,k,(1+\epsilon)t)$. 
\label{alg:finalstep}
\end{algorithmic}
\caption{Distributed $(k,(1+\epsilon)t)$-median clustering
\label{alg:involved2a}}
\end{algorithm}

\begin{remark} In Line~\ref{alg:finalstep} of Algorithm~\ref{alg:involved2a}, (i) no input point is ignored in the preclustering; (ii) if the preclustering aggregated $q$ points but the coordinator's algorithm chooses less than $q$ copies (to exclude exactly $t$) then the proofs are not affected in any way.
\end{remark}

\noindent
We begin with a theorem about approximating $(k,t)$-median or means with a different trade-off from that in \cite{CKMN01}. 
\begin{theorem}[Proof Omitted]
\label{thm:alt-median}\label{THM:ALT-MEDIAN}
Let $\epsilon > 0$.
We can compute $\sol(Z,k,(1+\epsilon)t)$ and $\sol(Z,(1+\epsilon)k,t)$ for the $(k,t)$-median problem in $\tilde{O}(|Z|^2)$ time such that
\begin{gather*}
\Csol(Z,k,(1+\epsilon)t)\leq \max\{6,6/\epsilon\}\cdot\Copt(Z,k,t), \text{ and }\\
\Csol(Z,(1+\epsilon)k,t)\leq \max\{6,6/\epsilon\}\cdot\Copt(Z,k,t).
\end{gather*}
The result extends to the $(k,t)$-means problem with a slightly larger constant.
\end{theorem}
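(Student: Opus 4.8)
\noindent
The plan is to reduce $(k,t)$-median to uncapacitated facility location with penalties (``prize-collecting''/``robust'' facility location), for which a purely combinatorial Jain--Vazirani-style primal--dual algorithm runs in $\tilde{O}(\abs{Z}^2)$ time. In that relaxation each point $p$ is either connected to an open facility, paying $d(p,F)$, or declared an outlier, paying a fixed penalty $\lambda$, and each open facility costs a uniform amount $z$. Feeding the optimal $(k,t)$-median solution into the relaxation bounds its optimum by $\Copt(Z,k,t)+zk+\lambda t$, so an $\alpha$-approximate primal--dual run returns, for any $z,\lambda$, sets $F,O$ with
\[
\textstyle\sum_{p\notin O}d(p,F) + z\abs{F} + \lambda\abs{O} \;\le\; \alpha\bigl(\Copt(Z,k,t)+zk+\lambda t\bigr).
\]
We invoke this over a geometrically spaced set of guesses of $\cC:=\Copt(Z,k,t)$ (a standard preprocessing keeps the number of guesses polylogarithmic) and keep the cheapest feasible output, paying only a polylog factor in time.

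For $\sol(Z,k,(1+\epsilon)t)$ set $\lambda=\Theta(\cC/(\epsilon t))$; then the displayed bound already forces connection cost $O(\cC/\epsilon)$ and $\abs{O}=O(t)$. Drive $z$ down by the usual Lagrangian bisection for the cardinality constraint, combining the two solutions straddling $\abs{F}=k$, to reach at most $k$ open centers at the price of a constant factor. Finally re-optimize by freezing these centers and choosing the outlier set among the $O(t)$ penalty-outliers so that only $(1+\epsilon)t$ remain: a refined accounting of the primal--dual solution (equivalently, rounding the natural fractional-outlier LP relaxation of $(k,t)$-median, which can be solved approximately in near-quadratic time) shows the re-included points contribute only $O(\cC/\epsilon)$, so the total stays $O(\cC/\epsilon)$; threading the constants through the three steps yields $\max\{6,6/\epsilon\}$. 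The variant $\sol(Z,(1+\epsilon)k,t)$ is symmetric: keep $\lambda$ large enough to leave at most $t$ outliers and relax the cardinality constraint instead, stopping the bisection at the first solution with at most $(1+\epsilon)k$ open facilities --- this mild blow-up is precisely what lets us skip the lossy two-solution combination --- again with connection cost $O(\cC/\epsilon)$. For $(k,t)$-means the whole pipeline carries over once every use of the triangle inequality is replaced by its squared-distance relaxation (Corollary~\ref{cor-compose}(ii)), enlarging only the constant.

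The main obstacle is the bicriteria accounting rather than the reduction: the crude Lagrangian inequality only certifies $O(t)$ outliers (respectively $O(k)$ centers), so the crux is showing that one can trim this down to $(1+\epsilon)t$ (respectively $(1+\epsilon)k$) while inflating the cost by only $O(1)$ --- concretely, that freezing centers and re-selecting the outliers (or rounding the fractional outliers) is near-optimal relative to $\cC$. A secondary point to verify is that guessing $\cC$, the uniform facility cost, and the cardinality-relaxation bisection do not interact badly, and that the final constant can indeed be pushed to $\max\{6,6/\epsilon\}$ by a careful choice of the primal--dual routine and the order of relaxations.
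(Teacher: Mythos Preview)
Your high-level framework (Jain--Vazirani primal--dual with Lagrangian relaxation on the uniform facility cost $z$) matches the paper's, but the treatment of outliers diverges. You relax the outlier constraint via a second Lagrangian parameter $\lambda$ and then attempt to trim from $O(t)$ down to $(1+\epsilon)t$ outliers after the fact. The paper instead uses the observation (already in \cite{CKMN01}) that the JV primal--dual can simply be \emph{stopped} once exactly $t$ points remain unprocessed; each of the two bipoint solutions, with $k_1<k<k_2$ centers, then has \emph{exactly} $t$ outliers, and no penalty parameter, no guess of the optimum, and no post-hoc trimming are needed. The bicriteria accounting is a short case split on $a=(k_2-k)/(k_2-k_1)$. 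For the $(1+\epsilon)t$-outlier bound: the JV randomized rounding opens each large-solution center with probability $\ge 1-a$, so the expected number of extra outliers beyond $t$ is $\le at$; if $a\ge\epsilon/2$ output the small solution alone (a $6/\epsilon$-approximation with $k_1<k$ centers and exactly $t$ outliers), otherwise repeat the rounding and apply Markov's inequality. For the $(1+\epsilon)k$-center bound: if $k_1+k_2\le(1+\epsilon)k$ open \emph{all} $k_1+k_2$ centers and take as outliers the intersection of the two outlier sets (hence $\le t$); otherwise the small solution alone suffices. The constant $\max\{6,6/\epsilon\}$ falls directly out of the JV ratio $3$ and this dichotomy, with no tuning of $\lambda$ against a guessed $\cC$.

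The trimming step you flag as ``the crux'' is a genuine gap, and it is not obviously closable in your framework: a point that paid penalty $\lambda$ in the prize-collecting run can have arbitrarily large connection cost to the frozen centers---the primal--dual only certifies that its dual value reached $\lambda$, not that any open facility is nearby. So re-including all but $(1+\epsilon)t$ of the $O(t)$ penalty outliers need not cost $O(1/\epsilon)$ times optimum; the ``refined accounting'' you allude to would have to exhibit, for each re-included point, a witness facility within controlled distance, and nothing in your outline supplies one. The paper sidesteps this because both bipoint solutions already cluster $n-t$ points, and the only ``extra'' outliers are points clustered in the large solution but not the small one; every such point is \emph{directly connected} (in the JV sense) to a large-solution center, so its distance is bounded by $6$ times its dual value regardless of which centers the rounding ultimately opens. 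That structural fact---unavailable once you Lagrangify the outlier constraint---is what makes the $\max\{6,6/\epsilon\}$ bound go through.
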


Throughout the rest of the section, we denote by $t_i^\ast$ the number of ignored points from $\A_i$ in the global optimum solution $\opt(\A,k,t)$. 
We need the following lemmas.
 
\footnotetext{\textit{Stably} means that when $\ell(i_1,q_1) = \ell(i_2,q_2)$, the sorting algorithm puts $\ell(i_1,q_1)$ before $\ell(i_2,q_2)$ if $(i_1,q_1)\prec(i_2,q_2)$ as defined in \eqref{eqn:prec}.}
\footnotetext{Element of \textit{rank} $r$ means the $r$-th element in a sorted list}

\begin{lemma}
\label{lem1}
It holds that $\sum_i \Copt(\A_i,k,t^*_i)\leq 2\Copt(\A,k,t)$. For $(k,t)$-means the constant changes from $2$ to $4$.
\end{lemma}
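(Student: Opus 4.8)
\textbf{Proof proposal for Lemma~\ref{lem1}.}

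The plan is to directly construct, for each site $i$, a feasible solution to $\opt(\A_i,k,t_i^\ast)$ out of the global optimum $\opt(\A,k,t)$ and bound its cost by (roughly) the cost that the global optimum pays on the points of $\A_i$. First I would fix the global optimum solution $\opt(\A,k,t)$, which consists of a center set $K^\ast$ with $|K^\ast|\le k$ and an outlier set $\O^\ast\subseteq\A$ with $|\O^\ast|\le t$; by definition $t_i^\ast = |\O^\ast\cap\A_i|$, so that $\sum_i t_i^\ast = |\O^\ast|\le t$. For each site $i$, consider the candidate solution that uses the same centers $K^\ast$ and declares exactly the points $\O^\ast\cap\A_i$ as outliers. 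This is a feasible solution to the $(k,t_i^\ast)$ problem on $\A_i$, so $\Copt(\A_i,k,t_i^\ast)$ is at most its cost, which is $\sum_{p\in\A_i\setminus\O^\ast} d(p,K^\ast)$.

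Summing over $i$, the sets $\A_i\setminus\O^\ast$ partition $\A\setminus\O^\ast$, so $\sum_i\Copt(\A_i,k,t_i^\ast)\le \sum_{p\in\A\setminus\O^\ast} d(p,K^\ast) = \Copt(\A,k,t)$. This in fact gives the bound with constant $1$, which is stronger than the claimed $2$; the slack of $2$ presumably anticipates the case where the centers must be restricted to the input points of the individual site $\A_i$ rather than drawn from all of $\A$ (as per the convention in Definition~\ref{def:clustering} and the remark after it). To handle that, for each $p\in\A_i\setminus\O^\ast$ served by center $c\in K^\ast$, I would replace $c$ by the nearest point of $\A_i$ assigned to $c$ (or, if no point of $\A_i$ is assigned to $c$, simply drop that center); the triangle inequality then at most doubles each service distance, since moving $c$ to the closest such point $p'$ costs $d(p,p')\le d(p,c)+d(c,p')\le 2d(p,c)$. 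This yields a solution on $\A_i$ with at most $k$ centers, all from $\A_i$, at most $t_i^\ast$ outliers, and cost at most $2\sum_{p\in\A_i\setminus\O^\ast}d(p,K^\ast)$; summing gives $\sum_i\Copt(\A_i,k,t_i^\ast)\le 2\Copt(\A,k,t)$. For $(k,t)$-means the same argument goes through with the relaxed triangle inequality $d^2(p,p')\le 2d^2(p,c)+2d^2(c,p')\le 4d^2(p,c)$, giving the constant $4$.

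The only subtlety — and the step I expect to require the most care — is the center-restriction argument: one must verify that reassigning each point to (the nearest representative of) its original center is consistent, that dropping unused centers never increases cost, and that the outlier count is exactly preserved site by site so that $\sum_i t_i^\ast\le t$ holds. Everything else is a routine partition-and-sum over the sites combined with (relaxed) triangle inequalities, so I do not anticipate further obstacles.
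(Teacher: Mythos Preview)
Your proposal is correct and follows essentially the same route as the paper: fix the global optimum, use its outliers restricted to each site, replace each global center by its nearest representative in $\A_i$, and invoke the (relaxed) triangle inequality to get the factor $2$ (resp.\ $4$). The paper's proof is terser but identical in substance; your explicit observation that the constant would be $1$ without the center-in-input restriction is a nice clarification the paper omits.
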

\begin{proof}
We shall use an argument used in \cite{GMMMO03}. Let $\pi_{\textrm{opt}}$ be the center projection function and $K$ be the set of optimum centers in the optimal solution $\opt(\A,k,t)$. For each $\A_i$, we construct a solution $\sol(\A_i,k,t_i^\ast)$ by excluding the points excluded in $\opt(\A,k,t)$ and choosing $\left\{\argmin_{u\in\A_i} d(u, k): k\in K\right\}$ to be the centers. Then
\[
\Csol(\A_i,k,t_i^\ast)\leq 2\sum_{x\in \A_i} d(x,\pi_{\textrm{opt}}(x)).
\]
Summing over $i$ yields $\sum \Csol(\A_i,k,t_i^\ast)\leq 2\Copt(\A,k,t)$. The result for $k$-means follows from applying triangle inequality with $(a+b)^2\leq 2(a^2+b^2)$.
\end{proof}

\begin{lemma}\label{lem:waterfilling}
The $t_1,\dots,t_s$ computed in Step~\ref{alg:step:t_i} of Algorithm~\ref{alg:involved2a} minimizes $\sum_i f_i(t_i)$ subject to $\sum_i t_i \leq \rho t$ and $0\leq t_i\leq t$.
\end{lemma}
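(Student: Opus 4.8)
The plan is to view this as the classical problem of minimizing a separable convex function subject to a single knapsack-type budget, and to verify that the coordinator's selection in Algorithm~\ref{alg:involved2a} --- sorting the marginals $\ell(i,q)$ and then applying the threshold rule of Step~\ref{alg:step:t_i} --- simply implements the standard greedy (``water-filling'') solution of that problem.

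First I would record two structural facts about each $f_i$. Convexity is immediate, since $f_i$ is by construction the piecewise-linear function traced by the lower convex hull of a finite planar point set. For monotonicity, observe that we may assume without loss of generality that $\Csol(\A_i,2k,q)$ is non-increasing in $q\in\I$: a solution that ignores at most $q$ points also ignores at most $q'$ points for every $q'\ge q$, so each site may always keep the cheaper of two computed solutions. The lower hull of a non-increasing point set is itself non-increasing --- its rightmost edge has non-positive slope, since the left endpoint of that edge is a data point and, were the edge to rise, this data point would sit strictly below the data point at the right end, contradicting the non-increasing property; convexity then forces all earlier edges to have non-positive slope as well. Hence the marginal gains $\ell(i,q)=f_i(q-1)-f_i(q)$ satisfy $\ell(i,q)\ge 0$ and $\ell(i,1)\ge\ell(i,2)\ge\dots\ge\ell(i,t)$ for every fixed $i$.

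Next I would telescope the objective: for any $(t_1,\dots,t_s)$ with $0\le t_i\le t$ one has $\sum_i f_i(t_i)=\sum_i f_i(0)-\sum_i\sum_{q=1}^{t_i}\ell(i,q)$, and since $\sum_i f_i(0)$ is a fixed constant, minimizing $\sum_i f_i(t_i)$ over the feasible region is equivalent to maximizing the total marginal weight $G(t_1,\dots,t_s):=\sum_i\sum_{q=1}^{t_i}\ell(i,q)$ subject to $\sum_i t_i\le\rho t$ and $0\le t_i\le t$. As every $\ell(i,q)$ is nonnegative, an optimal allocation uses the budget as fully as the box permits; I will assume $\rho t\le st$, the remaining case forcing $t_i=t$ for all $i$, which the algorithm also returns. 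Breaking ties among equal $\ell(i,q)$ by the stable order $\prec$ of~\eqref{eqn:prec} so that the sorted list is a genuine total order, the algorithm retains the $\rho t$ largest marginals; because at a fixed site the marginals already appear in decreasing order of $q$, the retained indices at site $i$ form a prefix $\{1,\dots,t_i\}$ whose length is exactly $\max\{q:\ell(i,q)\ge\ell(i_0,q_0)\}$. Thus the staircase $S=\{(i,q):1\le q\le t_i\}$ produced by Step~\ref{alg:step:t_i} is precisely the set of the $\rho t$ globally largest marginals.

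Finally I would run the greedy exchange argument. For any other feasible allocation, with staircase $S'=\{(i,q):1\le q\le t_i'\}$, we have $|S'|=\sum_i t_i'\le\rho t=|S|$, so $|S\setminus S'|\ge|S'\setminus S|$; moreover every element of $S\setminus S'$ has rank at most $\rho t$ in the sorted list while every element of $S'\setminus S$ has rank exceeding $\rho t$, hence each marginal in $S\setminus S'$ is at least each marginal in $S'\setminus S$. Mapping $S'\setminus S$ injectively into $S\setminus S'$ and discarding the leftover elements of $S\setminus S'$ (whose marginals are nonnegative) yields $\sum_{(i,q)\in S'}\ell(i,q)\le\sum_{(i,q)\in S}\ell(i,q)$, i.e.\ $G$ attains its maximum at the algorithm's point, which is what we want. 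I expect the one genuinely delicate point to be this tie-handling: one must verify that the threshold rule ``$\ell(i,q)\ge\ell(i_0,q_0)$'' together with the stability of the sort really does return exactly the top-$\rho t$ prefix set (in particular that $\sum_i t_i$ equals $\rho t$ and does not overshoot), as this is precisely where the stable order $\prec$ is used; everything else is routine.
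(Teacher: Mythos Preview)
Your proposal is correct and follows essentially the same exchange-argument approach as the paper: the paper swaps one unit at a time between a site with surplus and a site with deficit, while you compare the two staircases directly, but the underlying idea is identical. Your flagging of the tie-handling at the threshold as the delicate point is apt --- the paper's proof simply asserts $\sum_i t_i=\rho t$ without further comment, so your caution there is, if anything, more careful than the original.
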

\begin{proof}
Suppose that $t_1',\dots,t_s'$ is a minimizer. Since $f_i(\cdot)$ is non-increasing for all $i$, it must hold that $\sum_i t_i' = \rho t$. By the definition of $t_i$, it also holds that $\sum_i t_i = \rho t$. If $(t_1',\dots,t_s')\neq(t_1,\dots,t_s)$, there must exist $i, j$ such that $t_i'>t_i$ and $t_j'<t_j$. By the definition of $t_i$ and the sorting of $\{\ell(i,q)\}$, we know that
\[
\ell(i,t_i+1) \leq \ell(i_0,q_0),\quad
\ell(j,t_j) \geq \ell(i_0,q_0).
\]
From convexity of $f_i$ and that $t_i'\geq t_i+1$ and $t_j' + 1\leq t_j$, it follows that
\[
f_i(t_i'-1) - f_i(t_i') \leq \ell(i_0,q_0)\leq f_{j}(t_j') - f_{j}(t_j'+1)
\]
which means that increasing $t'_{j}$ by $1$ and decreasing $t'_{i}$ by $1$ will not decrease the sum
\[
G(q_1',\dots,q_s') :=\sum_i (f_i(0) - f_i(t_i')).
\]
Therefore $\sum_i f_i(t_i') = \sum_i f_i(0) - G(t_1',\dots,t_s')$ will not increase. We can continue this procedure until $(t_1',\dots,t_s') = (t_1,\dots,t_s)$.
\end{proof}

\begin{lemma}\label{lem:exception} It holds for all $i\neq i_0$ that $t_i\in\I$ and $\Csol(\A_i,2k,t_i) = f_i(t_i)$, where $i_0$ is computed in Step~\ref{labelmerge} and $t_i$'s in Step~\ref{alg:step:t_i} of Algorithm~\ref{alg:involved2a}.
\end{lemma}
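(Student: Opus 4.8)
The plan is to read the two claimed properties directly off the geometry of the lower convex hull that defines $f_i$. Recall that $f_i$ is the lower convex hull of the finitely many points $\{(q,\Csol(\A_i,2k,q))\}_{q\in\I}$, regarded as a convex, piecewise-linear function on $\{0,1,\dots,t\}$. Two elementary facts about such a hull are all I would use. (a) Its \emph{vertices} --- the breakpoints, together with the two endpoints $q=0$ and $q=t$ --- all have $q$-coordinate in $\I$, and at every vertex $q$ the hull passes through the corresponding input point, so that $f_i(q)=\Csol(\A_i,2k,q)$; this is immediate from the definition of the lower convex hull of a point set, whose vertices lie among the given points and which is affine between consecutive vertices. (b) Because $f_i$ is convex, the discrete slopes $\ell(i,q)=f_i(q-1)-f_i(q)$ are non-increasing in $q$, so for any threshold $c$ the set $\{q:\ell(i,q)\ge c\}$ is a (possibly empty) prefix $\{1,\dots,m\}$ of $\{1,\dots,t\}$.

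With these in hand, fix $i\ne i_0$ and recall from Step~\ref{alg:step:t_i} that $t_i=\max\{q:\ell(i,q)\ge \ell(i_0,q_0)\}$ with $\max\emptyset=0$. I would argue by cases on where $t_i$ sits. If $t_i=0$ or $t_i=t$, then $t_i\in\I$ (both $0$ and $t$ are explicitly included in $\I$) and $f_i(t_i)=\Csol(\A_i,2k,t_i)$ by fact (a), since $0$ and $t$ are endpoints of the hull. Otherwise $0<t_i<t$, so $t_i+1\le t$; by the maximality of $t_i$ and the monotonicity in (b), $\ell(i,t_i)\ge\ell(i_0,q_0)>\ell(i,t_i+1)$, hence $\ell(i,t_i)>\ell(i,t_i+1)$. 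Thus the slope of $f_i$ strictly increases at $t_i$, so $t_i$ is a breakpoint of the hull, and fact (a) again gives $t_i\in\I$ and $f_i(t_i)=\Csol(\A_i,2k,t_i)$, completing the argument.

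I do not expect a genuine obstacle: the entire content is fact (a), which is just the defining property of a lower convex hull, and fact (b), which is convexity. The only thing that needs care is the boundary behaviour, which is precisely why the values $t_i=0$ and $t_i=t$ are singled out and why it is used that $0,t\in\I$ by construction. Note also that the hypothesis $i\ne i_0$ is essential: for $i=i_0$ the quantity $t_{i_0}$ is overwritten in Step~\ref{alg:step-recomp} and is defined by a different rule, so the case analysis above does not apply to it verbatim.
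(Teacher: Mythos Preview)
Your proof is correct and follows essentially the same route as the paper: both arguments show that for $i\neq i_0$ one has $\ell(i,t_i)>\ell(i,t_i+1)$, whence $t_i$ is a vertex of the lower convex hull and the two conclusions follow. The only cosmetic difference is that the paper splits into the cases $i<i_0$ and $i>i_0$ (using the stable sort to sharpen which inequality is strict), whereas you obtain the strict inequality directly from the maximality in Step~\ref{alg:step:t_i} and handle the boundary cases $t_i\in\{0,t\}$ separately; your treatment is if anything slightly cleaner.
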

\begin{proof}
Since $0\in \I$, we need only to consider the $i$'s with $t_i\neq 0$.  By the selection of $i_0$ and $q_0$, it must hold that
\begin{gather*}
\ell(i,t_i)\geq \ell(i_0,q_0)> \ell(i,t_i+1) \quad \text{for} \quad i < i_0\\
\ell(i,t_i)> \ell(i_0,q_0)\geq \ell(i,t_i+1) \quad \text{for} \quad i > i_0,
\end{gather*}
which implies that $\ell(i,t_i) > \ell(i,t_i+1)$ whenever $i\neq i_0$, i.e.,
\[
f_i(t_i-1)-f_i(t_i) > f_i(t_i)-f_i(t_i+1),\quad i\neq i_0.
\]
Hence $(i,f_i(t_i))$ is a vertex of the convex hull for all $i\neq i_0$, that is, $t_i\in\I$ and $f_i(t_i) = \Csol(\A_i,2k,t_i)$.
\end{proof}
Now we are ready to bound the `goodness' of local solutions.

\begin{lemma}
\label{lem:goodness} Let $\rho=2$. It holds that
$\sum_i \Csol(\A_i,2k,t_i) \leq 12\cdot \Copt(\A,2k,t)$ and $\sum_i t_i \leq 3t$, where $t_1,\dots,t_s$ are computed in Step~\ref{alg:step:t_i} and may be updated in Step~\ref{alg:mstep2} of Algorithm~\ref{alg:involved2a}.
\end{lemma}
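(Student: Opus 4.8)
I would establish the two assertions separately. The outlier count is the easy half: by Lemma~\ref{lem:waterfilling} the values assigned in Step~\ref{alg:step:t_i}---call them $\hat t_i$---already satisfy $\sum_i\hat t_i\le\rho t=2t$, and the only subsequent modification is to $t_{i_0}$ in Step~\ref{alg:step-recomp}, whose new value lies in $\I\subseteq\{0,\dots,t\}$ and is thus at most $t$; hence $\sum_i t_i\le\sum_{i\ne i_0}\hat t_i+t\le 2t+t=3t$.

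For the cost bound I would argue in four steps. \emph{(1) Pass to hull values.} For $i\ne i_0$, Lemma~\ref{lem:exception} gives $t_i=\hat t_i\in\I$ and $\Csol(\A_i,2k,t_i)=f_i(\hat t_i)$; for $i=i_0$, Step~\ref{alg:step-recomp} puts $t_{i_0}$ at an element of $\I$ on site $i_0$'s lower hull at or after $q_0$, and since $\hat t_{i_0}$ is already the first vertex of that hull at or after $q_0$ the recomputation returns $t_{i_0}=\hat t_{i_0}$, so $\Csol(\A_{i_0},2k,t_{i_0})=f_{i_0}(\hat t_{i_0})$. Thus $\sum_i\Csol(\A_i,2k,t_i)=\sum_i f_i(\hat t_i)$. \emph{(2) Use optimality of $(\hat t_i)$.} Let $t_i^\ast$ be the number of points of $\A_i$ excluded by $\opt(\A,2k,t)$, so $\sum_i t_i^\ast=t$, and set $q_i:=\min\{q\in\I:\ q\ge t_i^\ast\}$. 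Because $\I$ is a geometric grid of ratio $\rho=2$ augmented with $0$ and $t$, the least element of $\I$ not below $t_i^\ast$ is at most $\rho\,t_i^\ast$ when $t_i^\ast\ge 1$ and equals $0$ when $t_i^\ast=0$; hence $\sum_i q_i\le\rho\sum_i t_i^\ast=\rho t$ and $0\le q_i\le t$, so $(q_1,\dots,q_s)$ is feasible for the minimization in Lemma~\ref{lem:waterfilling}, giving $\sum_i f_i(\hat t_i)\le\sum_i f_i(q_i)$. \emph{(3) Return to optimum costs.} Since $q_i\in\I$ and $f_i$ is the lower hull of $\{(q,\Csol(\A_i,2k,q))\}_{q\in\I}$, we have $f_i(q_i)\le\Csol(\A_i,2k,q_i)$; the guarantee of the local clustering routine (Theorem~\ref{thm:alt-median} with $\epsilon=1$) bounds this by $6\,\Copt(\A_i,2k,q_i)$, which is $\le 6\,\Copt(\A_i,2k,t_i^\ast)$ because $q_i\ge t_i^\ast$ (a larger outlier budget cannot increase the optimum). \emph{(4) Aggregate.} Lemma~\ref{lem1}, whose projection argument applies verbatim with $2k$ centers, gives $\sum_i\Copt(\A_i,2k,t_i^\ast)\le 2\,\Copt(\A,2k,t)$. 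Chaining (1)--(4) yields $\sum_i\Csol(\A_i,2k,t_i)\le 12\,\Copt(\A,2k,t)$; the $(k,t)$-means version follows with a larger constant via the relaxed triangle inequality, as in Lemma~\ref{lem1} and Corollary~\ref{cor-compose}.

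The crux is step (2): one needs a feasible outlier allocation that dominates the optimal allocation $(t_i^\ast)$ coordinatewise yet still respects the water-filling budget $\rho t$, and this is exactly why $\I$ must be a geometric grid with the \emph{same} ratio $\rho$ that bounds the budget, and why $\rho=2$ is the clean choice---it is the quantitative content of the ``taking the lower convex hull has only a mild effect'' remark from the overview. The only delicate bookkeeping is at the exceptional site $i_0$: the identity $t_{i_0}=\hat t_{i_0}$ used in step (1) is what guarantees that the recomputation in Step~\ref{alg:step-recomp} neither pushes $\sum_i t_i$ past $3t$ nor raises the hull value $f_{i_0}(t_{i_0})$, and one should verify it carefully (using that site $i_0$'s hull is represented by its vertices, all of which lie in $\I$, so that the first hull point of $\I$ at or after $q_0$ is precisely the vertex $\hat t_{i_0}$).
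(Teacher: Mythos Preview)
Your argument tracks the paper's proof closely, but step~(3) misapplies Theorem~\ref{thm:alt-median}. That theorem is a \emph{bicriteria} guarantee: with $\epsilon=1$ it gives $\Csol(\A_i,2k,q_i)\le 6\,\Copt(\A_i,k,q_i)$, not $6\,\Copt(\A_i,2k,q_i)$ as you write. Since $\Copt(\A_i,2k,q_i)\le\Copt(\A_i,k,q_i)$, your inequality is strictly stronger and does not follow. Carrying the correct $k$ into step~(4), you must take $t_i^\ast$ to be the outlier counts of $\opt(\A,k,t)$ and invoke Lemma~\ref{lem1} in its stated form to get $\sum_i\Copt(\A_i,k,t_i^\ast)\le 2\,\Copt(\A,k,t)$, which yields $\sum_i\Csol(\A_i,2k,t_i)\le 12\,\Copt(\A,k,t)$. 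This is precisely what the paper proves; the ``$2k$'' in the lemma's displayed bound is a typo for ``$k$'' (Theorem~\ref{thm:2-round}, the sole consumer of this lemma, only needs the comparison with $\Copt(\A,k,t)$).

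On the exceptional site $i_0$, your step~(1) is in fact sharper than the paper's: you correctly note that the Step-\ref{alg:step:t_i} value $\hat t_{i_0}$ is already a hull vertex (since $\ell(i_0,\hat t_{i_0})\ge\ell(i_0,q_0)>\ell(i_0,\hat t_{i_0}+1)$), so Lemma~\ref{lem:exception} actually covers $i_0$ as well, and no fix-up is really needed. One caveat: Step~\ref{alg:step-recomp} selects the first $q\in\I$ with $q\ge q_0$ and $\Csol(\A_{i_0},2k,q)=f_{i_0}(q)$, which in a degenerate configuration (an input point of $\I$ lying on a hull edge) could be strictly smaller than $\hat t_{i_0}$; your identity $t_{i_0}=\hat t_{i_0}$ then fails, and so does the paper's own assertion that the update moves $t_{i_0}$ to ``a bigger value.'' This edge case is easily handled by a general-position assumption or by simply deleting Step~\ref{alg:step-recomp}, but it is worth flagging.
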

\begin{proof}
Let $\hat t_i = \min\{q \in \I: q \geq t_i^\ast\}$.
It follows from Lemma~\ref{lem1} with 
$\sum_i t^*_i \leq t$ that
\begin{multline*}
2\Copt(\A,k,t) \geq \sum_i \Copt(\A_i,k,t^*_i) \geq \sum_i \Copt(\A_i,k,\hat{t}_i) \\ \geq \frac1{6} \sum_i \Csol(\A_i,2k,\hat{t}_i),
\end{multline*}
where the last inequality follows from Theorem~\ref{thm:alt-median} (applied with $\eps = \rho - 1 = 1$). Observe that $\hat{t}_i\leq 2t_i^\ast$ and thus $\sum_i \hat{t}_i\leq 2\sum_i t_i^\ast \leq 2t$, and
\[
\sum_i \Csol(\A_i,2k,\hat{t}_i) \geq \sum_i f_i (\hat{t}_i) \geq \sum_i f_i(t_i),
\]
where the last equality follows from Lemma~\ref{lem:waterfilling}, and $t_i$'s are computed in Step~\ref{alg:step:t_i}.

Now, by Lemma~\ref{lem:exception}, $f_i(t_i) = \Csol(\A_i,2k,t_i)$ for all except one $i$. The exceptional $t_i$ will be replaced by a bigger value, which will not increase $f_i(t_i)$ by the monotonicity of $f_i$, and the first part follows. This update will increase $\sum_i t_i$ by at most $t$ and thus $\sum_i t_i \leq 3t$.
\end{proof}

Lemma~\ref{lem:goodness} and Theorem~\ref{thm:alt-median} together give the following. Note that $|\I|=O(\log t)$.

\begin{theorem}
\label{thm:2-round}
For the distributed $(k,t)$-median problem, Algorithm~\ref{alg:involved2a} with $\rho=2$ outputs $\sol(\A,k,(1+\epsilon)t)$ satisfying $\Csol(\A,k,(1+\epsilon) t))\leq O(1+1/\epsilon)\cdot \Copt(\A,k,t)$. The sites communicate a total of $\tilde{O}(sk + t)$ bits of information with the coordinator over $2$ rounds. The runtime at each site is $\tilde O(n_i^2)$ and the runtime at the coordinator is $\tilde O((sk+t)^2)$. The same result holds for $(k,t)$-means with larger constants in the approximation ratio and the runtime.
\end{theorem}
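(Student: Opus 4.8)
The plan is to assemble the lemmas already proved into an end-to-end analysis of Algorithm~\ref{alg:involved2a}, tracking three things in parallel: the approximation ratio, the number of ignored points, and the communication and runtime bounds. I would first set up the reduction: let $\M = \bigcup_i \{\text{the } 2k \text{ centers sent by site } i\}$, with the attached-point counts as weights, together with the $\sum_i t_i$ unassigned points forwarded to the coordinator. By Lemma~\ref{lem:goodness} with $\rho = 2$, we have $\sum_i \Csol(\A_i, 2k, t_i) \le 12\,\Copt(\A, 2k, t) \le 12\,\Copt(\A, k, t)$ and $\sum_i t_i \le 3t$; these are exactly the hypotheses needed to invoke the composition result. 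Denote $L = \sum_i \Csol(\A_i, 2k, t_i)$, the preclustering cost of the non-ignored points.

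Next I would apply Corollary~\ref{cor-compose}(iii) — the $(k,t)$-median extension of Theorem~\ref{thm:p2} on the weighted point set $\M$ — keeping in mind the Remark after the algorithm, namely that the preclustering \emph{ignores no input point} when we regard the $t_i$ unassigned points as also being present in $\M$ (as singleton-weight points); this is precisely the ``provided the preclustering does not ignore any points'' case of the corollary, and the total number of ignored points becomes the sum over the two phases. Concretely: on the coordinator's weighted instance we run the algorithm of Theorem~\ref{thm:alt-median} (the $\sol(Z, k, (1+\epsilon)t)$ variant) with $Z = \M$, which excludes at most $(1+\epsilon)t$ of the weighted points at cost $\le \max\{6, 6/\epsilon\}$ times the optimum of the weighted $(k,t)$-instance. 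By the first half of Theorem~\ref{thm:p2} (part (i)), that weighted optimum is $\le 2(L + \Copt(\A, k, t)) = O(\Copt(\A,k,t))$ using the bound on $L$; and by part (ii), lifting the weighted solution back to $\A$ adds at most $L = O(\Copt(\A,k,t))$ to the cost. Chaining these gives $\Csol(\A, k, (1+\epsilon)t) = O(1 + 1/\epsilon)\cdot\Copt(\A,k,t)$. For the outlier count: the coordinator excludes $(1+\epsilon)t$ points from $\M$ (measured by weight), and the preclustering ``excluded'' $\sum_i t_i \le 3t$ points that were passed along individually; but those $3t$ points are themselves in $\M$, so the coordinator's $(1+\epsilon)t$-budget is applied to the combined set and the total excluded from $\A$ is at most $(1+\epsilon)t$ — this bookkeeping step (that we do not pay $3t + (1+\epsilon)t$) is the one place I would be careful, and it hinges on the Remark's observation that choosing fewer than $q$ copies of an aggregated point does not hurt, so the coordinator can ``un-ignore'' precluster-ignored points for free.

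For the resource bounds: communication is $\tilde O((sk + t)B)$ because each of the $s$ sites sends $2k$ centers plus counts plus $t_i$ points with $\sum t_i \le 3t$, plus the $O(\log t)$-entry convex-hull description $f_i$ in round one, plus the $O(1)$-size broadcast of $\ell(i_0,q_0)$; over $2$ rounds this is $\tilde O(sk + t)$ points. The site runtime is dominated by computing $\sol(\A_i, 2k, q)$ for the $O(\log t)$ values $q \in \I$, each via Theorem~\ref{thm:alt-median} in $\tilde O(n_i^2)$ time, so $\tilde O(n_i^2 \log t) = \tilde O(n_i^2)$; the coordinator's instance has $|\M| + \sum t_i = O(sk + t)$ points, so Theorem~\ref{thm:alt-median} runs there in $\tilde O((sk+t)^2)$. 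Finally, the $(k,t)$-means statement follows by the same argument, substituting the means versions of Lemma~\ref{lem1} (constant $4$ in place of $2$), Theorem~\ref{thm:alt-median} (slightly larger constant), and the relaxed-triangle-inequality version of Corollary~\ref{cor-compose}(ii), which only inflates the hidden constants. The main obstacle I anticipate is not any single inequality but the careful accounting of the outlier budget across the preclustering and coordinator phases — making sure the $(1+\epsilon)t$ bound is genuinely the total and not merely the coordinator's share — which is why the Remark after the algorithm is doing real work.
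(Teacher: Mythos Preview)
Your proposal is correct and follows essentially the same approach as the paper: invoke Lemma~\ref{lem:goodness} to bound $L$ and $\sum_i t_i$, compose via the preclustering framework (Corollary~\ref{cor-compose}), and finish with Theorem~\ref{thm:alt-median} on the coordinator's $O(sk+t)$-point instance. The paper's own proof is extremely terse (three sentences), leaving the composition and outlier-accounting implicit, whereas you spell these out; in particular your careful treatment of why the total outlier count is $(1+\epsilon)t$ rather than $3t + (1+\epsilon)t$ --- because the $t_i$ unassigned points are forwarded and hence not ignored in the preclustering, per the Remark --- is exactly the right reading of that Remark and is the substantive content the paper leaves to the reader.
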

\begin{proof}
The communication cost is straightforward. By Lemma~\ref{lem:goodness}, the coordinator will solve the problem of at most $2sk+3t$ points. The claims on approximation ratio and the runtime then follow from Theorem~\ref{thm:alt-median}, noting that it takes time $O(\I \log\I) = \tilde O(1)$ to find the convex hull. 
\end{proof}
If we were only interested in the clustering and not the list of ignored points, we could set $\rho = 1+\delta$ and change line~\ref{alg:step:if_block_begins} to line~\ref{alg:mstep2} of Algorithm~\ref{alg:involved2a} to the following. The sites do not send the ignored nodes but just the number of them, and the exceptional site runs a slightly more convoluted algorithm.
\medskip
\begin{algorithmic}[1]
\makeatletter
\setcounter{ALG@line}{11}
\makeatother
\If{$i\neq i_0$}
\State Send the coordinator $t_i$, the $2k$ centers built in $\sol(\A_i,2k,t_i)$ and the number of points
attached to each center
\Else
\State $t_{i,1} = \max\{q \in \I: q \leq t_i\text{ and }\Csol(\A_i,2k,q) = f_i(q)\}$
\State $t_{i,2} = \min\{q \in \I: q \geq t_i\text{ and }\Csol(\A_i,2k,q) = f_i(q)\}$
\State Combine $\sol(\A_i,2k,t_{i,1})$ and $\sol(\A_i,2k,t_{i,2})$ to form a solution $\sol(\A_i,4k,t_i)$ by taking the union of the medians, attaching each point to the closest center among the combined centers, and ignoring the points with largest $t_i$ distances.
\State Send to the coordinator $t_i$, the combined centers and the number of points attached to each center.
\EndIf
\end{algorithmic}
\medskip
Observe that Lemma~\ref{lem:goodness} still holds with $\sum_i t_i \leq (1+\delta)t$, since we are not changing the exceptional $t_i$. For the exceptional site $i$, suppose that $t_i = (1-\theta)t_{i,1} + \theta t_{i,2}$ for some $\theta\in (0,1)$, we have $(1-\theta)f_i(t_{i,1}) + \theta f_i(t_{i,2}) \leq f_i(t_i)$.
We now argue the next critical lemma.

\begin{lemma}
\label{lem:convex}
$ \Csol(\A_i,4k,t_i)\leq (1-\theta)f_i(t_{i,1}) + \theta f_i(t_{i,2})$.\end{lemma}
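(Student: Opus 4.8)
The plan is to collapse the claim onto a single scalar convexity inequality. Introduce, for each $p\in\A_i$, the distances $d_1(p)$ and $d_2(p)$ from $p$ to the nearest center of $\sol(\A_i,2k,t_{i,1})$ and of $\sol(\A_i,2k,t_{i,2})$ respectively, and set $\delta(p)=\min\{d_1(p),d_2(p)\}$. Since the combined solution $\sol(\A_i,4k,t_i)$ uses the \emph{union} of the two center sets and attaches each point to its nearest combined center, the distance from $p$ to its assigned center is exactly $\delta(p)$; and since it discards the $t_i$ points with the largest such distances, its cost equals $g(t_i)$, where I define $g(m)$ to be the sum of the $n_i-m$ smallest values of the multiset $\{\delta(p):p\in\A_i\}$. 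So it suffices to prove $g(t_i)\le(1-\theta)f_i(t_{i,1})+\theta f_i(t_{i,2})$.

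I would then establish two facts about $g$. First, $g$ is convex on the integers: ordering the values $\{\delta(p)\}$ as $a_1\ge a_2\ge\cdots$, one has $g(m)-g(m+1)=a_{m+1}$, so the successive decrements of $g$ are non-increasing. Since $t_{i,1}\le t_i\le t_{i,2}$ and $t_i=(1-\theta)t_{i,1}+\theta t_{i,2}$, convexity gives $g(t_i)\le(1-\theta)g(t_{i,1})+\theta g(t_{i,2})$. Second, $g(t_{i,j})\le f_i(t_{i,j})$ for $j\in\{1,2\}$: the solution $\sol(\A_i,2k,t_{i,j})$ discards $t_{i,j}$ points and attaches every remaining point to one of its own centers at distance at least $\delta(p)$, so its cost $\Csol(\A_i,2k,t_{i,j})=f_i(t_{i,j})$ is at least the sum of $\delta(p)$ over all but $t_{i,j}$ points, which in turn is at least $g(t_{i,j})$ because leaving out the $t_{i,j}$ largest $\delta$-values is the cheapest way to leave out $t_{i,j}$ points. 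Chaining the two facts yields $g(t_i)\le(1-\theta)g(t_{i,1})+\theta g(t_{i,2})\le(1-\theta)f_i(t_{i,1})+\theta f_i(t_{i,2})$, as required.

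The one step I would be most careful about — and what I see as the crux — is the bookkeeping behind these two displayed inequalities: one must check that the relevant per-point quantity for the combined solution and the lower bound used for each component solution are the \emph{same} $\delta(p)=\min\{d_1(p),d_2(p)\}$, so that $g(t_i)$, $g(t_{i,1})$ and $g(t_{i,2})$ all refer to one fixed multiset. Once that is in place, the convexity of $g$ and the fact that $t_i$ genuinely lies on the segment $[t_{i,1},t_{i,2}]$ are routine, and the weight $\theta$ never has to be realized as an actual set of ignored points — it enters only through the convexity bound.
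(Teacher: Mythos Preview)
Your proof is correct and takes a genuinely different route from the paper's.

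The paper argues constructively: it builds an explicit assignment of $n_i-t_i$ points to the $4k$ centers whose cost is bounded by the right-hand side. Points in $P_1\cap P_2$ are attached to the nearer of their two centers, and points in the symmetric difference $P_1\triangle P_2$ are handled by a pairing procedure --- repeatedly choosing the point $x$ of smallest weighted cost $h(x)$, attaching it, and marking a partner from the opposite set as an outlier --- followed by a fractional-selection step on the residual set. The bookkeeping in \eqref{eqn:cost1}--\eqref{eqn:cost3} tracks the convex combination termwise.

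You instead collapse everything to the single scalar function $g(m)=$ sum of the $n_i-m$ smallest values of $\delta(p)=\min\{d_1(p),d_2(p)\}$, observe that $g$ is integer-convex because $g(m)-g(m+1)$ equals the $(m{+}1)$-st largest $\delta$-value and hence is non-increasing, and then chain $g(t_i)\le(1-\theta)g(t_{i,1})+\theta g(t_{i,2})$ with the easy domination $g(t_{i,j})\le \Csol(\A_i,2k,t_{i,j})=f_i(t_{i,j})$. This is shorter and conceptually cleaner: the convex combination never has to be realised as a set of ignored points, and no pairing argument is needed. The paper's approach, by contrast, actually exhibits the $n_i-t_i$ assigned points, which is not needed for the lemma but makes the bound feel more tangible. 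Both arguments are metric-agnostic (neither uses the triangle inequality), so they transfer equally to the $k$-means variant.
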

\begin{proof}
We will prove the lemma by carefully designing an assignment of $n-t_i$ points to the $4k$ centers which is bounded above by the right hand side. Since choosing the minimum $n-t_i$ distances will only result in a smaller value, the lemma would follow.

For $j=1,2$, let $\pi_j$ be the center projection function in $\sol(\A_i,2k,t_{i,j})$ and $P_i$ the set of clustered points in $\sol(\A_i,2k,t_{i,j})$. For $x\in P_1\cap P_2$, we attach $x$ to the nearer one between the two centers $\pi_1(x)$ and $\pi_2(x)$, and the incurred cost is 
\begin{equation}\label{eqn:cost1}
\min\{d(x,\pi_1(x)),d(x,\pi_2(x))\} 
\leq (1-\theta)d(x,\pi_1(x)) + \theta d(x,\pi_2(x)).
\end{equation}

For $x\in P_1\triangle P_2$, since only one of $\pi_1(x)$ and $\pi_2(x)$ exist, we abbreviate it as $\pi(x)$ for simplicity. Define $h(x)$ for each $x\in P_1\triangle P_2$ as
\[
h(x) = \begin{cases}
	(1-\theta)\cdot d(x,\pi(x)), & x\in P_1\setminus P_2;\\
	\theta\cdot d(x,\pi(x)), & x\in P_2\setminus P_1.
\end{cases}
\]

Let $r = |P_1\cap P_2|$, $r_1 = |P_1\setminus P_2|$ and $r_2 = |P_2\setminus P_1|$. It holds that 
$r+r_1= n -t_{i,1}$ and $r+r_2=n-t_{i,2}$, thus $r_1 > r_2$ and
\[
(1-\theta)r_1 + \theta r_2 = n - t_i - r.
\]
Define $Q_1 = P_1\setminus P_2$ and $Q_2 = P_2\setminus P_1$. Pick $x = \argmin_{z\in Q_1\cup Q_2} h(z)$. If $x\in Q_1$, pick an arbitrary $u\in Q_2$, otherwise pick $u \in Q_1$. Attach $x$ to $\pi(x)$ in the $4k$-center solution we are constructing and mark $u$ as outlier. Note that this incurs a cost of 
\begin{equation}\label{eqn:cost2}
d(x,\pi(x))\leq 
\begin{cases}
(1-\theta)d(x,\pi(x)) + \theta d(u, \pi(u)),\quad x\in Q_1;\\
(1-\theta)d(u,\pi(u)) + \theta d(x, \pi(x)),\quad x\in Q_2,\\
\end{cases}
\end{equation}
by our choice of $x$, because one of the combination terms is exactly $h(x)$ and it is smaller than $h(u)$, which is exactly the other term. Then we remove $x$ and $u$ from $Q_1$ or $Q_2$ depending on the case. Now, $|Q_1| = r_1 - 1$ and $|Q_2| = r_2 - 1$, and note that
\[
(1-\theta)(r_1 - 1) + \theta (r_2 - 1) = n - t_i - r - 1.
\]

Since $r_1 > r_2$, we can continue this process until $Q_2 = \emptyset$. At this point we have run the procedure above $r_2$ times, and it holds that
\[
(1-\theta) r_1 = n - t_i - r - r_2.
\]

Note that $r_1\geq n - t_i - r - r_2$, so we can choose $E\subseteq Q_1$ to be the points with smallest $n-t_i-r-r_2$ values of $h$. Attach points in $E$ to their respective centers and mark the remaining points in $Q_1$ as outliers. This incurs a cost of
\begin{multline}
\label{eqn:cost3}
\sum_{x\in E} d(x,\pi(x))\leq \frac{n-t_i-r-r_2}{r_1}\sum_{x\in Q_1} d(x,\pi(x)) \\= (1-\theta)\sum_{x\in Q_1} d(x,\pi(x))
\end{multline}
In total we have assigned $r + r_2 + (n-t_i-r-r_2) = n - t_i$ points as desired.  The desired upper bound on cost follows from (i) summing both sides of \eqref{eqn:cost1} over $P_1\cap P_2$; (ii) summing both sides of \eqref{eqn:cost2} over $x$ and the corresponding $u$ during the pairing procedure; and (iii) Equation~\eqref{eqn:cost3}. Note that (ii) covers $(P_1\triangle P_2)\setminus Q_1$, where $Q_1$ is the post-pairing set.
\end{proof}

As a consequence of Lemma~\ref{lem:convex}, $
\Csol(\A_i,4k,t_i)\leq f_i(t_i)$. Thus the upper bound on the approximation ratio still holds.
Finally, note that $|\I| = \tilde{O}(1/\delta)$ and we conclude that

\begin{theorem}
\label{thm:clustering_only}
For the distributed $(k,t)$-median problem, the modified Algorithm~\ref{alg:involved2a} with $\rho=1+\delta$ outputs $\sol(\A,k,(2+\epsilon+\delta)t)$ satisfying $\Csol(\A,k,(2+\epsilon+\delta)t)\leq O(1+1/\epsilon)\cdot\Copt(\A,k,t)$. The sites communicate a total of $\tilde O(s\delta^{-1} + skB)$ bits of information with the coordinator over $2$ rounds. The runtime on site $i$ is $\tilde{O}(n_i^2/\delta)$ and the runtime on the coordinator is $\tilde O((sk)^2)$. The same result holds for $(k,t)$-means with a larger constant in the approximation ratio.
\end{theorem}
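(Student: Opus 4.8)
The plan is to mirror the proof of Theorem~\ref{thm:2-round}, inserting Lemma~\ref{lem:convex} at the single place where the exceptional site behaves differently. Two structural facts survive the modification. First, the only change to the water-filling stage is that the exceptional site $i_0$ keeps the value $t_{i_0}$ produced in Step~\ref{alg:step:t_i} rather than rounding it up to a hull vertex, so Lemma~\ref{lem:waterfilling} still applies with $\rho=1+\delta$ and gives $\sum_i t_i\le(1+\delta)t$. Second, for every non-exceptional site $\Csol(\A_i,2k,t_i)=f_i(t_i)$ by Lemma~\ref{lem:exception}, while for $i_0$ the combined $4k$-center solution satisfies $\Csol(\A_{i_0},4k,t_{i_0})\le f_{i_0}(t_{i_0})$, the consequence of Lemma~\ref{lem:convex} recorded above. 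Hence the total preclustering cost obeys
\[
L:=\sum_i\Csol(\A_i,\cdot,t_i)\le\sum_i f_i(t_i)\le 12\,\Copt(\A,k,t),
\]
exactly as in the proof of Lemma~\ref{lem:goodness}, the only difference being that each term is now a clustering discarding unweighted points and using $2k$ or $4k$ centers.

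Next I would run the coordinator's combining step. The coordinator holds a weighted point set $\M$ of at most $2(s-1)k+4k=O(sk)$ distinct centers, of total weight $n-\sum_i t_i$, together with the preclustering assignment of cost $L$. By the forward direction of the combining theorem (Theorem~\ref{thm:p2}, extended in Corollary~\ref{cor-compose}(iii)) there is a feasible $(k,t)$-median solution on this weighted instance of cost at most $2(\Copt(\A,k,t)+L)$: reassign the $n-t$ clustered points of $\opt(\A,k,t)$ to their preclustering centers at additive cost $L$, then snap the optimal centers into $\M$, losing a factor $2$. Applying Theorem~\ref{thm:alt-median} to $\M$ with parameter $\epsilon$ yields $\sol(\M,k,(1+\epsilon)t)$ of cost at most $\max\{6,6/\epsilon\}\cdot 2(\Copt(\A,k,t)+L)$, and the backward direction of Theorem~\ref{thm:p2} converts it into a clustering of $\A$ at an extra additive cost $L$. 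With $L\le12\,\Copt(\A,k,t)$ this is $O(1+1/\epsilon)\cdot\Copt(\A,k,t)$. For the outlier count, the $\sum_i t_i\le(1+\delta)t$ preclustering outliers are never transmitted and hence permanently excluded, while the coordinator excludes a further $(1+\epsilon)t$ units, for a grand total of $(2+\epsilon+\delta)t$.

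The resource bounds are then read off. In round~$1$ each site transmits the piecewise-linear function $f_i$, which has $|\I|=\tilde{O}(1/\delta)$ breakpoints, for $\tilde{O}(s/\delta)$ bits; in round~$2$ each site sends its at most $4k$ centers with attachment counts, $\tilde{O}(skB)$ bits, and the coordinator broadcasts $i_0,q_0,\ell(i_0,q_0)$ in $\tilde{O}(s)$ bits. Each site performs $|\I|=\tilde{O}(1/\delta)$ calls to Theorem~\ref{thm:alt-median}, each $\tilde{O}(n_i^2)$, plus an $O(|\I|\log|\I|)$ hull computation and, at $i_0$, an $O(n_i k)$ merge, i.e.\ $\tilde{O}(n_i^2/\delta)$ overall; the coordinator does the water-filling (over the $\tilde{O}(s/\delta)$ distinct hull slopes, not all $st$ values of $\ell(i,q)$) and one call of Theorem~\ref{thm:alt-median} on the $O(sk)$-point weighted instance, i.e.\ $\tilde{O}((sk)^2)$. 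For $(k,t)$-means I would square every distance throughout: Lemma~\ref{lem1} holds with constant $4$, Lemma~\ref{lem:goodness} with a larger constant via the means version of Theorem~\ref{thm:alt-median}, Lemma~\ref{lem:convex} is unchanged (its proof only reassigns points to already-present centers and uses $\min\{c_1,c_2\}\le(1-\theta)c_1+\theta c_2$ and an averaging bound, both valid for arbitrary nonnegative costs), and the combining uses Corollary~\ref{cor-compose}(ii); the approximation factor remains $O(1+1/\epsilon)$ with larger absolute constants.

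The step I expect to be the main obstacle is the two-phase outlier accounting. Because the modified algorithm never sends the preclustering outliers, they cannot be revisited, so the surplus over $t$ is additive across the two phases; one must check that (a) water-filling with $\rho=1+\delta$ really caps the preclustering outliers at $(1+\delta)t$ \emph{including} the exceptional site --- which is exactly where Lemma~\ref{lem:convex} is needed, since it lets $i_0$ honor its assigned $t_{i_0}$ exactly instead of rounding up to the next hull vertex as in the original algorithm --- and (b) the coordinator, working only with the weighted center set, can still match the optimum while excluding at most $(1+\epsilon)t$ additional units of weight. Everything else is the same assembly as in Theorem~\ref{thm:2-round}.
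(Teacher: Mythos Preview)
Your proposal is correct and follows essentially the same route as the paper: invoke the water-filling lemma with $\rho=1+\delta$ to cap $\sum_i t_i$, use Lemma~\ref{lem:exception} for the ordinary sites and Lemma~\ref{lem:convex} for the exceptional one so that $\sum_i \Csol(\A_i,\cdot,t_i)\le\sum_i f_i(t_i)$, then carry the bound of Lemma~\ref{lem:goodness} through the combining step and read off the resource bounds from $|\I|=\tilde O(1/\delta)$. The paper's own argument is a two-line remark to this effect; you have simply unpacked the combining step (Theorem~\ref{thm:p2}/Corollary~\ref{cor-compose}(iii)) and the two-phase outlier accounting that the paper leaves implicit.
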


\subsection{Subquadratic-time Centralized Algorithm}
We now show an unusual application of Theorem~\ref{thm:2-round} in speeding up existing constant-factor approximation algorithms for $(k,t)$-median (or means). Note that the centralized bicriteria approximation algorithms in Charikar~\cite{CKMN01} are 
$\tilde{O}(n^3)$ from $n$ points, and while the modifications in Theorem~\ref{thm:alt-median} improve the running time to 
$\tilde{O}(n^2)$, this leaves open the important question: {\em Are there algorithms with provable constant factor approximation guarantees which are subquadratic?} Observe that the question is even more pertinent in the context of unicriterion approximation, for which the only known result is a $\tilde{O}(n^3k^2t^2)$-time constant-factor approximation of $(k,t)$-median \cite{Chen08}. In the sequel we show that the running time can be brought to almost linear time. The improvement arises from the fact that we can simulate a distributed algorithm sequentially.

\begin{lemma}
\label{recurselemma}
Suppose that we are given a $\tilde{O}(n^{1+\alpha_0}k^2)$ time algorithm for bicriteria approximation which produces $2k$ centers or $2t$ outliers with approximation factor $\gamma$, where $\alpha_0 \leq 1$. Then we can produce a similar algorithm with running time $\tilde{O}(t^2)+\tilde{O}\left(n^{\frac{2+2\alpha_0}{2+\alpha_0}}k^{2}\right)$ and approximation $c_0\gamma$ for some absolute constant $c_0 > 0$.
\end{lemma}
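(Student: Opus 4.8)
The plan is to simulate the distributed algorithm behind Theorem~\ref{thm:2-round} on a single machine, using the given $\tilde O(n^{1+\alpha_0}k^2)$-time bicriteria algorithm as the per-site solver in place of Theorem~\ref{thm:alt-median}. Fix a parameter $s$; partition the $n$ points arbitrarily into $s$ groups $\A_1,\dots,\A_s$ with $|\A_i|\le\lceil n/s\rceil$, and run Algorithm~\ref{alg:involved2a} with $\rho=2$, the single machine playing all $s$ sites and the coordinator in turn. The only change is in Line~\ref{initial-step}: for each group $i$ and each of the $|\I|=O(\log t)$ values $q\in\I$, the solution $\sol(\A_i,2k,q)$ is produced by the hypothesized algorithm (call it with budget $k$ centers and $q$ outliers; it returns $2k$ centers, or else $2q$ outliers, within $\tilde O(n_i^{1+\alpha_0}k^2)$ time). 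The coordinator still runs the fixed, absolute-constant-factor bicriteria of Theorem~\ref{thm:alt-median} on its aggregated weighted instance, which has $|Z|=O(sk+t)$ points and hence costs $\tilde O((sk+t)^2)$.

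\textbf{Running time.} Using $|\A_i|\le\lceil n/s\rceil$ and $s\le n$, the local work is $\sum_i\tilde O(|\A_i|^{1+\alpha_0}k^2)=\tilde O(n^{1+\alpha_0}k^2/s^{\alpha_0})$, and the coordinator contributes $\tilde O((sk+t)^2)=\tilde O(s^2k^2)+\tilde O(t^2)$. Setting $s=\lceil n^{(1+\alpha_0)/(2+\alpha_0)}\rceil$ equates the two $k^2$-terms, each becoming $\tilde O(n^{(2+2\alpha_0)/(2+\alpha_0)}k^2)$ (and $s<n$ since $\alpha_0\le1$), so the total is $\tilde O(t^2)+\tilde O(n^{(2+2\alpha_0)/(2+\alpha_0)}k^2)$, as required.

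\textbf{Approximation.} The analysis of Section~\ref{sec:dist5} applies almost verbatim: the only place the quality of the per-site solver enters is the displayed chain in the proof of Lemma~\ref{lem:goodness}, where the appeal to Theorem~\ref{thm:alt-median} (with $\eps=\rho-1=1$, contributing the constant $6$) is replaced by the hypothesis (contributing $\gamma$). This gives $\sum_i\Csol(\A_i,2k,t_i)\le O(\gamma)\cdot\Copt(\A,k,t)$ and $\sum_i t_i=O(t)$, with Lemmas~\ref{lem1}, \ref{lem:waterfilling}, \ref{lem:exception} unchanged. Feeding $L:=\sum_i\Csol(\A_i,2k,t_i)$ into the composition bound of Theorem~\ref{thm:p2}/Corollary~\ref{cor-compose}, with the coordinator's step still an absolute-constant bicriteria, yields a final solution of cost $O(1)\cdot(L+\Copt(\A,k,t))+L=O(\gamma)\cdot\Copt(\A,k,t)$; thus the approximation is $c_0\gamma$ for an absolute constant $c_0$, and the total ignored points / centers are $O(t)$ / $O(k)$, which the internal $\eps,\rho$ can be tuned to normalize to the ``$2k$ centers or $2t$ outliers'' form. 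The essential structural point is that only the per-site solver is recursed upon, never the coordinator, so each application of the lemma multiplies the approximation by the same absolute $c_0$; iterating sends $\alpha_0\mapsto\alpha_0/(2+\alpha_0)\to0$, driving the runtime towards almost linear while the approximation stays $O(\gamma)$ over any constant number of iterations.

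\textbf{Main obstacle.} The delicate part is the bookkeeping around the bicriteria blowup: on the calls where the subroutine returns $2q$ outliers rather than $2k$ centers it actually produces $\sol(\A_i,k,2q)$, and across the $O(\log t)$ calls within one group the choice may vary. One must verify (i) that $f_i(\cdot)$ is still a well-defined non-increasing function on $\{0,\dots,t\}$ for which the lower-convex-hull / water-filling arguments of Lemmas~\ref{lem:waterfilling}--\ref{lem:goodness} go through unchanged, and (ii) that the extra points ignored in the ``$2q$ outliers'' case inflate $\sum_i t_i$ by at most a constant factor, so that the composed output is again a bicriteria solution of the self-similar type needed for the lemma to be iterable.
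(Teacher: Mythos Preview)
Your approach is exactly the paper's: simulate Theorem~\ref{thm:2-round} sequentially on $s$ equal pieces, use the hypothesized algorithm as the per-site solver, and balance $s\cdot(n/s)^{1+\alpha_0}k^2$ against $(sk)^2$ to get $s=n^{(1+\alpha_0)/(2+\alpha_0)}$. The paper's own proof is in fact terser than yours---it only writes out the runtime balance and leaves the approximation claim implicit in the appeal to Theorem~\ref{thm:2-round}.

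Your ``Main obstacle'' is not a real obstacle; it stems from reading the hypothesis too adversarially. The phrase ``produces $2k$ centers or $2t$ outliers'' should be read, as in Theorem~\ref{thm:alt-median}, as saying that \emph{both} relaxed variants are available and we choose which one to invoke. Algorithm~\ref{alg:involved2a} always calls the $2k$-centers/$q$-outliers variant in Line~\ref{initial-step}, so every $\sol(\A_i,2k,q)$ uses exactly $2k$ centers and exactly $q$ outliers, the functions $f_i$ are well-defined and non-increasing, and Lemmas~\ref{lem:waterfilling}--\ref{lem:goodness} go through verbatim with the single replacement $6\mapsto\gamma$ you already identified. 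No mixed-case bookkeeping is needed.
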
 
\begin{proof}
We will apply Theorem~\ref{thm:2-round} after dividing the data arbitrarily in $s$ pieces of size $n/s$. The sequential simulation of the $s$ sites will take time $\tilde{O}(s \left(n/s\right)^{1+\alpha_0}k^2)$ based on the statement of the lemma. The coordinator will require time $\tilde{O}((sk+t)^2)=\tilde{O}(s^2k^2) + \tilde{O}(t^2)$. Observe that we can now balance $n^{1+\alpha_0} = s^{2+\alpha_0}$, which provides us the optimum $s$ to use and achieve a running time of 
\[
\tilde{O}(t^2)+\tilde{O}(s^2k^2) = \tilde{O}(t^2)+\tilde{O}\left(n^{\frac{2+2\alpha_0}{2+\alpha_0}}k^{2}\right).\qedhere
\]
\end{proof}

\begin{theorem} Let $\alpha > 0$ and suppose that $t\leq \sqrt{n}$. There exists a centralized algorithm for the $(k,t)$-median problem that runs in $\tilde O(n^{1+\alpha}k^2)$ time and outputs a solution $\sol(\A,k,2t)$ satisfying $\Csol(\A,k,2t)\leq (1+1/\alpha)^{O(1)}\Copt(\A,k,t)$.
\end{theorem}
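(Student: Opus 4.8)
The plan is to bootstrap the quadratic‑time algorithm of Theorem~\ref{thm:alt-median} by iterating Lemma~\ref{recurselemma}. For the base case, I would invoke Theorem~\ref{thm:alt-median} with $\epsilon=1$: on any point set of size $m$ it produces, in $\tilde O(m^2)$ time, a bicriteria solution of the form $\sol(\cdot,2k,t)$ or $\sol(\cdot,k,2t)$ of cost $O(1)\cdot\Copt$. Since $\tilde O(m^2)\le\tilde O(m^2k^2)$, this matches the hypothesis of Lemma~\ref{recurselemma} with exponent $\alpha_0=1$ and some constant approximation factor $\gamma_0$. I would then apply Lemma~\ref{recurselemma} repeatedly, each time feeding its output back in as the subroutine for the next step, and on the last step take the $\sol(\A,k,2t)$ branch so that the final answer has exactly $k$ centers and $2t$ outliers.

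Each application turns an algorithm of exponent $\alpha_j$ into one of exponent $\alpha_{j+1}$ with $1+\alpha_{j+1}=\frac{2+2\alpha_j}{2+\alpha_j}$, i.e. $\alpha_{j+1}=\frac{\alpha_j}{2+\alpha_j}$, multiplies the approximation factor by the absolute constant $c_0$, and adds an $\tilde O(t^2)$ term to the running time. Writing $\beta_j=1/\alpha_j$ linearises the exponent recursion to $\beta_{j+1}=2\beta_j+1$, so with $\beta_0=1$ one gets $\beta_j=2^{j+1}-1$ and hence $\alpha_j=1/(2^{j+1}-1)$, which decays geometrically. Consequently after $i=O(\log(1/\alpha))$ steps the exponent drops to $\alpha_i\le\alpha$ (for $\alpha\ge 1$ the base algorithm already works), and the approximation factor becomes $c_0^{\,i}\gamma_0=c_0^{O(\log(1/\alpha))}\gamma_0=(1/\alpha)^{O(1)}\le(1+1/\alpha)^{O(1)}$, as required.

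The remaining work, and the step I expect to be the main obstacle, is the running‑time bookkeeping across the recursion. Because $t\le\sqrt n$, every $\tilde O(t^2)$ term that Lemma~\ref{recurselemma} generates is $\tilde O(n)$, hence dominated by the $\tilde O(n^{1+\alpha_j}k^2)$ term; this is precisely why the clean form $\tilde O(n^{1+\alpha_j}k^2)$ needed to re‑apply Lemma~\ref{recurselemma} is available at every level, and why the final bound is $\tilde O(n^{1+\alpha_i}k^2)=\tilde O(n^{1+\alpha}k^2)$. One must, however, be careful that the $\tilde O(t^2)$ and $\tilde O((sk)^2)$ contributions produced at the various recursion levels (where the data is split and the current subroutine is run on pieces whose size may itself be smaller than $t$) do not accumulate beyond $\tilde O(n^{1+\alpha}k^2)$; here it matters that the closed form $\alpha_j=1/(2^{j+1}-1)$ limits the recursion to only $O(\log(1/\alpha))$ levels, so that the accumulated logarithmic factors, the intermediate coordinator costs, and the $c_0$‑per‑level blow‑up in the approximation all stay under control.
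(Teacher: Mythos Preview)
Your proposal is correct and follows essentially the same approach as the paper: start from the $\tilde O(n^2)$ algorithm of Theorem~\ref{thm:alt-median} as the base case $\alpha_0=1$, iterate Lemma~\ref{recurselemma}, solve the exponent recursion to obtain $\alpha_j=1/(2^{j+1}-1)$ (the paper writes $1/(2^j-1)$ with a shifted index), and stop after $O(\log(1+1/\alpha))$ iterations so that the approximation factor is $(1+1/\alpha)^{O(1)}$. Your explicit substitution $\beta_j=1/\alpha_j$ and your remark about absorbing the $\tilde O(t^2)$ terms via $t\le\sqrt n$ match the paper's reasoning, and your caveat about bookkeeping the accumulated coordinator costs across the $O(\log(1/\alpha))$ levels is a fair point that the paper leaves implicit.
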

\begin{proof}
Note that the algorithm in Theorem~\ref{thm:alt-median} has runtime $\tilde O(n^2)$, so we can take $\alpha_0=1$ in Lemma~\ref{recurselemma} to obtain an algorithm of approximation ratio $\gamma = 6$ and runtime $\tilde O(t^2 + n^{4/3}k^2)$, which is $\tilde O(n^{4/3}k^2)$ by our assumption that $t\leq \sqrt{n}$. Repeatedly applying Lemma~\ref{recurselemma} for $j$ times gives an algorithm of runtime $\tilde O(n^{1+1/(2^j-1)}k^2)$ and approximation ratio $(c_0\gamma)^j$. Let $j = \log(1+1/\alpha)$, the runtime becomes $O(n^{1+\alpha}k^2)$ and the approximation ratio $(1+1/\alpha)^{\log(c_0\gamma)} = (1+1/\alpha)^{O(1)}$.
\end{proof}
\begin{remark} We remark that
\begin{enumerate}[(i)]\parskip=0pt
	\item the theorem above also holds for $\sol(\A,2k,t)$, where the number of centers,  instead of the outliers, is relaxed.
	\item for the unicriterion approximation, if we use the algorithm of runtime $\tilde O(n^3t^2k^2)$ from \cite{Chen08} instead of the result of Theorem~\ref{thm:alt-median}, we need to balance $s^3$ and $s(n/s)^{1+\alpha_0}$ for an analogy of Lemma~\ref{recurselemma}, which will eventually lead to an algorithm of runtime $O(n^{1+\alpha}t^2k^2)$, provided that $t\leq n^{1/5}$.
\end{enumerate}
\end{remark}

\section{$\protect\kt$-Center Clustering}
\label{sec:center1}

Our algorithm for $(k,t)$-center clustering is presented in Algorithm~\ref{alg:involved2b}. It is similar to Algorithm~\ref{alg:involved2a} but only simpler, because the preclustering stage admits a simpler algorithm due to Gonzalez~\cite{Gonzalez85}. For the $k$-center problem on a point set $Z$ of $n$ points, Gonzalez's algorithm outputs a re-ordering of points in $Z$, say, $p_1,\dots,p_n$, such that for each $1\leq r\leq n$, the solution $\sol(Z,r)$ of choosing $\{p_1,\dots,p_r\}$ as the $r$ centers is a $2$-approximation for the $r$-center problem on $Z$, i.e., $\Csol(Z,r)\leq 2\Copt(Z,r)$.

\begin{algorithm}[htbp]
\begin{algorithmic}[1]
\For {each site $i$}
\State Run Gonzalez's algorithm and obtain a re-ordering $\{a_1,\dots,a_{n_i}\}$ of the points in $\A_i$
\For {each $1\leq q\leq t$}
	\State Compute $\ell(i,q)\gets\min\{d(a_j,a_{k+q}): j < k+q\}$
\EndFor
\EndFor
\State Sites and coordinator sort $\{\ell(i,q)\}$, and follow the subsequent steps as in Algorithm~\ref{alg:involved2a}, where the coordinator in the last step runs the  algorithm in \cite{CKMN01} for the $k$-center problem with exactly $t$ outliers.
\end{algorithmic}
\caption{Distributed $(k,t)$-center clustering
\label{alg:center}\label{alg:involved2b}}
\end{algorithm}
The core
argument is that the $k$-center algorithm of Gonzalez 
can be used to simultaneously (a) precluster the local
data into local solutions and (b) provide a witness that can be compared globally. 

\begin{remark} In Algorithm~\ref{alg:center}, (i) none of the original points is ignored in the preclustering, and (ii)
it is possible that the preclustering aggregated $q$ points but the coordinator's algorithm chooses less than $q$ copies to exclude exactly $t$ points. This does not affect the proofs of 
$(k,t)$-center clustering.
\end{remark}

We now analyze the performance of Algorithm~\ref{alg:center}. Denote by $t_i^\ast$ the number of points ignored from $\A_i$ in the  global optimum  solution $\opt(\A,k,t)$. First we show two structural lemmas. 

\begin{lemma} 
\label{lem1-center}
$2\Copt(\A_i,k,t) \geq \displaystyle \max_i \Copt(\A_i,k,t^*_i)$.
\end{lemma}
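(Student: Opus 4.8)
This lemma is the $k$-center analogue of Lemma~\ref{lem1}, so the plan is to mimic that proof. The hypothesis is that $t_i^\ast$ is the number of points of $\A_i$ ignored in the global optimum $\opt(\A,k,t)$; let $K$ be the set of $\le k$ optimum centers and $\pi_{\mathrm{opt}}$ the associated assignment of the non-ignored points. First I would fix a site $i$ and build a candidate $k$-center solution on $\A_i$ with exactly $t_i^\ast$ outliers: ignore precisely those points of $\A_i$ that $\opt(\A,k,t)$ ignores, and for each center $c\in K$ pick the nearest point of $\A_i$ to $c$, i.e.\ use $\{\argmin_{u\in\A_i} d(u,c): c\in K\}$ as the (at most $k$) centers.

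Next I would bound the cost of this candidate solution. For every non-ignored $x\in\A_i$, let $c = \pi_{\mathrm{opt}}(x)\in K$ and let $u_c\in\A_i$ be the chosen surrogate center nearest to $c$; then by the triangle inequality $d(x,u_c)\le d(x,c)+d(c,u_c)\le 2d(x,c) = 2d(x,\pi_{\mathrm{opt}}(x))$, since $d(c,u_c)\le d(c,x)$ by the choice of $u_c$. Taking the maximum over the non-ignored $x\in\A_i$ gives $\Copt(\A_i,k,t_i^\ast)\le \Csol(\A_i,k,t_i^\ast)\le 2\max_{x\in\A_i\setminus\O} d(x,\pi_{\mathrm{opt}}(x))$, where $\O$ is the global outlier set.

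Finally, $\max_{x\in\A_i\setminus\O} d(x,\pi_{\mathrm{opt}}(x)) \le \max_{x\in\A\setminus\O} d(x,\pi_{\mathrm{opt}}(x)) = \Copt(\A,k,t)$, and since this holds for each $i$ I can take the maximum over $i$ on the left to conclude $\max_i \Copt(\A_i,k,t_i^\ast)\le 2\Copt(\A,k,t)$, which is the claimed inequality (I read the statement's ``$\Copt(\A_i,k,t)$'' as a typo for $\Copt(\A,k,t)$, exactly parallel to Lemma~\ref{lem1}). The only subtlety, and the one point worth stating carefully, is that the max objective makes the per-site bound immediate once we have the per-point bound --- there is no summation to control as in the median case --- so the argument is actually easier than Lemma~\ref{lem1}; I do not anticipate a real obstacle beyond making sure the surrogate-center triangle-inequality step is written cleanly and that we are comparing against the correct (restricted) subset of points.
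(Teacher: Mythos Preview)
Your proposal is correct and matches the paper's own proof, which simply says ``Use the same argument in the proof of Lemma~\ref{lem1}.'' You have accurately spelled out that argument for the $k$-center objective (replacing the sum with a max) and correctly flagged the typo in the statement.
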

\begin{proof}
Use the same argument in the proof of Lemma~\ref{lem1}.
\end{proof}

\begin{lemma}
\label{lem2-center}$ \max\limits_i \Copt(\A_i, k, t^*_i) \geq \displaystyle \min_{\sum_i t_i \geq t} \left( \max_i \Copt(\A_i, k, t_i) \right).$
\end{lemma}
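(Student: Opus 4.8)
The plan is to exhibit a single feasible choice of $(t_1,\dots,t_s)$ for the minimization problem on the right-hand side and show that its objective value is at most $\max_i \Copt(\A_i,k,t^*_i)$; since the minimum is taken over all feasible choices, this immediately gives the claimed inequality. The natural candidate is $t_i = t_i^\ast$, the number of points ignored from $\A_i$ in the global optimum solution $\opt(\A,k,t)$.

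\textbf{Key steps.} First I would check feasibility: since $\opt(\A,k,t)$ ignores at most $t$ points in total and these are distributed among the sites, we have $\sum_i t_i^\ast \leq t$. The constraint in the minimization is $\sum_i t_i \geq t$, so strictly speaking $t_i^\ast$ need not satisfy it with equality; but we may pad, i.e.\ increase some $t_i$ above $t_i^\ast$ so that the sum becomes exactly $t$ (this is possible as long as $n_i$ is large enough, which we may assume, or one can simply observe $t\le n$). Increasing a $t_i$ can only decrease $\Copt(\A_i,k,t_i)$ since ignoring more points never hurts, so the padded vector $(t_1,\dots,t_s)$ satisfies $\sum_i t_i \geq t$ and $\max_i \Copt(\A_i,k,t_i) \leq \max_i \Copt(\A_i,k,t_i^\ast)$. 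Therefore
\[
\min_{\sum_i t_i \geq t} \left( \max_i \Copt(\A_i, k, t_i) \right) \leq \max_i \Copt(\A_i, k, t_i)\leq \max_i \Copt(\A_i, k, t^*_i),
\]
which is exactly the statement.

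\textbf{Main obstacle.} There is no deep obstacle here — the lemma is essentially a feasibility-plus-monotonicity observation. The only mild subtlety is the inequality direction in the constraint ($\sum_i t_i \geq t$ versus $\sum_i t_i^\ast \leq t$): one must be slightly careful that the ``$\geq t$'' side is the easy side to satisfy by padding, and that padding is monotone in the right direction for the $\Copt$ values. I would state the monotonicity of $\Copt(\A_i,k,\cdot)$ (non-increasing in the outlier budget) explicitly, since it is used twice, and note that it follows immediately from the fact that any solution ignoring $q$ points is also a valid solution ignoring $q'>q$ points.
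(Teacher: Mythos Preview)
Your proposal is correct and follows essentially the same approach as the paper: exhibit $(t_i^\ast)_i$ as a feasible point for the minimization and conclude. The paper's proof is a single line, ``It follows from the fact that $\sum_i t_i^\ast = t$,'' which sidesteps your padding argument by taking for granted (harmlessly, since ignoring more points never increases the $k$-center cost) that the global optimum ignores exactly $t$ points; your version simply handles the $\sum_i t_i^\ast < t$ edge case explicitly via monotonicity rather than absorbing it into a WLOG.
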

\begin{proof}
It follows from the fact that $\sum_i t_i^\ast = t$.
\end{proof}

\begin{theorem}
For the distributed $(k,t)$-center problem, Algorithm~\ref{alg:center} outputs $\sol(\A,k,t)$ satisfying $\Csol(\A,k,t)\leq O(1)\cdot\Copt(\A,k,t)$. The sites communicate a total of $\tilde{O}((sk + t)B)$ bits of information to the coordinator over $2$ rounds. The runtime on site $i$ is $\tilde O((k+t)n_i)$ and the runtime on the coordinator is $\tilde O((sk+t)^2)$.
\label{thm:center}
\end{theorem}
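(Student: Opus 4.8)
The plan is to mirror the structure of the proof of Theorem~\ref{thm:2-round}, using the two structural lemmas (Lemma~\ref{lem1-center} and Lemma~\ref{lem2-center}) in place of Lemma~\ref{lem1} and Lemma~\ref{lem:goodness}, together with the crucial observation that Gonzalez's algorithm simultaneously serves as the preclustering routine \emph{and} supplies the right ``witness'' quantities $\ell(i,q)$. First I would pin down what $\ell(i,q)$ computes: among the first $k+q$ points $a_1,\dots,a_{k+q}$ returned by Gonzalez on $\A_i$, the quantity $\min\{d(a_j,a_{k'}): j<k'\le k+q\}$ is (up to the factor-$2$ slack in Gonzalez's guarantee) a proxy for $\Copt(\A_i,k,q)$ — it is the radius at which throwing away the ``newest'' $q$ Gonzalez centers still leaves $k$ centers covering everything within $2\ell(i,q)$, and conversely no $k$ centers can cover all of $a_1,\dots,a_{k+q}$ within radius less than $\ell(i,q)$. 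So $\ell(i,q)$ is nonincreasing in $q$, needs no convex-hull correction (unlike the median case), and satisfies $\ell(i,q)\le 2\Copt(\A_i,k,q)$ and $\Copt(\A_i,k,q)\le \ell(i,q)$ on the Gonzalez ordering.

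Next I would run the water-filling/selection argument: the coordinator sorts all $\{\ell(i,q)\}$, finds the element $\ell(i_0,q_0)$ of rank $t$, and each site sets $t_i=\max\{q:\ell(i,q)\ge\ell(i_0,q_0)\}$, so that $\sum_i t_i = t$ exactly (here, unlike the $\rho t$ inflation in Algorithm~\ref{alg:involved2a}, we can take rank exactly $t$ because $\ell$ is monotone and we do not need the convex-hull/exceptional-site fix). The analogue of Lemma~\ref{lem:goodness} is then: $\max_i \ell(i,t_i) \le \max_i \ell(i,t_i^\ast)$ — because the rank-$t$ threshold dominates any threshold that leaves $\sum_i t_i^\ast = t$ points out (this is exactly Lemma~\ref{lem2-center} translated through the $\ell$'s) — and combining with Lemma~\ref{lem1-center} gives $\max_i \ell(i,t_i) \le \max_i \ell(i,t_i^\ast) \le 2\max_i\Copt(\A_i,k,t_i^\ast)\le 4\Copt(\A,k,t)$, i.e. each local solution $\sol(\A_i,k,t_i)$ (take the first $k$ Gonzalez centers, throw out the $t_i$ points realizing the smallest interpoint distances among $a_1,\dots,a_{k+t_i}$) has cost $O(1)\cdot\Copt(\A,k,t)$. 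Note each site uses only $k$ centers here (not $2k$), which is why the final step can use the $k$-center-with-$t$-outliers algorithm of \cite{CKMN01} rather than a bicriteria one.

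Finally I would invoke Corollary~\ref{cor-compose}(i) and (iii): the coordinator receives the $\le sk$ local centers (each weighted by its attached count) plus the $\sum_i t_i = t$ locally-ignored points, for a total of $O(sk+t)$ weighted points, and runs Charikar et al.'s $3$-approximation for $k$-center with exactly $t$ outliers on this set. The composition lemma says that lifting this solution back to $\A$ costs at most the weighted-instance cost plus the preclustering cost $\max_i\ell(i,t_i)$ (the $k$-center analogue of the additive-$L$ term), and since no original point was ignored in the preclustering, the total number of outliers remains exactly $t$. Chaining the constants ($2$ from Gonzalez, $2$ from Lemma~\ref{lem1-center}, $3$ from \cite{CKMN01}, and the $O(1)$ from Corollary~\ref{cor-compose}) yields $\Csol(\A,k,t)\le O(1)\cdot\Copt(\A,k,t)$. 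For the resource bounds: Gonzalez on $\A_i$ with $k+t$ centers runs in $\tilde O((k+t)n_i)$; computing all $\ell(i,q)$ for $q\le t$ naively is $\tilde O((k+t)^2)$ per site, which is absorbed; communication is the $sk$ centers with counts and the $t$ points, i.e. $\tilde O((sk+t)B)$ over the two rounds (one round to compute $\ell(i_0,q_0)$, one to ship centers and outliers); and the coordinator runs \cite{CKMN01} on $O(sk+t)$ points in $\tilde O((sk+t)^2)$ time — I should double-check the exact exponent of the Charikar et al.\ outlier algorithm, but it is polynomial in the instance size and the claimed $\tilde O((sk+t)^2)$ is what the paper asserts elsewhere.

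\textbf{Main obstacle.} The delicate point is making precise the two-sided relationship between the Gonzalez witness $\ell(i,q)$ and $\Copt(\A_i,k,q)$ — in particular the lower bound $\ell(i,q)\le 2\Copt(\A_i,k,q)$, which requires arguing that discarding the last $q$ Gonzalez centers as ``pseudo-outliers'' and keeping the first $k$ still yields a $2$-approximate $(k,q)$-center solution, and the matching lower bound that no $(k,q)$-outlier solution beats $\ell(i,q)$ because the packing $a_1,\dots,a_{k+q}$ is $(k+q)$ points pairwise far apart so at least one must be an outlier or within $\ell(i,q)$ of a center. Once that correspondence is nailed down, the rest is the same water-filling/composition skeleton as Section~\ref{sec:dist5}, only cleaner since monotonicity of $\ell(i,\cdot)$ removes the need for convex hulls and the exceptional-site surgery.
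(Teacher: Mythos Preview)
Your overall plan is exactly the paper's: use Gonzalez both to precluster and to produce monotone witnesses $\ell(i,q)$, run the water-filling from Algorithm~\ref{alg:involved2a} to select the $t_i$, then compose via Corollary~\ref{cor-compose} and the Charikar et al.\ $(k,t)$-center algorithm. The paper's own proof is a two-line pointer to Lemmas~\ref{lem1-center}--\ref{lem2-center} and Theorem~\ref{thm:2-round}, so your expansion is in the right spirit. But two of your concrete inequalities are wrong, and one of them is load-bearing.

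The relation $\ell(i,q)\le 2\Copt(\A_i,k,q)$ is off by one: since $\ell(i,q)=d(a_{k+q},\{a_1,\dots,a_{k+q-1}\})$ is the Gonzalez radius with $k+q-1$ centers, the Gonzalez guarantee gives $\ell(i,q)\le 2\Copt(\A_i,k+q-1,0)\le 2\Copt(\A_i,k,q-1)$, not $q$. (Counterexample: $\A_i=\{0,1,2,3,4,100\}$, $k=1$, $q=1$; then $\ell(i,1)=100$ while $\Copt(\A_i,1,1)=2$.) Correspondingly, after site $i$ ships the first $k+t_i$ Gonzalez points as weighted centers, the preclustering radius is $\ell(i,t_i+1)$, not $\ell(i,t_i)$, and the chain you need is $\max_i \ell(i,t_i+1)\le \max_i\ell(i,t_i^\ast+1)\le 2\max_i\Copt(\A_i,k,t_i^\ast)\le 4\Copt(\A,k,t)$. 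Your version with $\ell(i,t_i)$ fails: $\max_i\ell(i,t_i)$ can be as large as $\max_i\ell(i,1)$, which is not controlled by $\Copt(\A,k,t)$. This index shift also resolves your muddled description of the local solution: the clean reading (consistent with the paper's remark that ``none of the original points is ignored in the preclustering'') is that site $i$ preclusters with $k+t_i$ centers $a_1,\dots,a_{k+t_i}$ and \emph{no} local outliers, rather than ``$k$ centers plus $t_i$ thrown-out points.''

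Separately, the reverse bound $\Copt(\A_i,k,q)\le \ell(i,q)$ that you call the ``main obstacle'' is false (take $\A_i=\{0,1,2,100,101,102\}$, $k=1$, $q=2$: here $\ell(i,2)=2$ but $\Copt(\A_i,1,2)\approx 50$) and, more to the point, unnecessary. The argument only needs the one-sided Gonzalez bound together with monotonicity of $\ell(i,\cdot)$ to justify that the rank-threshold choice of $\{t_i\}$ minimizes $\max_i\ell(i,t_i+1)$; no packing lower bound on $\Copt$ is used anywhere. Once you shift the index and drop the spurious direction, the rest of your outline (composition via Corollary~\ref{cor-compose}, the $3$-approximation of~\cite{CKMN01} on $O(sk+t)$ weighted points, and the resource accounting) goes through as you describe.
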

\begin{proof}
The approximation ratio follows from a similar argument to that of Theorem~\ref{thm:2-round}, using Lemma~\ref{lem1-center} and~\ref{lem2-center}. The coordinator runtime follows from  \cite[Theorem~3.1]{CKMN01} and the site runtime from~\cite{Gonzalez85}, noting that we need only the first $k+t$ points of the reordering of each $\A_i$. The communication cost is clear from Algorithm~\ref{alg:involved2b}.
\end{proof}



\section{Clustering Uncertain Input}
\label{sec:uncertain}

Recall that in the setting of clustering with uncertainty there is an underlying metric space $(\PP,d)$. We are given a set of input nodes $j
\in \A$ which correspond to distributions $\D_j$ on $\PP$. In this 
section we shall use nodes to indicate the input and points to indicate deterministic objects 
in the metric space $\PP$.  We shall denote by $\sigma(j)$ a realization of node $j$ and by $\pi(j)$ the center node to which $j$ is attached.
Our goal in the $(k,t)$-median problem in this context is to compute 

\begin{equation}
\min_{\stackrel{K \subseteq \PP, \O \subseteq \A}{ |K|\leq k,|\O|\leq t}} \left[ \sum_{j\in \A\setminus \O} \left( \min_{\pi(j)} \E_{\sigma} \left[d(\sigma(j),\pi(j))\right] \right) \right].
\label{eqn:median-un}
\end{equation}
For $(k,t)$-means we use $d^2(\cdot,\cdot)$ and for $(k,t)$-center-pp we use $\max_j$ instead of $\sum_j$. 

Define $\widehat{d}: \A\times \PP\to \R$ as
$
\widehat{d}(j, p) = \E_{\sigma}[d(\sigma(j),p)],
$
the objective function (\ref{eqn:median-un}) is then reduced to the usual $(k,t)$-median problem with
the new distance function $\widehat{d}$. However, this definition only allows the
computation of distance between an input node and a point in $\PP$. To extend $\widehat{d}$ to a pair of input nodes, the site
holding $\A_i$ will need to know the point set $\bigcup_{j\in \A_{i'}} \supp(\mathcal{D}_j)$ from some
other site $i'$. This will blow up the communication cost, and thus
naively using this distance function in combination with the algorithms
developed previously will not work well. To circumvent this issue we combine the notion of
$1$-median introduced in
\cite{CM08} along with the framework in Theorem~\ref{thm:p2}, and introduce a compression scheme to evaluate distances.

\begin{definition}
\label{onemeddef}
For each node $j$, define its $1$-median and $1$-mean to be
\[
y_j = \argmin_{y\in \PP} \E_{\sigma}[d(\sigma(j),y)], \quad y_j' = \argmin_{y\in \PP} \E_{\sigma}[d^2(\sigma(j),y)],
\]
respectively.
\end{definition}

\begin{definition}[Compressed graph] 
\label{def:compress}
The compressed graph $G(\A)$ is a weighted graph on vertices $\PP \cup \{p_j\}_{j\in \A}$, where the edges are as follows: (1) each pair $(u,v)\in \PP$ is an edge with weight $d(u,v)$, and (2) for each $j\in\A$, the vertex $p_j$ is  connected only to $y_j$ with weight $\ell_j = \E_{\sigma}[d(\sigma(j),y_j)]$.  Define the distance $d_G(u, v)$ between two vertices $u$, $v$ in $G$ to be the length of the shortest path between $u$ and $v$ in $G$.
\end{definition}

For the compressed graph $G$, we can also consider the following $(k,t)$-median problem, where we restrict the demand points to $\{p_j\}$ and the possible centers to $\{y_j\}$, and the distance function is the length of shortest path on $G$. We continue to use the notations $\sol(G,k,t)$, $\Csol(G,k,t)$, etc., to denote the solution and the corresponding cost of $(k,t)$-median problem on $G$. The following two lemmas show that $(k,t)$-median problem in Eqn~\eqref{eqn:median-un} is, up to some constant factor in the approximation ratio, equivalent to the $(k,t)$-median problem on the compressed graph.

\begin{lemma}\label{lem:A=>graph}
If there exists a solution $\sol(\A,k,t)$ of cost $\Csol(\A,k,t)$ to the objective in Equation~\eqref{eqn:median-un}, then there exists a solution $\sol(G(\A),k,t)$ on the compressed graph such that $\Csol(G(\A),k,t)\leq 5\Csol(\A,k,t)$.
\end{lemma}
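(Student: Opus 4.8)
The plan is to take any feasible solution to \eqref{eqn:median-un}, say with centers $K\subseteq\PP$ ($|K|\le k$), outlier set $\O\subseteq\A$ ($|\O|\le t$), and an assignment $\pi$ of the nodes in $\A\setminus\O$ to $K$ realizing the cost, and to manufacture from it a feasible solution of the $(k,t)$-median problem on $G(\A)$ by (a) keeping the same outliers, now viewed as the demand vertices $\{p_j:j\in\O\}$, and (b) replacing each center $c\in K$ by the $1$-median $y_{j_c}$ of a non-outlier node $j_c$ that is cheapest (under $\widehat{d}$) among those assigned to $c$.

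Before the construction I would record three facts that do all the work. (i) By the defining optimality of the $1$-median (Definition~\ref{onemeddef}), $\ell_j=\widehat{d}(j,y_j)\le\widehat{d}(j,p)$ for every $p\in\PP$; in particular $\ell_j\le\widehat{d}(j,\pi(j))$. (ii) Applying the metric triangle inequality to a fixed realization $\sigma(j)$ and then taking expectation over $\sigma\sim\D_j$ gives $d(y_j,p)\le\ell_j+\widehat{d}(j,p)$ for every $p\in\PP$ (the left side is deterministic, so the expectation passes through). (iii) In $G(\A)$ the only edge incident to $p_j$ is $p_jy_j$, of weight $\ell_j$; hence for any index $j'$ the concrete path $p_j\to y_j\to y_{j'}$ certifies $d_G(p_j,y_{j'})\le\ell_j+d(y_j,y_{j'})$, and $d_G(p_j,y_j)=\ell_j$.

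For the construction, for each $c\in K$ let $A_c=\pi^{-1}(c)$ (the non-outlier nodes routed to $c$); discard $c$ if $A_c=\emptyset$, otherwise set $j_c=\argmin_{j\in A_c}\widehat{d}(j,c)$ and put $y_{j_c}$ into the new center set $K'$, so $|K'|\le|K|\le k$ and $K'\subseteq\{y_j\}_{j\in\A}$. Make $\{p_j:j\in\O\}$ the outliers (at most $t$ of them) and route each remaining $p_j$, with $\pi(j)=c$, to $y_{j_c}$. If $j=j_c$ the cost is $d_G(p_{j_c},y_{j_c})=\ell_{j_c}\le\widehat{d}(j_c,c)$. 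If $j\ne j_c$ with $\pi(j)=c$, combining (iii), the metric triangle inequality through $c$, and (ii) applied twice,
\begin{multline*}
d_G(p_j,y_{j_c})\le\ell_j+d(y_j,y_{j_c})\le\ell_j+d(y_j,c)+d(c,y_{j_c})\\
\le 2\ell_j+\widehat{d}(j,c)+\ell_{j_c}+\widehat{d}(j_c,c).
\end{multline*}
Now $\ell_j\le\widehat{d}(j,c)$ and $\ell_{j_c}\le\widehat{d}(j_c,c)$ by (i), and $\widehat{d}(j_c,c)\le\widehat{d}(j,c)$ by the choice of $j_c$, so the right-hand side is at most $5\,\widehat{d}(j,c)=5\,\widehat{d}(j,\pi(j))$. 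Summing over $j\in\A\setminus\O$ produces a feasible solution on $G(\A)$ of cost at most $5\,\Csol(\A,k,t)$, which is the claim.

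The only step I expect to need care is the bound on the center-to-center distance $d(y_j,y_{j_c})$: it has to be routed through the original center $c$ and both legs expanded via the expectation form of the triangle inequality in (ii), after which the minimality built into the choice of $j_c$ is exactly what keeps the accumulated constant at $5$ rather than something larger. Everything else — feasibility of $K'$ and of the outlier set, the cardinality bounds, and the trivial $j=j_c$ case — is routine bookkeeping.
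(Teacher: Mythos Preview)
Your proof is correct. It differs from the paper's in the choice of replacement centers: for an original center $c$, the paper takes the $1$-median $y_{j'}$ (ranging over all non-outlier $j'$) that minimizes the metric distance $d(c,y_{j'})$, which immediately gives $d(y_{\text{new}},c)\le d(y_j,c)$ for every $j$; you instead take $y_{j_c}$ with $j_c$ the node in $\pi^{-1}(c)$ of smallest $\widehat d(\cdot,c)$, and control $d(c,y_{j_c})$ via your fact~(ii) together with the minimality of $j_c$. Both routes land on the same constant $5$ through essentially the same triangle-inequality bookkeeping---your facts (i)--(iii) all appear in the paper's argument as well. Your version has the minor advantage of being purely local (each term $d_G(p_j,y_{j_c})$ is bounded by $5\,\widehat d(j,\pi(j))$ individually, so the bound transparently holds for an arbitrary feasible solution), whereas the paper sums first and then invokes $\sum_j\ell_j\le\Csol(\A,k,t)$ as a separate global step; functionally the two are interchangeable.
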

\begin{proof}
Let $\A'$ be the set of clustered nodes in the feasible
$(k,t)$-median solution of the original problem with the objective in \eqref{eqn:median-un}. Define the set of center points $M = \{y_j: j\in \A'\}$. For each $j\in \A'$, let $y_{\pi(j)} = \argmin_{y\in M} d(\pi(j), y)$. Let $\sol(G(\A),k,t)$ be the solution of connecting each point $p_j$ ($j\in \A'$) to $y_{\pi(j)}$ in the compressed graph $G$. We try to upper bound the cost $\Csol(G(\A),k,t)$:
\begin{align}
& \Csol(G(\A),k,t) = \sum_{j \in \A'} d_G(y_{\pi(j)},p_j)  \tag*{(definition of $\Csol$)}\\
&= \sum_{j \in \A'} \left( d(y_{\pi(j)},y_j) +  d_G(y_j,p_j) \right) \quad \tag*{(definition of $d_G$)}\\
&\leq \sum_{j \in \A'} d(y_{\pi(j)},\pi(j)) + \sum_{j \in \A'} d(\pi(j),y_j) + \sum_{j \in \A'} d_G(y_j,p_j) \tag*{(triangle inequality)}\\
&\leq 2\sum_{j \in \A'} d(\pi(j),y_j) + \sum_{j\in \A'} \ell_j, \nonumber
\end{align}
where the last line follows from $d(y_{\pi(j)},\pi(j))\leq d(\pi(j),y_j)$ by the definition (optimality) of $y_{\pi(j)}$.

Observe that for any realization $\sigma(j)$, it holds that
\[
d(y_j,\pi(j)) \leq d(y_j,\sigma(j)) + d(\sigma(j),\pi(j)).
\]
Taking expectation over $\sigma$,
\[
d(y_j,\pi(j)) \leq \E_\sigma d(y_j,\sigma(j)) + \E_\sigma d(\sigma(j),\pi(j)) 
= \ell_j + \E_\sigma d(\sigma(j),\pi(j)).
\]
Summing over $j\in \A'$,
\begin{equation}\label{eqn:heart}
\sum_{j\in A'} d(y_j,\pi(j))
\leq \sum_{j\in \A'} \ell_j + \sum_{j\in \A'} \E_\sigma d(\sigma(j),\pi(j))
\leq \sum_{j\in \A'} \ell_j + \Copt(\A,k,t).
\end{equation}
%
We next bound $\sum_{j\in \A'} \ell_j$. This is exactly the cost of connecting each $j\in\A'$ to its $1$-median, which is the optimal solution of at most $n-t$ centers for $\A'$. The optimal cost for $n-t$ centers is clearly less than that for $k$ centers and hence $\sum_{j\in \A'} \ell_j\leq \Copt(\A,k,t)$.

Therefore $\Csol(G(\A),k,t)\leq 2\cdot 2\Copt(\A,k,t) + \Copt(\A,k,t) = 5\Copt(\A,k,t)$ as claimed. 
\end{proof}

\begin{lemma}\label{lem:graph=>A}
If there exists a solution $\sol(G(\A),k,t)$ of cost \\
$\Csol(G(\A),k,t)$ on the compressed graph, then there exists a solution $\sol(\A,k,t)$ for the problem formulated in \eqref{eqn:median-un} such that $\Csol(\A,k,t)\leq 2\Csol(G(\A),k,t)$.
\end{lemma}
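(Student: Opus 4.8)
\textbf{Proof proposal for Lemma~\ref{lem:graph=>A}.}

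The plan is to take a solution on the compressed graph $G$ and ``read off'' a solution for the uncertain $(k,t)$-median problem, bounding the blow-up in cost by a factor of $2$. Let $\sol(G(\A),k,t)$ be given: it picks a set of demand outliers $\O \subseteq \A$ (those $p_j$ left unconnected), with $|\O|\le t$, a set of at most $k$ centers among the $\{y_j\}$, and for each $j\in\A\setminus\O$ a connection of $p_j$ to some chosen center $y_{c(j)}$, incurring cost $d_G(y_{c(j)},p_j)$. The natural candidate solution for \eqref{eqn:median-un} is to keep the same outlier set $\O$, use the same centers (which are points of $\PP$, as required), and set $\pi(j) = y_{c(j)}$ for $j\in\A\setminus\O$.

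The key computation is to relate the cost of this assignment, namely $\sum_{j\in\A\setminus\O}\widehat{d}(j,\pi(j)) = \sum_{j\in\A\setminus\O}\E_\sigma[d(\sigma(j),y_{c(j)})]$, to $\sum_{j\in\A\setminus\O} d_G(y_{c(j)},p_j)$. First I would observe that, by the structure of $G$ (the $p_j$ is a pendant vertex attached only to $y_j$ with weight $\ell_j$, and the $\{y_j\}$ form a clique with the true metric distances), the shortest path from $p_j$ to $y_{c(j)}$ is exactly $\ell_j + d(y_j, y_{c(j)})$, so $d_G(y_{c(j)},p_j) = \ell_j + d(y_j,y_{c(j)})$ where $\ell_j = \E_\sigma[d(\sigma(j),y_j)]$. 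On the other side, for each realization $\sigma(j)$, the triangle inequality gives $d(\sigma(j),y_{c(j)}) \le d(\sigma(j),y_j) + d(y_j,y_{c(j)})$; taking expectations over $\sigma$ yields $\E_\sigma[d(\sigma(j),y_{c(j)})] \le \ell_j + d(y_j,y_{c(j)}) = d_G(y_{c(j)},p_j)$. Summing over $j\in\A\setminus\O$ already gives $\Csol(\A,k,t)\le \Csol(G(\A),k,t)$ — in fact a factor $1$, better than the claimed $2$. (The looser constant $2$ in the statement presumably leaves slack for the center-restriction issue or for the means variant; I would simply note that the bound with constant $1$ suffices, or alternatively keep $2$ as a safe bound.)

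I would then remark that the same argument goes through verbatim for the $(k,t)$-means objective after replacing $d$ by $d^2$ and invoking the relaxed triangle inequality $d^2(\sigma(j),y_{c(j)}) \le 2(d^2(\sigma(j),y_j) + d^2(y_j,y_{c(j)}))$ — here the $1$-mean $y_j'$ plays the role of $y_j$ — which is where a constant like $2$ genuinely appears; and for $(k,t)$-center-pp by replacing the sum with a max, the per-point triangle-inequality step being unchanged. Finally, since $|\O|\le t$ and the number of centers is at most $k$, the produced solution is feasible for \eqref{eqn:median-un}, completing the proof.

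I do not anticipate a serious obstacle here: the only subtlety is making sure the shortest-path identity $d_G(p_j, y_{c(j)}) = \ell_j + d(y_j, y_{c(j)})$ is justified (no path through another pendant $p_{j'}$ can be shorter, since leaving $p_{j'}$ immediately returns through $y_{j'}$ at no savings), and that the chosen centers $y_{c(j)}$ are legitimately points of $\PP$ so that the solution is admissible for the original formulation. Both are immediate from Definition~\ref{def:compress}. The main ``work,'' such as it is, is bookkeeping the outlier set and confirming that no point is double-counted.
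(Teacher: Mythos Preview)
Your proposal is correct and follows essentially the same approach as the paper: take the same outlier set and centers from the graph solution, set $\pi(j)=y_{c(j)}$, and bound $\E_\sigma[d(\sigma(j),\pi(j))]$ via the triangle inequality through $y_j$. Your accounting is in fact tighter than the paper's: the paper bounds $\sum_j \ell_j$ and $\sum_j d(y_j,\pi(j))$ \emph{separately}, each by $\Csol(G(\A),k,t)$ (the second via $d_G(p_j,\pi(j))\geq d(y_j,\pi(j))$), which is where the factor $2$ comes from; you instead use the exact identity $d_G(p_j,y_{c(j)})=\ell_j+d(y_j,y_{c(j)})$ and get factor $1$, so your speculation that the stated constant $2$ is slack for this lemma is on the mark.
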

\begin{proof}
Let $\A''$ be the set of clustered nodes in $\sol(G(\A),k,t)$. A similar argument of increasing the number of centers as in Lemma~\ref{lem:A=>graph} yields that $\sum_{j \in \A''} \ell_j \leq \Csol(G(\A),k,t)$. Suppose that $p_j$ is assigned to $\pi(j)$ in $\sol(G(\A),k,t)$ in the compressed graph. Note that $\pi(j)\in\PP$. Let $\sol(\A,k,t)$ be the solution of attaching $j$ to $\pi(j)$ in $\PP$, and the cost can be bounded as
\begin{align}
\Csol(\A,k,t) &= \sum_{j \in \A''}  \E_{\sigma} \left( d(\sigma(j),\pi(j)) \right) \tag*{(definition of $\Csol$)}\\
& \leq  \sum_{j \in \A''}  \E_{\sigma} \left( d(\sigma(j),y_j) \right) + \sum_{j \in \A''} d(y_j,\pi(j)) \tag*{(triangle inequality)} \\
& \leq  \sum_{j \in \A''} \ell_j + \sum_{j \in \A''} d_G(p_j,\pi(j)) \tag*{(definition of $d_G$, see below)}\\
&\leq 2\Csol(G(\A),k,t) \tag*{(definition of $\Csol$)},
\end{align}
where the third line follows from $d_G(p_j,\pi(j)) = d(p_j, y_j) + d(y_j,\pi(j)) \geq d(y_j,\pi(j))$.
\end{proof}

The equivalence between the original problem and the one on the compressed graph also holds for the $(k,t)$-center-pp and the $(k,t)$-means problems.

\begin{lemma}\label{lem:equiv_others}
Lemma~\ref{lem:A=>graph} and Lemma~\ref{lem:graph=>A} both hold 
\begin{enumerate}[(a)]\parskip=0pt
	\item for $(k,t)$-center-pp with the same constants; and
	\item for $(k,t)$-means with slightly larger constants. 
\end{enumerate}
\end{lemma}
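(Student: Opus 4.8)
\textbf{Proof proposal for Lemma~\ref{lem:equiv_others}.}

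The plan is to revisit the two proofs of Lemma~\ref{lem:A=>graph} and Lemma~\ref{lem:graph=>A} and check that each inequality used there either survives verbatim or degrades only by a controlled constant when we pass to the $(k,t)$-center-pp objective (replace $\sum_{j}$ by $\max_{j}$) or to the $(k,t)$-means objective (replace $d$ by $d^2$). The key observation for part~(a) is that every step in those two proofs is a \emph{pointwise} bound on the contribution of a single clustered node $j$ — the cost $d_G(y_{\pi(j)},p_j)$ is bounded above by a sum of the form (const)$\cdot\ell_j + (\text{const})\cdot\E_\sigma d(\sigma(j),\pi(j))$, and symmetrically for the reverse direction — and the only place summation over $j$ is genuinely used is the sub-step ``$\sum_{j\in\A'}\ell_j \le \Copt(\A,k,t)$'' which says that attaching every node to its $1$-median is no worse than the optimal $k$-center solution. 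That sub-step holds just as well for the $\max$ objective: $\max_{j\in\A'}\ell_j \le \max_{j\in\A'}\E_\sigma d(\sigma(j),\pi(j)) = \Copt_{\mathrm{pp}}(\A,k,t)$ because for each individual $j$, the $1$-median $y_j$ minimizes $\E_\sigma d(\sigma(j),\cdot)$ over all of $\PP$, so in particular $\ell_j \le \E_\sigma d(\sigma(j),\pi(j))$. Hence, replacing every ``$\sum_j$'' by ``$\max_j$'' in the two proofs and using $\max\{a+b\} \le \max a + \max b$ (subadditivity of $\max$ over nonnegative terms, exactly where the proofs used additivity) yields $\Csol_{\mathrm{pp}}(G(\A),k,t)\le 5\,\Csol_{\mathrm{pp}}(\A,k,t)$ and $\Csol_{\mathrm{pp}}(\A,k,t)\le 2\,\Csol_{\mathrm{pp}}(G(\A),k,t)$, with the same constants $5$ and $2$.

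For part~(b), I would redo the same two chains of inequalities with $d$ replaced by $d^2$. The triangle inequality is no longer available directly, so at each place where the original proof wrote $d(a,c)\le d(a,b)+d(b,c)$ we instead use the relaxed inequality $d^2(a,c)\le 2\,d^2(a,b)+2\,d^2(b,c)$ (equivalently, $(x+y)^2\le 2x^2+2y^2$), which is exactly the mechanism already invoked in Lemma~\ref{lem1} and in Corollary~\ref{cor-compose}(ii). One also has to account for the step where expectation is pushed through a triangle inequality: from $d(y_j,\pi(j))\le d(y_j,\sigma(j))+d(\sigma(j),\pi(j))$ the squared version gives $d^2(y_j,\pi(j))\le 2\,d^2(y_j,\sigma(j))+2\,d^2(\sigma(j),\pi(j))$, and taking $\E_\sigma$ replaces $d^2(y_j,\sigma(j))$ by the $1$-mean cost $\ell_j'=\E_\sigma[d^2(\sigma(j),y_j')]$ after noting $\E_\sigma d^2(y_j,\sigma(j))\ge \ell_j'$ only in one direction — so one should be careful to use $y_j'$ (the $1$-mean) rather than $y_j$ throughout the means version, and redefine the compressed-graph tentacle weight accordingly as $\ell_j'$; the definition in Definition~\ref{def:compress} is already phrased so that this substitution is clean. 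Propagating the extra factors of $2$ through the (finitely many) triangle-inequality applications in each proof produces explicit absolute constants in place of $5$ and $2$ — I would just state ``slightly larger constants'' as in the lemma and not optimize them.

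The main obstacle I anticipate is bookkeeping rather than any genuine mathematical difficulty: one must verify that in the means case the distances appearing on the compressed graph $G$ are consistently the \emph{squared} metric (so that $d_G$ on $G$, defined as shortest-path length, still behaves additively along the one path $p_j\!-\!y_j'\!-\!\pi(j)$ that the proofs use) and that the ``increase the number of centers'' sub-argument — ``connecting each clustered node to its $1$-mean costs at most $\Copt$'' — goes through for $d^2$; it does, because the $1$-mean is by definition the minimizer of $\E_\sigma d^2(\sigma(j),\cdot)$ over $\PP$, so node by node $\ell_j'\le \E_\sigma d^2(\sigma(j),\pi(j))$ for any center $\pi(j)$, and summing (or taking $\max$) gives the bound. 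Since every other inequality in Lemma~\ref{lem:A=>graph} and Lemma~\ref{lem:graph=>A} is either a pointwise triangle inequality (handled by the relaxed version) or a definitional unfolding of $d_G$ and $\Csol$ (objective-agnostic), the whole argument transfers, and the lemma follows.
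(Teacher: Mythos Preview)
Your proposal is correct and follows essentially the same approach as the paper: for part~(a) you replace $\sum_j$ by $\max_j$ and use subadditivity of the maximum, and for part~(b) you invoke the relaxed triangle inequality $(a+b)^2\le 2a^2+2b^2$ in place of each metric triangle inequality --- exactly what the paper does, only you spell out the pointwise $\ell_j\le \E_\sigma d(\sigma(j),\pi(j))$ step and the switch to the $1$-mean $y_j'$ more explicitly than the paper bothers to.
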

\begin{proof}
\begin{enumerate}[(a)]\parskip=0pt
\item Observe that $\sum_j$ is replaced with $\max_{j}$ and Equation~\eqref{eqn:heart} rewrites to
\[ 
\max_{j \in \A'}  d(y_j,\pi(j)) \leq \max_{j \in \A'} \ell_j + \Copt(\A,k,t).
\]
The remainder of the equations hold with this transformation. 
\item Note that we used triangle inequality in the proof above. Although the square of the distance does not obey the triangle inequality, we can nevertheless apply $(a+b)^2 \leq 2a^2 + 2b^2$ after the triangle inequality. The derivations above will go through and the results hold with slightly larger constants.\qedhere
\end{enumerate}
\end{proof}

The overall algorithm is summarized in Algorithm~\ref{alg:uncertainty}.
Note that we cannot just cluster the $\{y_j\}$; 
the graph is necessary. To implement the algorithm, we need to show that each site is able to compute the distance function individually. Indeed, note that any site that contains $p_j$ will also
contain the corresponding $y_j$ or $y'_j$ and the value
$\E_{\sigma}[d(\sigma(j),y_j)]$ or $\E_{\sigma}[d^2(\sigma(j),y'_j)]$
respectively. Therefore the distance oracle on the graph can be implemented by the site in constant time.

\begin{algorithm}[t]
\begin{algorithmic}[1]
	\For {each site $i$}
		\State Compute $\ell_j = \E_{\sigma}[d(\sigma(j),y_j)]$ for all $j\in \A_i$\label{alg:uncertainty:1-median}
		\State Construct the compressed graph of $\A_i$ as described in Definition~\ref{def:compress}
		\State Run any algorithm corresponding to Section~\ref{sec:dist5} and Section~\ref{sec:center1} on the compressed graph, with the following change: whenever the site has to communicate $p_j$, it also sends $y_j$ (or $y'_j$) and the values of $\E_{\sigma}[d(\sigma(j),y_j)]$ (or $\E_{\sigma}[d^2(\sigma(j),y_j')]$).
\EndFor
\end{algorithmic}
\caption{A Compression Scheme for Distributed Partial Clustering of Uncertain Data
\label{alg:uncertainty}
\label{ALG:UNCERTAINTY}}
\end{algorithm}

\begin{theorem}
\label{cool-theorem}
For the distributed $(k,t)$-median problem, Algorithm~\ref{alg:uncertainty} outputs $\sol(\A,k,(1+\epsilon)t)$ such that $\Csol(\A,k,(1+\epsilon)t) = O(1+1/\epsilon)\cdot\Copt(\A,k,t)$. The sites communicate a total of $\tilde O((sk+t)B)$ bits of information to the coordinator over $2$ rounds. The runtime on site $i$ is $\tilde O(n_i^2 + n_iT)$, where $T$ is the runtime to compute 1-median, and the runtime on the coordinator is $\tilde O((sk+t)^2)$. The same result holds for the $(k,t)$-median and center-pp problems with larger constants.
\end{theorem}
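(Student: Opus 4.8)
The plan is to reduce the uncertain $(k,t)$-median problem to the \emph{deterministic} distributed $(k,t)$-median problem on the compressed graph $G(\A)$ and then invoke Theorem~\ref{thm:2-round} essentially as a black box. First I would record the structural facts that make the reduction algorithmic: after Line~\ref{alg:uncertainty:1-median}, site $i$ holds the demand vertices $\{p_j\}_{j\in\A_i}$ of $G(\A_i)$ together with the $1$-medians $\{y_j\}_{j\in\A_i}$ and the collapse weights $\ell_j=\E_\sigma[d(\sigma(j),y_j)]$; every distance it needs therefore has the form $d_G(p_j,y_{j'})=\ell_j+d(y_j,y_{j'})$ or $d_G(y_j,y_{j'})=d(y_j,y_{j'})$, i.e.\ one call to the underlying metric oracle plus $O(1)$ work. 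Consequently Algorithm~\ref{alg:involved2a} can be run on $G(\A)$ with the demand set $\{p_j\}$ and candidate centers $\{y_j\}$ without any additional communication for distance evaluation, and since $d_G$ is a shortest-path metric the combining guarantees of Theorem~\ref{thm:p2} and Corollary~\ref{cor-compose}(iii) apply verbatim (preclustering may ignore points; the ignored-point accounting $\sum_i t_i\le 3t$ is already built into Algorithm~\ref{alg:involved2a}).

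For the approximation guarantee I would chain the three ingredients already in place. Applying Lemma~\ref{lem:A=>graph} to the optimum of \eqref{eqn:median-un} gives $\Copt(G(\A),k,t)\le 5\,\Copt(\A,k,t)$. Running Algorithm~\ref{alg:involved2a} with $\rho=2$ on $G(\A)$ and invoking Theorem~\ref{thm:2-round} produces a compressed-graph solution $\sol(G(\A),k,(1+\epsilon)t)$ with $\Csol(G(\A),k,(1+\epsilon)t)\le O(1+1/\epsilon)\,\Copt(G(\A),k,t)\le O(1+1/\epsilon)\,\Copt(\A,k,t)$. Finally Lemma~\ref{lem:graph=>A} turns this into a solution of \eqref{eqn:median-un}: its proof simply reinterprets the compressed-graph assignment $p_j\mapsto\pi(j)\in\PP$ as the assignment $j\mapsto\pi(j)$ and keeps the same outlier set, so the number of outliers remains $(1+\epsilon)t$ and the cost grows by at most a factor $2$. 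Composing, $\Csol(\A,k,(1+\epsilon)t)\le 2\,\Csol(G(\A),k,(1+\epsilon)t)=O(1+1/\epsilon)\,\Copt(\A,k,t)$, and the coordinator outputs this assignment directly.

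For the resource bounds: $G(\A)$ has one demand vertex $p_j$ per input node, so site $i$ runs the preclustering algorithm of Theorem~\ref{thm:alt-median} on $n_i$ demands in $\tilde O(n_i^2)$ time; on top of this, Line~\ref{alg:uncertainty:1-median} computes the $n_i$ pairs $(y_j,\ell_j)$ in $\tilde O(n_iT)$ time, giving a site runtime of $\tilde O(n_i^2+n_iT)$. By Lemma~\ref{lem:goodness} the coordinator receives $O(sk+t)$ weighted points and runs the $\tilde O(\abs{Z}^2)$ algorithm of Theorem~\ref{thm:alt-median} on them, i.e.\ $\tilde O((sk+t)^2)$ time. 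The communication remains $\tilde O(sk+t)$ objects over $2$ rounds, each object now bundling $p_j$ with $y_j$ and $\ell_j$ (still $O(B)$ bits in the encoding), hence $\tilde O((sk+t)B)$ total. The $(k,t)$-means claim follows identically, replacing $1$-medians by $1$-means $y_j'$, using Lemma~\ref{lem:equiv_others}(b) and the $(k,t)$-means version of Theorem~\ref{thm:2-round}; the $(k,t)$-center-pp claim follows the same way using Lemma~\ref{lem:equiv_others}(a) together with the deterministic distributed $(k,t)$-center algorithm of Theorem~\ref{thm:center}.

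I expect the main obstacle to be the interface between the compressed graph and the deterministic framework rather than any single inequality. Three points need care: (i) the preclustering must be run with a genuine facility/demand split (facilities $\{y_j\}$, demands $\{p_j\}$), since replacing a center $y_j$ by its pendant $p_j$ is \emph{not} harmless — one must check that the algorithm of Theorem~\ref{thm:alt-median} (and of~\cite{CKMN01} in the center case) supports distinct facility and demand sets; (ii) $d_G$ must be verified to obey the triangle inequality used by Theorem~\ref{thm:p2}, which it does as a graph shortest-path metric; and (iii) the ``unchanged communication'' claim relies on no site ever needing a distance that touches a $y_{j'}$ or $p_{j'}$ stored at another site during preclustering, which holds because preclustering is purely local. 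Once these are dispatched, the chain Lemma~\ref{lem:A=>graph} $\to$ Theorem~\ref{thm:2-round} $\to$ Lemma~\ref{lem:graph=>A} is mechanical.
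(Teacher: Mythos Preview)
Your proposal is correct and follows essentially the same route as the paper: reduce to the deterministic $(k,t)$-median problem on the compressed graph via Lemmas~\ref{lem:A=>graph} and~\ref{lem:graph=>A}, invoke Theorem~\ref{thm:2-round} (resp.\ Theorem~\ref{thm:center} for center-pp) as a black box, and account for the extra $O(n_iT)$ time and the doubled per-point encoding. You have in fact spelled out more carefully than the paper does the points that make the reduction go through---the local computability of $d_G$, the facility/demand split, and the fact that $d_G$ is a genuine metric---so your write-up is, if anything, a more complete version of the same argument.
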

\begin{proof}
By Lemma~\ref{lem:graph=>A} for the median problem and Lemma~\ref{lem:equiv_others} for the means and center-pp problems, it suffices to show that we can solve the $(k,t)$-median problem on the compressed graph. The result then follows from Theorem~\ref{thm:2-round} and Theorem~\ref{thm:clustering_only} with the following amendments: When a site sends the $t$ or $t_i$ potential outliers, it needs to send the $y_j$ and the corresponding values $\E_{\sigma}[d(\sigma(j),y_j)]$ or $\E_{\sigma}[d^2(\sigma(j),y_j')]$, which at most doubles the communication cost. The runtime is increased by $O(n_iT)$ due to Step~\ref{alg:uncertainty:1-median} since computing $\ell_j$ on the compressed graph takes $O(T)$ time.
\end{proof}

Other results claimed in Table~\ref{tab:our_results} follow from analogous amendments to Theorem~\ref{thm:clustering_only}.

\smallskip{\bf The global $k$-Center case.} We now focus on $(k,t)$-center-g. In this setting $\D_j$'s are independent and we optimize
\[
\min_{\stackrel{K \subseteq \PP, \O \subseteq \A}{|K|\leq k,|\O|\leq t}} \left( \E_{\sigma \sim \prod_j \D_j} \left[ \max_{j\in \A\setminus \O} d(\sigma(j),\pi(j)) \right] \right).
\]

\begin{definition}[Truncated distance \cite{GM09}]
For $\tau\geq 0$, define $\LL_\tau:\PP\times \PP\to \R$ as $\LL_{\tau}(u,v) = \max\{d(u,v) - \tau, 0 \}$ and $\rho_\tau:\A\times \PP \to \R$ as $\rho_\tau(j, u) = \E_{\sigma}[\LL_{\tau}(\sigma(j),u)]$. Note that $\LL_{\tau}(\cdot,\cdot)$ is not a metric for $\tau >0$.
\end{definition}

\begin{definition} 
Given a node set $Z\subseteq \A$, let $\PP(Z) \subseteq \PP$ be 
the associated point set corresponding to possible realizations of nodes in $Z$. 
Let $\sol(Z,k,t,\rho_\tau)$ and $\opt(Z,k,t,\rho_\tau)$ be a solution by algorithm and the global optimum solution respectively to the
$(k,t)$-median problem on node set $Z$ where the centers are restricted to {$\PP(Z)$} and the weighted assignment 
cost of assigning node $j \in Z$ to center $m\in\PP(Z)$ is $\rho_\tau(j,m)$. The costs $\Csol(Z,k,t,\rho_\tau)$ and $\Copt(Z,k,t,\rho_\tau)$ are defined analogously.
\end{definition}

Let $d_{\min}$ and $d_{\max}$ denote the minimum and the maximum distance, respectively, between two distinct points in $\PP$ and let $\Delta = d_{\max}/d_{\min}$. The algorithm is presented in Algorithm~\ref{alg:center-g}. 
\begin{algorithm}[H]
\begin{algorithmic}[1]
\State All parties compute $d_{\min}$ and $d_{\max}$
\State Each party creates $\mathbb{T} = \{2^i d_{\min}/18: 0\leq i\leq \lceil \log_2\Delta\rceil + 2\}$
\For{each $\tau\in \mathbb{T}$}
\State All parties run Algorithm~\ref{alg:involved2b} with the following changes: when it calls Algorithm~\ref{alg:involved2a} as a subroutine, $\sol(\A_i,2k,q)$ in Algorithm~\ref{alg:involved2a} is replaced with $\sol(\A_i,2k,q,\rho_{6\tau})$ and the sites obtain the numbers of local outliers $\{t_i(\tau)\}$ \label{alg:g:main}
\EndFor
\State Coordinator finds $\hat\tau\!=\!\min\{\tau\!\in\!\mathbb{T}:\!\sum_i \Csol(\A_i, 2k, t_i(\tau), \rho_{6\tau})\!\leq\!12\tau\}$ \label{alg:g:tau_star}
\State Coordinator solves $(k,t)$-center-g on the preclustering solutions $\sol(\A_i, 2k, t_i(\hat\tau),\rho_{6\tau})$ and outputs $\sol(\A,k,(1+\epsilon)t)$.
\end{algorithmic}
\caption{Algorithm for $(k,t)$-center-g}
\label{alg:center-g}
\end{algorithm}

Now we try to analyze the performance of Algorithm~\ref{alg:center-g}. We first show an analogy of Theorem~\ref{thm:alt-median} that we can compute a constant approximation to $\Copt(Z,k,t,\rho_\tau)$. The proof is omitted.  

\begin{lemma}
\label{needthis2}\label{NEEDTHIS2}
Let $\tau \geq 0$. For the $(k,t)$-center problem on $Z$, we can compute in $\tilde O((k+t)|Z|)$ time $\sol(Z,k,(1+\epsilon)t,\rho_{9\tau})$ or $\sol(Z,(1+\epsilon)k,t,\rho_{3\tau})$ such that
\begin{gather*}
\Csol(Z,k,(1+\epsilon)t,\rho_{9\tau})\leq \max\{6,6/\epsilon\}\cdot\Copt(Z,k,t,\rho_\tau)\\
\Csol(Z,(1+\epsilon)k,t,\rho_{3\tau})\leq \max\{6,6/\epsilon\}\cdot\Copt(Z,k,t,\rho_\tau)
\end{gather*}
\end{lemma}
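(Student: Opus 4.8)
\textbf{Proof proposal for Lemma~\ref{needthis2}.} The plan is to reduce the $(k,t)$-center problem under the truncated cost $\rho_\tau$ to the $(k,t)$-median problem under the genuine metric $d$ on the realization point set $\PP(Z)$, so that Theorem~\ref{thm:alt-median} can be invoked almost as a black box, and then translate the bicriteria guarantees back. The key observation is that $\LL_\tau$ is pointwise dominated by $d$, i.e. $\LL_\tau(u,v)\le d(u,v)$, and conversely $d(u,v)\le \LL_\tau(u,v)+\tau$; taking expectations over $\sigma$ this gives, for every node $j$ and center $m$, $\rho_\tau(j,m)\le \widehat d(j,m)$ and $\widehat d(j,m)\le \rho_\tau(j,m)+\tau$, where $\widehat d(j,m)=\E_\sigma[d(\sigma(j),m)]$. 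Thus the two cost functions differ additively by at most $\tau$ per clustered node, which is exactly the slack one can absorb by a careful choice of the truncation parameter in the conclusion (the $\rho_{9\tau}$ versus $\rho_\tau$ and $\rho_{3\tau}$ versus $\rho_\tau$ asymmetry).

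First I would run the $(k,t)$-median algorithm of Theorem~\ref{thm:alt-median} on $Z$ with distance function $\widehat d$ (equivalently, on the implicit point set $\PP(Z)$ with the metric $d$, treating node $j$ as a demand that pays $\widehat d(j,m)$ to center $m$ — this is exactly the uncertain-to-deterministic reduction already used earlier, and the runtime is $\tilde O(|Z|^2)$ in that formulation, but here we can be cleverer: since $k$-center only needs Gonzalez-type witnesses, one actually runs the $\tilde O((k+t)|Z|)$-time center algorithm of \cite{CKMN01}/Gonzalez, which is why the claimed runtime is $\tilde O((k+t)|Z|)$ rather than quadratic). This produces a solution excluding $(1+\epsilon)t$ nodes (resp. using $(1+\epsilon)k$ centers) whose max cost under $\widehat d$ is at most $\max\{6,6/\epsilon\}\cdot\Copt(Z,k,t,\widehat d)$. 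Now I compare optima: on one hand $\Copt(Z,k,t,\widehat d)\le \Copt(Z,k,t,\rho_\tau)+\tau$ is false in general for the $\max$ objective unless we are careful, so instead the right move is to note $\Copt(Z,k,t,d\text{-center on }\PP(Z))$ — where I literally use $\LL_0=d$ — relates to $\Copt(Z,k,t,\rho_\tau)$ via a scaling: if $\tau$ is comparable to the optimal center radius then $\rho_{c\tau}$ for the right constant $c$ makes the truncated cost of the output a constant-factor relaxation. The constants $9$ and $3$ in the statement are exactly tuned so that (output truncation level) $\ge$ (algorithm's blow-up factor) $\times$ (optimal truncation level) after accounting for the $+\tau$ additive slack incurred each time the triangle inequality is applied to pass between realizations and centers.

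Concretely, the inner argument I would carry out is: let $\opt(Z,k,t,\rho_\tau)$ have centers $K^*$, outliers $\O^*$, radius $R=\Copt(Z,k,t,\rho_\tau)$. Build a deterministic $(k,t)$-center instance on $\PP(Z)$ by collapsing each node $j\notin\O^*$ to its $\rho_\tau$-optimal assigned center or to a representative realization; show this instance has an optimal $d$-radius $O(R+\tau)$. Run the deterministic bicriteria center algorithm (which is where the $\tilde O((k+t)|Z|)$ comes from); get centers $\widehat K$, excluding $(1+\epsilon)t$ (resp. using $(1+\epsilon)k$) with $d$-radius $\le 6\max\{1,1/\epsilon\}\cdot O(R+\tau)$. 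Finally, bound $\rho_{9\tau}(j,\widehat K)=\E_\sigma[\max\{d(\sigma(j),\widehat K)-9\tau,0\}]$: using $d(\sigma(j),\widehat K)\le d(\sigma(j),\pi^*(j))+d(\pi^*(j),\text{collapsed pt})+d(\text{collapsed pt},\widehat K)$ and the fact that $\E_\sigma[\max\{d(\sigma(j),\pi^*(j))-\tau,0\}]\le R$, the excess beyond $9\tau$ telescopes down to $\le 6R$ (resp. the $\rho_{3\tau}$ version when we instead relax $k$ and thus incur fewer triangle-inequality hops). The bookkeeping of which additive $\tau$'s get absorbed into which multiple is the one place requiring care.

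\textbf{Main obstacle.} The delicate point is that $\LL_\tau$ is \emph{not} a metric, so the triangle inequality can only be used on the underlying $d$ and every such use leaks an additive $\tau$; the whole content of the lemma is arranging the reduction so that at most a bounded number of such leaks occur and they are exactly covered by the gap between the input truncation $\tau$ and the output truncations $9\tau$ and $3\tau$. Getting these constants to close — and in particular explaining why relaxing the number of centers needs only $3\tau$ while relaxing the number of outliers needs $9\tau$ — is the crux; the rest (invoking Theorem~\ref{thm:alt-median} / the linear-time center algorithm, the runtime accounting) is routine.
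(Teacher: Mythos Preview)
Your proposal has a genuine gap, and it stems from a misreading of the objective. By the definition immediately preceding the lemma, $\sol(Z,k,t,\rho_\tau)$ and $\Copt(Z,k,t,\rho_\tau)$ refer to the $(k,t)$-\emph{median} problem on $Z$ with assignment cost $\rho_\tau(j,m)=\E_\sigma[\LL_\tau(\sigma(j),m)]$; the word ``center'' in the lemma statement refers to the ambient center-g problem being approximated, not to the subproblem's objective. Your reduction to the untruncated metric $\widehat d$ (or $d$) therefore fails: the inequality $\widehat d(j,m)\le \rho_\tau(j,m)+\tau$ holds per assignment, so comparing optima gives only $\Copt(Z,k,t,\widehat d)\le \Copt(Z,k,t,\rho_\tau) + (|Z|-t)\tau$, and the additive $(|Z|-t)\tau$ term cannot be absorbed by any bounded blow-up of the truncation threshold. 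Running Gonzalez or a black-box center routine on $\widehat d$ does not help either, since the quantity you must control is a sum, not a max.

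The paper's approach is structurally different. It reruns the Jain--Vazirani primal-dual algorithm (as in Theorem~\ref{thm:alt-median}) directly with the non-metric costs $\rho_\tau$; the algorithm itself never uses the triangle inequality, only the rounding analysis does. At that point one invokes the pseudo-triangle inequality from \cite[Lemma~4.1]{GM09},
\[
\rho_\tau(j,m')+\rho_\tau(i,m')+\rho_\tau(i,m)\ \ge\ \rho_{3\tau}(j,m),
\]
so every three-hop rerouting in the rounding is bounded under $\rho_{3\tau}$ rather than $\rho_\tau$. In the $(1+\epsilon)k$ case one keeps the union of both solutions' centers, so each point is rerouted over at most $3$ hops and the output is measured under $\rho_{3\tau}$. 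In the $(1+\epsilon)t$ case one must produce exactly $k$ centers via the pairing-and-random-choice rounding of \cite{JV01}; a point whose large-solution center is dropped is sent first to that center ($3$ hops) and then along the pairing edge to the corresponding small-solution center (another $6$ hops), for $9$ hops total, hence $\rho_{9\tau}$. Your ``leak an additive $\tau$ per triangle-inequality use'' intuition is exactly right, but the leaks must be counted \emph{per rerouted point inside the rounding}, not in a global comparison of optima under $d$ versus $\rho_\tau$.
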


We next show that the $\hat\tau$ computed in Step~\ref{alg:g:tau_star} is a good choice of $\tau$ and will ensure that the preclustering solutions $\sol(\A_i,2k,t_i(\hat\tau),\rho_{2\hat\tau})$ can be combined to yield a good global solution. Specifically we have the following two lemmas.

\begin{lemma}\label{lem:implement}
The $\hat\tau$ computed in Step~\ref{alg:g:tau_star} satisfies the following two conditions.
\begin{enumerate}[(i)]\parskip=0in
\item  $\sum_i \Csol(\A_i,2k,t_i(\hat\tau),\rho_{6\hat\tau}) \leq 12\hat\tau$;
\item  $\sum_i \Copt(\A_i,k,t_i',\rho_{2\hat\tau}) \geq 2\hat\tau$ for all $\{t_i'\}$ s.t. $\sum_i t_i' \leq t$,
\end{enumerate}
\end{lemma}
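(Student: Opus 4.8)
The plan is to handle the two parts separately. Part~(i) is essentially a well-definedness statement: one must check that the set over which $\hat\tau$ is defined as a minimum in Step~\ref{alg:g:tau_star} is nonempty, and then (i) is immediate. Part~(ii) will follow from the minimality of $\hat\tau$ in that set, combined with a ``goodness of local solutions'' analysis of the $\tau$-run of Algorithm~\ref{alg:involved2a} that transplants the proof of Lemma~\ref{lem:goodness} to the truncated ``distances'' $\rho_\tau$, which satisfy the relaxed triangle inequality $\LL_{2\tau}(u,w)\le\LL_\tau(u,v)+\LL_\tau(v,w)$.

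For part~(i), let $\tau^\star=2^{\lceil\log_2\Delta\rceil+2}d_{\min}/18$ be the largest element of $\mathbb{T}$. Then $\tau^\star\ge 4\Delta\, d_{\min}/18=2d_{\max}/9$, so $6\tau^\star>d_{\max}$, which dominates every pairwise distance in $\PP$. Hence $\LL_{6\tau^\star}\equiv 0$, so $\rho_{6\tau^\star}(j,u)=\E_{\sigma}[\LL_{6\tau^\star}(\sigma(j),u)]=0$ for all $j,u$, and every preclustering solution computed in the $\tau^\star$-run has zero cost. Therefore $\sum_i\Csol(\A_i,2k,t_i(\tau^\star),\rho_{6\tau^\star})=0\le 12\tau^\star$, i.e.\ $\tau^\star$ satisfies the defining condition; thus $\hat\tau$ exists, and part~(i) holds by the definition of $\hat\tau$ as the minimum of that set.

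For part~(ii), set aside the boundary case $\hat\tau=\min\mathbb{T}$, for which the claim is checked separately using $d_{\min}$. Otherwise $\tau^-:=\hat\tau/2\in\mathbb{T}$, so by minimality of $\hat\tau$ the condition fails at $\tau^-$:
\[
\sum_i\Csol(\A_i,2k,t_i(\tau^-),\rho_{6\tau^-}) \;>\; 12\tau^-=6\hat\tau.
\]
Now fix any split $\{t_i'\}$ with $\sum_i t_i'\le t$ and mimic the proof of Lemma~\ref{lem:goodness} for the $\tau^-$-run. Round up to the grid: $\hat t_i:=\min\{q\in\I:q\ge t_i'\}\le 2t_i'$ and $\hat t_i\le t$, so $\sum_i\hat t_i\le 2t$ is feasible for the water-filling of Lemma~\ref{lem:waterfilling}; by Lemma~\ref{lem:waterfilling} and the exceptional-site correction (which only decreases the $f_i$-value, by monotonicity of $f_i$, exactly as in Lemma~\ref{lem:goodness}),
\[
\sum_i\Csol(\A_i,2k,t_i(\tau^-),\rho_{6\tau^-})\;\le\;\sum_i f_i(\hat t_i)\;\le\;\sum_i\Csol(\A_i,2k,\hat t_i,\rho_{6\tau^-}).
\]
The preclustering solution $\sol(\A_i,2k,\cdot,\rho_{6\tau^-})$ at site $i$ is exactly the centers-relaxed solution of Lemma~\ref{needthis2} (with $\epsilon=1$), so $\Csol(\A_i,2k,\hat t_i,\rho_{6\tau^-})\le 6\,\Copt(\A_i,k,\hat t_i,\rho_{2\tau^-})\le 6\,\Copt(\A_i,k,t_i',\rho_{2\tau^-})$, using outlier-monotonicity of $\Copt$ and $\hat t_i\ge t_i'$. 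Summing over $i$, combining with the displayed strict inequality, and passing from $\rho_{2\tau^-}$ to $\rho_{2\hat\tau}$ via the relaxed triangle inequality for $\LL_\tau$ together with the truncated analogues of Lemmas~\ref{lem1} and~\ref{lem1-center}, one obtains $\sum_i\Copt(\A_i,k,t_i',\rho_{2\hat\tau})\ge 2\hat\tau$. Since $\{t_i'\}$ was arbitrary, part~(ii) follows.

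The step requiring real care --- and the one I expect to be the main obstacle --- is the last: correctly threading the truncation parameter through all the doublings. The $\mathbb{T}$-grid has ratio $2$; the preclustering at $\tau$ uses $\rho_{6\tau}$; Lemma~\ref{needthis2} introduces a further factor of $3$ between the computed solution's truncation and that of the optimum it is compared against; and relating local optima at one truncation level to those at the next costs an application of the relaxed triangle inequality for $\LL_\tau$. One must verify that these compose to land exactly on $\rho_{6\hat\tau}$ with constant $12$ for (i) and on $\rho_{2\hat\tau}$ with constant $2$ for (ii), rather than on nearby parameters. Everything else --- water-filling optimality, the exceptional-site correction, outlier-monotonicity of $\Copt$, and the $\rho_\tau$-versions of Lemmas~\ref{lem1}/\ref{lem1-center} --- is a transcription of arguments already established in Sections~\ref{sec:dist5} and~\ref{sec:center1}.
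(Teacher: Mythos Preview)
Your treatment of part~(i) is identical to the paper's.

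For part~(ii) your approach diverges, and the step you yourself flag as the main obstacle is a genuine gap. Working at the predecessor $\tau^-=\hat\tau/2$ and applying Lemma~\ref{needthis2} there lands you on $\sum_i\Copt(\A_i,k,t_i',\rho_{2\tau^-})=\sum_i\Copt(\A_i,k,t_i',\rho_{\hat\tau})$, not on $\rho_{2\hat\tau}$. The proposed ``passing from $\rho_{2\tau^-}$ to $\rho_{2\hat\tau}$'' fails: since $\LL_{2\hat\tau}\le\LL_{\hat\tau}$ pointwise we have $\rho_{2\hat\tau}\le\rho_{\hat\tau}$ and therefore $\Copt(\cdot,\rho_{2\hat\tau})\le\Copt(\cdot,\rho_{\hat\tau})$; a lower bound at truncation level $\hat\tau$ says nothing about the larger truncation $2\hat\tau$. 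The relaxed triangle inequality $\LL_{2\tau}(u,w)\le\LL_\tau(u,v)+\LL_\tau(v,w)$ and Lemma~\ref{lem1-center-un} both move truncation levels in the \emph{opposite} direction --- they upper-bound a coarser-truncated quantity by a finer-truncated one at the price of an extra hop --- so they cannot reverse this monotonicity.

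The paper sidesteps this entirely by never leaving level $\hat\tau$: it performs the water-filling comparison at $\hat\tau$ itself (not at $\tau^-$), obtaining $\sum_i\Csol(\A_i,2k,t_i',\rho_{6\hat\tau})\ge\sum_i\Csol(\A_i,2k,t_i(\hat\tau),\rho_{6\hat\tau})$ for any split with $\sum_i t_i'\le t<\rho t$, and then invokes Lemma~\ref{needthis2} with $\epsilon=1$ and parameter $2\hat\tau$. The built-in factor~$3$ of that lemma then converts $\rho_{6\hat\tau}$ on the $\Csol$ side into $\rho_{2\hat\tau}$ on the $\Copt$ side in one step, with no after-the-fact shift of truncation levels and without your rounding $t_i'\mapsto\hat t_i\in\I$. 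Your instinct that the constants must line up exactly is correct; the way to make them do so is to choose the level at which Lemma~\ref{needthis2} is applied so that its factor of~$3$ does the threading for you.
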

\begin{proof}
Note that $\tau_{\max} = \max \mathbb{T} > d_{\max}/6$, it always holds that $\rho_{6\tau_{\max}} = 0$. Thus the condition $\sum_i \Csol(\A_i,2k,t_i(\tau_{\max}),\rho_{6\tau_{\max}})\leq 12\tau_{\max}$ holds, and $\hat\tau$ exists and satisfies condition (i).

Next we show that condition (ii) holds. Let $\{t_i'\}$ be an arbitrary sequence satisfying that $\sum_i t_i' \leq t$. Similarly to the proof of Lemma~\ref{lem:waterfilling}, one can show that $\sum_i \Csol(\A_i, 2k, t_i', \rho_{6\hat\tau})\geq \sum_i \Csol(\A_i, 2k, t_i(\hat\tau), \rho_{6\hat\tau})$, using the fact that $\sum_i t_i'\leq t < \rho t = \sum_i t_i$. Combining with Lemma~\ref{needthis2} with $\epsilon = 1$, we have that
\begin{multline*} 6 \sum_i \Copt(\A_i,k,t_i',\rho_{2\hat\tau}) \geq 
\sum_i \Csol(\A_i,2k,t_i',\rho_{6\hat\tau}) \\
\geq \sum_i \Csol(\A_i,2k,t_i(\hat\tau),\rho_{6\hat\tau}) \geq 12\hat\tau,
\end{multline*}
whence condition (ii) follows.
\end{proof}

\begin{lemma}\label{lem:combine_global}
Suppose that $\hat\tau$ satisfies the condition (i) and (ii) of Lemma~\ref{lem:implement}, a $\gamma$-approximation of the weighted center-g problem induced by preclustering $\sol(\A_i,2k,t_i(\hat\tau),\rho_{6\hat\tau})$ is an $O(\gamma)$ approximation of $\Copt(\A,k,t)$.
\end{lemma}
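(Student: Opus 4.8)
The plan is to use the truncated distance function as a bridge between the global $(k,t)$-center-g objective and the $(k,t)$-median objective with cost function $\rho_\tau$, exactly in the spirit of \cite{GM09}, and then layer our preclustering framework (Theorem~\ref{thm:p2} / Corollary~\ref{cor-compose}) on top. First I would recall the key estimate relating the two objectives: for a fixed assignment $\pi$ and a fixed outlier set $\O$, if we write $\mathrm{cost}_G(\pi,\O) = \E_\sigma[\max_{j\notin\O} d(\sigma(j),\pi(j))]$ for the global cost and $\mathrm{cost}_\tau(\pi,\O) = \sum_{j\notin\O}\rho_\tau(j,\pi(j))$ for the truncated-median cost, then on one hand $\mathrm{cost}_G(\pi,\O) \le \tau + \mathrm{cost}_\tau(\pi,\O)$ (because $d \le \tau + \LL_\tau(d)$ pointwise, and $\max \le \sum$ over the nonnegative excesses), and on the other hand, if $\mathrm{cost}_\tau(\pi,\O)$ is large then $\mathrm{cost}_G(\pi,\O)$ is at least a constant times $\tau$ — more precisely, if there is a node $j$ with $\rho_\tau(j,\pi(j))$ comparable to $\tau$, or the sum over many nodes is comparable to $\tau$, then with constant probability some realized distance exceeds $2\tau$. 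These are the standard Markov/averaging arguments from \cite{GM09}; I would state them as the two directions I need and cite that paper.

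Next I would set up the chain of inequalities. Let $\Copt(\A,k,t)$ denote the true optimum of $(k,t)$-center-g, achieved by some $K^\ast, \O^\ast, \pi^\ast$, and let $\tau^\ast$ be (a power-of-two rounding of) $\Copt(\A,k,t)$, so that $\tau^\ast \in \mathbb{T}$ and $\tau^\ast = \Theta(\Copt(\A,k,t))$. The first step is to show $\hat\tau = O(\tau^\ast)$: using condition (ii) of Lemma~\ref{lem:implement} in the contrapositive together with Lemma~\ref{lem1-center}-type decomposition of the optimal solution into per-site solutions $\opt(\A_i,k,t_i^\ast,\rho_{c\tau^\ast})$, I get that for $\tau$ a constant factor above $\tau^\ast$ the per-site optimum costs $\sum_i \Copt(\A_i,k,t_i^\ast,\rho_{2\tau})$ are small (each is $O(\tau^\ast)$ by the first-direction estimate applied locally, since the global optimum restricted to $\A_i$ has global cost $\le \Copt(\A,k,t)$), hence below the $2\hat\tau$ threshold would be violated unless $\hat\tau \le $ that $\tau$; this pins $\hat\tau \le C\tau^\ast$ for an absolute constant $C$. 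The second step uses condition (i): $\sum_i \Csol(\A_i,2k,t_i(\hat\tau),\rho_{6\hat\tau}) \le 12\hat\tau = O(\tau^\ast)$, so the preclustering solutions are cheap in the $\rho_{6\hat\tau}$-median sense. Now invoke Corollary~\ref{cor-compose}(iii): combining the per-site solutions and running a $\gamma$-approximate weighted $(k,t)$-center-g (equivalently, truncated-median) algorithm on the aggregate yields a global solution whose $\rho_{O(\hat\tau)}$-median cost is $O(\gamma)\cdot(\sum_i \Csol + \Copt(\M,k,t,\rho)) = O(\gamma \tau^\ast)$ — here I need the composition theorem to hold for the truncated cost function, which it does because the argument in Theorem~\ref{thm:p2} only uses the (relaxed) triangle inequality and $\LL_\tau$ satisfies $\LL_\tau(x,z)\le \LL_\tau(x,y)+\LL_\tau(y,z)+\tau$-type slack that is absorbed into the constant and into increasing the truncation parameter. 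Finally, converting the resulting $\rho_{O(\hat\tau)}$-median cost of $O(\gamma\tau^\ast)$ back to the global objective via the first-direction estimate $\mathrm{cost}_G \le O(\hat\tau) + O(\gamma\tau^\ast) = O(\gamma)\Copt(\A,k,t)$ gives the claim, with the outlier count growing by the $(1+\epsilon)$ factor coming from Lemma~\ref{needthis2} and the preclustering-aggregation slack as in Theorem~\ref{thm:2-round}.

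The main obstacle I anticipate is the bookkeeping of truncation parameters: each reduction step (local approximation via Lemma~\ref{needthis2}, composition via Corollary~\ref{cor-compose}, and the two conversions between global and truncated-median cost) changes $\tau$ by a constant multiplicative factor, and I must verify that the geometric grid $\mathbb{T}$ is fine enough (spacing factor $2$, with the extra ``$+2$'' in the exponent range) that some grid point $\hat\tau$ simultaneously (a) is large enough that the optimal truncated cost at scale $\Theta(\hat\tau)$ is genuinely small — this is where condition (ii) bites — and (b) is small enough, $\hat\tau = O(\Copt(\A,k,t))$, that the final bound is a true constant-factor (times $\gamma$) approximation. Tracking these constants so they compose correctly, and checking that the relaxed triangle inequality for $\LL_\tau$ (with its additive $\tau$ defect) does not blow up through the composition, is the delicate part; everything else is an application of results already established (Lemmas~\ref{lem:implement}, \ref{needthis2}, \ref{lem1-center} and Corollary~\ref{cor-compose}).
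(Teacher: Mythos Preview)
Your overall plan is the right one, and the second half of your argument (bounding the preclustering movement via condition~(i) and then composing, followed by converting back to the center-g objective) is close to what the paper does. The paper phrases it as a direct ``collapse/expand'' argument on the center-g objective itself: moving each clustered node to its precluster center raises the expected maximum by at most $6\hat\tau + \sum_i \Csol(\A_i,2k,t_i,\rho_{6\hat\tau})$, which is $O(\hat\tau)$ by condition~(i); this avoids routing through truncated-median composition and the attendant relaxed-triangle-inequality bookkeeping you worried about, but your detour through Corollary~\ref{cor-compose} would also go through.

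The genuine gap is in your first step, where you try to establish $\hat\tau = O(\Copt(\A,k,t))$. You attempt to show that for $\tau$ somewhat above $\tau^\ast$ the per-site quantities $\Copt(\A_i,k,t_i^\ast,\rho_{2\tau})$ are small, citing your ``first-direction estimate'' $\mathrm{cost}_G \le \tau + \mathrm{cost}_\tau$. That inequality is in the wrong direction: it upper-bounds the global (max) cost in terms of the truncated-median (sum) cost, not the reverse. There is in fact no bound of the form ``small center-g cost implies small $\rho_\tau$-median cost'' without losing a factor of $n$ (or $s$, after summing per-site), precisely because a sum can vastly exceed a max. So even if each site contributed $O(\tau^\ast)$, the sum would only be $O(s\tau^\ast)$, and your contradiction with the $2\hat\tau$ threshold of condition~(ii) would yield $\hat\tau = O(s\tau^\ast)$, not $O(\tau^\ast)$.

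The paper fixes this by running the chain in the opposite direction. Apply condition~(ii) at $t_i' = t_i^\ast$ to get $\sum_i \Copt(\A_i,k,t_i^\ast,\rho_{2\hat\tau}) \ge 2\hat\tau$; combine with the truncated-distance analogue of Lemma~\ref{lem1-center} (using $\LL_\tau(u_1,u_2)+\LL_\tau(u_2,u_3)\ge \LL_{2\tau}(u_1,u_3)$) to obtain $\Copt(\A,k,t,\rho_{\hat\tau}) \ge \hat\tau$; and then invoke the GM09 lemma that $\Copt(Z,k,t,\rho_\tau)\ge \tau$ implies $\Copt(Z,k,t)\ge \tau/3$. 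This gives $\Copt(\A,k,t)\ge \hat\tau/3$ directly, which is exactly the inequality you need and which your ``second direction'' Markov/averaging remark was gesturing at---you just need to apply it to the global instance rather than trying to bound the local truncated costs from above.
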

To prove this lemma, we need the following two auxiliary lemmas.
\begin{lemma}
\label{lem1-center-un}
$2 \Copt(\A,k,t,\rho_{\tau}) \geq \sum_i \Copt(\A_i,k,t_i^\ast,\rho_{2\tau})$, where $t_i^\ast$ is the number of ignored nodes from $\A_i$ in the global optimum solution $\opt(\A,k,t,\rho_{\tau})$.
\end{lemma}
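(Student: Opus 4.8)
\textbf{Proof proposal for Lemma~\ref{lem1-center-un}.}

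The plan is to mimic the argument used in Lemma~\ref{lem1} and Lemma~\ref{lem1-center}, but now carefully tracking how the truncation parameter interacts with the triangle inequality. Let $\pi_{\opt}$ and $K$ be the assignment function and center set of the global optimum $\opt(\A,k,t,\rho_\tau)$, and let $\O^\ast$ be the set of ignored nodes, with $t_i^\ast = |\O^\ast \cap \A_i|$. For each site $i$ I would build a candidate solution for the $(k,t_i^\ast)$-median problem on $\A_i$ (with centers restricted to $\PP(\A_i)$ and assignment cost $\rho_{2\tau}$) by ignoring exactly the nodes in $\O^\ast\cap\A_i$ and, for each optimum center $c\in K$, choosing a surrogate center in $\PP(\A_i)$ close to $c$. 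The natural choice of surrogate is a realization point of some node that the optimum assigns to $c$; if no node of $\A_i$ is assigned to $c$ we simply do not use that center.

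The key step is the cost bound for a single retained node $j\in \A_i\setminus\O^\ast$. Writing $c=\pi_{\opt}(j)\in K$ and $c'\in\PP(\A_i)$ for the surrogate center chosen for $c$, I want to show that for every realization $\sigma(j)$,
\[
\LL_{2\tau}(\sigma(j),c') \le \LL_\tau(\sigma(j),c) + \LL_\tau(\sigma(j'),c)
\]
where $j'$ is a node defining the surrogate $c'=\sigma_0(j')$ for some fixed realization; then taking expectations over the independent realizations of $j$ and $j'$ gives $\rho_{2\tau}(j,c') \le \rho_\tau(j,c) + \E_{\sigma}[\LL_\tau(\sigma(j'),c)]$. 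The pointwise inequality should follow from $d(\sigma(j),c') \le d(\sigma(j),c) + d(c,c')$ together with the elementary fact that $\max\{a+b-2\tau,0\}\le \max\{a-\tau,0\}+\max\{b-\tau,0\}$ for $a,b\ge 0$, applied with $a=d(\sigma(j),c)$ and $b=d(c,\sigma_0(j'))$. To make the second term summable I would pick, as the surrogate's defining node, the node $j'$ assigned to $c$ that minimizes $\E_\sigma[\LL_\tau(\sigma(j'),c)]$ over all such nodes in $\A_i$ (or, more cleanly, argue that summing the surrogate contributions over all nodes of $\A_i$ assigned to $c$ is dominated by $\sum_{j\in\A_i,\pi_{\opt}(j)=c}\rho_\tau(j,c)$), so that after summing over $j\in\A_i\setminus\O^\ast$ and then over $i$ each $\rho_\tau(j,\pi_{\opt}(j))$ term is charged at most twice, yielding $\sum_i\Copt(\A_i,k,t_i^\ast,\rho_{2\tau}) \le \sum_i\Csol(\A_i,k,t_i^\ast,\rho_{2\tau}) \le 2\sum_{j\in\A\setminus\O^\ast}\rho_\tau(j,\pi_{\opt}(j)) = 2\Copt(\A,k,t,\rho_\tau)$.

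The main obstacle is the bookkeeping around the surrogate centers: unlike the pure metric case in Lemma~\ref{lem1}, here $\argmin_{u\in\A_i}d(u,c)$ is not directly available (the nodes are distributions, not points), so one must commit to realization points and ensure that the "cost of moving the center" is itself charged against the optimum's assignment cost rather than appearing as an uncontrolled additive term — and this charging must respect the truncation, which is exactly where the subadditivity fact $\max\{a+b-2\tau,0\}\le\max\{a-\tau,0\}+\max\{b-\tau,0\}$ does the work and forces the parameter to double from $\tau$ to $2\tau$. Once that pointwise inequality is in hand, the remainder is a routine summation as in the earlier lemmas.
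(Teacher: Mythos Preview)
Your approach is the same as the paper's: reuse the global outlier set, replace each global center $c\in K$ by a surrogate in $\PP(\A_i)$, and control the per-node cost via the pseudo-triangle inequality $\LL_\tau(u_1,u_2)+\LL_\tau(u_2,u_3)\ge \LL_{2\tau}(u_1,u_3)$ before averaging over realizations. That inequality is exactly the ``observation'' the paper invokes, and your identification of it as the reason the truncation parameter doubles is correct.

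There is, however, a genuine muddle in your handling of the surrogate center. You declare $c'=\sigma_0(j')$ for a \emph{fixed} realization $\sigma_0$, yet your pointwise bound carries $\LL_\tau(\sigma(j'),c)$ on the right and you then ``take expectation over the realizations of $j'$'' to reach $\rho_{2\tau}(j,c')\le \rho_\tau(j,c)+\E_\sigma[\LL_\tau(\sigma(j'),c)]$. These two moves are inconsistent: if $c'$ is a fixed point, the second term on the right must be the constant $\LL_\tau(c',c)=\LL_\tau(\sigma_0(j'),c)$, and for an arbitrary $\sigma_0$ that constant need not be bounded by $\rho_\tau(j',c)$; if instead $c'$ is random, the left-hand side should read $\E_{c'}[\rho_{2\tau}(j,c')]$ and you still owe a probabilistic-method step to extract a single deterministic center set. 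Two clean fixes: (a) after selecting $j'=\argmin_{j''}\rho_\tau(j'',c)$, choose $c'\in\supp(\D_{j'})$ with $\LL_\tau(c',c)\le \rho_\tau(j',c)$, which exists because $\rho_\tau(j',c)$ is an average of $\LL_\tau(\cdot,c)$ over $\supp(\D_{j'})$; or (b) dispense with $j'$ entirely and set $c'=\argmin_{p\in\PP(\A_i)} d(p,c)$, which is realization-independent and satisfies $d(c,c')\le d(c,\sigma(j))$ for every $\sigma$ and every $j\in\A_i$, whence $\LL_{2\tau}(\sigma(j),c')\le 2\LL_\tau(\sigma(j),c)$ holds pointwise and one simply averages. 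Option~(b) is the natural reading of the paper's ``mimic Lemma~\ref{lem1} for each realization, then take the expectation'', and it sidesteps the bookkeeping you flagged as the main obstacle.
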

\begin{proof}
Fix a realization of the nodes.
The proof mimics Lemma~\ref{lem1} for each realization. It then uses the observation that
$\LL_{\tau}(u_1,u_2)+\LL_{\tau}(u_2,u_3) \geq \LL_{2\tau}(u_1,u_3)$ and takes the expectation. 
\end{proof}

\begin{lemma}  \label{yikeslemma}
If $\Copt(Z, k,t,\rho_{\tau}) \geq \tau$ then $\Copt(Z,k,t) \geq \tau/3$.
\end{lemma}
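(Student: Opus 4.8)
The plan is to prove the contrapositive: if there is a feasible solution to the uncertain global objective ($(k,t)$-center-g) on $Z$ of cost $<\tau/3$, then one can exhibit a feasible solution of the truncated problem of cost $<\tau$, i.e.\ $\Copt(Z,k,t,\rho_\tau)<\tau$. I read $\Copt(Z,k,t,\rho_\tau)$ as the $(k,t)$-center objective $\min_{|K|\le k,\,|\O|\le t}\max_{j\in Z\setminus\O}\rho_\tau(j,\pi(j))$ with centers restricted to $\PP(Z)$ (matching the wording of Lemma~\ref{needthis2}), and $\Copt(Z,k,t)$ as the corresponding center-g cost (matching its use in Lemma~\ref{lem:combine_global}); the argument is self-contained and uses nothing from the preceding lemmas.

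Concretely, suppose $(K,\O,\pi)$ has $|K|\le k$, $|\O|\le t$, and $\E_\sigma[\max_{j\in Z\setminus\O}d(\sigma(j),\pi(j))]<\tau/3$. The centers $K$ need not lie in $\PP(Z)$, so I first relocate them. For every $m\in K$ serving at least one non-outlier, fix such a node $j_m\in Z\setminus\O$; since the realizations of $j_m$ all lie in $\PP(Z)$ and $\min_\sigma d(\sigma(j_m),m)\le\E_\sigma[d(\sigma(j_m),m)]=\widehat d(j_m,m)\le\E_\sigma[\max_{j\in Z\setminus\O}d(\sigma(j),\pi(j))]<\tau/3$, there is $m'\in\PP(Z)$ with $d(m,m')<\tau/3$. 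Put $K'=\{m'\}$, set $\pi'(j)=(\pi(j))'$ for $j\in Z\setminus\O$, and keep the same outlier set $\O$; this is feasible for the truncated problem. The cost bound hinges on a single inequality, valid for all $j\in Z\setminus\O$ and all realizations $\sigma$:
\[ (d(\sigma(j),\pi'(j))-\tau)_+ \;\le\; (d(\sigma(j),\pi(j))+d(\pi(j),\pi'(j))-\tau)_+ \;\le\; (d(\sigma(j),\pi(j))-2\tau/3)_+ \;\le\; d(\sigma(j),\pi(j)), \]
using the triangle inequality, $d(\pi(j),\pi'(j))<\tau/3$, and $(x-c)_+\le x$ for $x,c\ge 0$. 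Taking expectations, $\rho_\tau(j,\pi'(j))\le\E_\sigma[d(\sigma(j),\pi(j))]\le\E_\sigma[\max_{j'\in Z\setminus\O}d(\sigma(j'),\pi(j'))]<\tau/3$, so the cost of $(K',\O,\pi')$, namely $\max_{j\in Z\setminus\O}\rho_\tau(j,\pi'(j))$, is $<\tau/3<\tau$. This proves the contrapositive.

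I expect the only delicate point to be the mismatch between the feasible center sets of the two problems ($\PP$ for the center-g objective versus $\PP(Z)$ for the $\rho_\tau$-problem); the relocation step above is exactly what absorbs it, and the key perturbation trick is that subtracting $\tau$ dominates adding the relocation distance $<\tau/3$. Any bound $\Copt(Z,k,t)<c\tau$ with $c<1$ would already suffice to reach a contradiction, so the stated $\tau/3$ is comfortably loose; and if the center-g objective is defined with centers already restricted to $\PP(Z)$, the relocation is unnecessary and $(K,\O,\pi)$ works verbatim, using only $\rho_\tau(j,m)=\E_\sigma[(d(\sigma(j),m)-\tau)_+]\le\E_\sigma[d(\sigma(j),m)]$ and $\max_j\E_\sigma[\,\cdot\,]\le\E_\sigma[\max_j\,\cdot\,]$.
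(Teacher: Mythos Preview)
Your argument rests on reading $\Copt(Z,k,t,\rho_\tau)$ as a \emph{center} objective, i.e.\ $\min_{K,\O}\max_{j\in Z\setminus\O}\rho_\tau(j,\pi(j))$. That is not how the paper defines it: immediately before Lemma~\ref{needthis2} the authors set $\opt(Z,k,t,\rho_\tau)$ to be the optimum of the $(k,t)$-\emph{median} problem with assignment cost $\rho_\tau(j,m)$, so
\[
\Copt(Z,k,t,\rho_\tau)\;=\;\min_{|K|\le k,\,|\O|\le t}\ \sum_{j\in Z\setminus\O}\rho_\tau(j,\pi(j)).
\]
The phrase ``for the $(k,t)$-center problem'' in Lemma~\ref{needthis2} is context (that lemma is a tool for center-g), not a redefinition; the proof of Lemma~\ref{needthis2} mirrors Theorem~\ref{thm:alt-median}, which is median/means, and Lemma~\ref{lem1-center-un} and the chain $2\Copt(\A,k,t,\rho_{\hat\tau})\ge \sum_i\Copt(\A_i,\ldots)$ in Lemma~\ref{lem:combine_global} only make sense for a sum objective.

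Under the correct (sum) definition your proof breaks at the last step. You establish, for every non-outlier $j$, the per-node bound $\rho_\tau(j,\pi'(j))<\tau/3$. Taking a maximum over $j$ would indeed give $<\tau$, but what you must bound is the \emph{sum} $\sum_{j\in Z\setminus\O}\rho_\tau(j,\pi'(j))$, which your inequality only controls by $(|Z|-t)\cdot\tau/3$. That does not imply $\Copt(Z,k,t,\rho_\tau)<\tau$, so the contrapositive is not established. The inequality $\max_j\E_\sigma[\cdot]\le\E_\sigma[\max_j\cdot]$ you invoke at the end is precisely the step that does not survive replacing $\max_j$ by $\sum_j$.

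The paper's proof takes a different route: it reduces to the $t=0$ case by passing to the non-outlier set $Z'$ of $\opt(Z,k,t)$ (so $\Copt(Z',k,0,\rho_\tau)\ge\Copt(Z,k,t,\rho_\tau)\ge\tau$ and $\Copt(Z',k,0)=\Copt(Z,k,t)$), and then cites \cite[Lemma~4.4]{GM09} for the outlier-free statement. That base case genuinely uses the \emph{independence} of the node distributions $\D_j$ to relate $\E_\sigma[\max_j d(\sigma(j),\pi(j))]$ to $\sum_j\E_\sigma[\LL_\tau(\sigma(j),\pi(j))]$; there is no purely pointwise argument of the kind you attempted that bridges a sum of truncated expectations and an expected maximum.
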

\begin{proof}
The case of $t = 0$ (no outliers) is proved in \cite[Lemma 4.4]{GM09}. For a general $t > 0$, let $Z'\subseteq Z$ be the set of clustered point in $\opt(Z,k,t)$, then $\Copt(Z',k,0,\rho_{\tau}) = \Copt(Z,k,t,\rho_\tau)\geq \tau$, thus $\Copt(Z,k,t)=\Copt(Z',k,0)\geq \tau/3$.
\end{proof}

\begin{proof}[Proof of Lemma~\ref{lem:combine_global}]
It follows from Lemma~\ref{lem1-center-un} and condition (ii) of Lemma~\ref{lem:implement} that 
\[
2\Copt(\A,k,t,\rho_{\hat\tau}) \geq \sum_i \Copt(\A_i,k,t_i^\ast,\rho_{2\hat\tau}) \geq 2\hat\tau,
\]
where $t_i^\ast$ is the number of ignored nodes from $\A_i$ in the global optimum solution $\opt(\A,k,t,\rho_{\hat\tau})$. It then follows from Lemma~\ref{yikeslemma} that $\Copt(\A,k,t)\geq \hat\tau/3$,

To simplify the notation, in the rest of the proof we shorthand $t_i(\hat\tau)$ as $t_i$. Let $\A^*_i\subseteq \A_i$ be the set of nodes clustered in the global optimum solution $\opt(\A,k,t)$.
Consider ``collapsing'' the nodes in $\A^*_i$ to their
corresponding centers in $\sol(\A_i,2k,t_i,\rho_{6\hat\tau})$ while keeping the same centers in
 $\sol(\A,k,t)$.
If a node in $\A^*_i$ is marked as an outlier in
 $\sol(\A_i,2k,t_i,\rho_{2\hat\tau})$ then it is not moved, 
and it continues to be excluded from the calculation.
This movement increases the expectation of the maximum assignment
by $6\hat\tau + \Csol(\A_i,2k,t_i,\rho_{2\hat\tau})$. 
Now consider the same process where we
collapse $\A^*_i$ for all $i$. The total increase across the different $i$ is $6\hat\tau + \sum_i
\Csol(\A_i,2k,t_i,\rho_{6\hat\tau})$ because the increase in $6\hat\tau$ arises from distance truncation and is common.
Thus we achieve a solution of cost at most 
\[\gamma \left( \Copt(\A,k,t) + 6\hat\tau + \sum_i \Csol(\A_i,2k,t_i,\rho_{6\hat\tau}) \right).\]
Now consider ``expanding'' the nodes of $\A_i$ from the preclustering to the distribution $\D_j$. By that logic the expected maximum can increase by 
at most $2\hat\tau + \sum_i \Csol(\A_i,2k,t_i,\rho_{2\hat\tau})$, which by condition (i) of Lemma~\ref{lem:implement} totals to $O(\gamma\hat\tau) = O(\gamma)\Copt(\A,k,t)$. The lemma follows.
\end{proof}
We state the main theorem for the $(k,t)$-center-g problem to conclude this section.

\begin{table*}[!htbp]
\centering
{\scriptsize
\begin{tabular}{|c|c|c|c|c|c|c|}
\hline
Objective &  Approx. &  Centers & Ignored & Rounds & Total Comm. & Local Time\\
\hline 
\rule{0pt}{3ex}
\multirow{3}{*}{median} & \multirow{3}{*}{$O(1)$} & 
\multirow{3}{*}{$k$} & \multirow{2}{*}{$t$} & 1 &  $\tilde O((sk + st)B)$ & $\tilde{O}(n^2_i)$, $\tilde{O}(k^2s^3t^5)$\\
& & & & 2 &  $\tilde O((sk + t)B)$ & 
 $\tilde{O}(n^2_i)$, $\tilde{O}(k^2t^2(sk+t)^3)$\\
 \cline{4-7}
\rule{0pt}{3ex} & & & $(2+\delta)t$ & 2 & $\tilde{O}(s/\delta +skB)$ & $\tilde{O}(n^2_i)$, $\tilde{O}(s^2k^7)$\\
\hline
\rule{0pt}{3ex}
\multirow{4}{*}{\begin{tabular}{c} means/\\median\end{tabular}} 
& \multirow{4}{*}{$O(1+1/\epsilon)$} & \multicolumn{2}{c|}{ \multirow{2}{*}{$k,(1+\epsilon)t$ or $(1+\epsilon)k,t$}} & 1 & $\tilde O((sk + st)B)$ & $\tilde{O}(n^2_i)$, $\tilde{O}((sk+st)^2)$ \\
& & \multicolumn{2}{c|}{} & 2 & $\tilde O((sk + t)B)$  & $\tilde{O}(n^2_i)$, $\tilde{O}((sk+t)^2)$ \\					
\cline{3-7}
\rule{0pt}{3ex} &  &   $k$ & $(2+\epsilon+\delta)t$& \multirow{2}{*}{2} & \multirow{2}{*}{$\tilde O(s/\delta + skB)$}  & 
\multirow{2}{*}{$\tilde{O}(n^2_i)$, $\tilde{O}((sk)^2)$}\\
& & $(1+\epsilon)k$ & $(2+\delta)t$ &  & & \\
\hline
\rule{0pt}{3ex}
\multirow{3}{*}{center} & \multirow{3}{*}{$O(1)$} &  
\multirow{3}{*}{$k$} & \multirow{2}{*}{$t$} & 
1 & $\tilde O((sk + st)B)$ & $\tilde{O}((k+t)n_i),\ \tilde{O}((sk+st)^2)$\\
& & & & 2 & $\tilde O((sk +t)B)$ & $\tilde{O}((k+t)n_i),\ \tilde{O}((sk+t)^2)$\\
\cline{4-7}	\rule{0pt}{3ex}					    	&  &  & $(2+\delta)t$ & 2 & $\tilde O(s/\delta+skB)$ & $\tilde{O}(n^2_i),\ \tilde{O}((sk)^2)$\\
\hline
\begin{tabular}[x]{@{}c@{}}\rule{0pt}{2.5ex} uncertain\\
median/\\means/\\center-pp \end{tabular} & 
\multicolumn{5}{c|}{{as in the regular case above}} & {$+ O(n_iT)$, unchanged}\\
\hline
\rule{0pt}{3ex}
\eat{
\multirow{5}{*}{center-g} & \multirow{5}{*}{$O(1+\frac{1}{\epsilon})$} &
\multicolumn{2}{c|}{ \multirow{3}{*}{$k,(1+\epsilon)t$ or $(1+\epsilon)k,t$}} & 1 & $\tilde O(s((k + t) \log \Delta + tI))$ & $\tilde{O}(n^2)$, $\tilde{O}((sk+st)^2)$ \\
& & \multicolumn{2}{c|}{} & 2 & $\tilde O(sk + stI)$  & $\tilde{O}(n^2)$, $\tilde{O}((sk+st)^2)$ \\
& & \multicolumn{2}{c|}{} & 2 & $\tilde O(s \log \Delta + tI)$  & $\tilde{O}(n^2)$, $\tilde{O}((sk+t)^2)$ \\						
\cline{3-7}
\rule{0pt}{3ex} &  &   $k$ & $(2+\epsilon+\delta)t$& \multirow{2}{*}{2} & \multirow{2}{*}{$\tilde O((s/\delta) \log \Delta + sk)$}  & 
\multirow{2}{*}{$\tilde{O}(n^2)$, $\tilde{O}((sk)^2)$}\\
& & $(1+\epsilon)k$ & $(2+\delta)t$ &  & & \\
}
\rule{0pt}{3ex}
\multirow{2}{*}{center-g} & $O(1 + 1/\eps)$ & \multirow{2}{*}{$k$} & $(1+\eps)t$ & 2 & $\tilde O(skB + tI + s\log \Delta)$  & $\tilde{O}(n^2_i \log \Delta)$, $\tilde{O}((sk+t)^2)$  \\
\rule{0pt}{3ex}
& $O(1)$ & & $t$ & 1 & $\tilde O(s(kB+tI)\log\Delta)$ &  $\tilde{O}((k+t)n_i \log\Delta)$,  $\tilde{O}(s^2(k+t)^2)$ \\
\hline
\end{tabular}
}
\caption{Our results. $T$ denotes the runtime to compute $1$-median/mean of a node distribution, $I$ is the information encoding a node in the uncertain data case, $B$ the information encoding a point and  $\Delta$ the ratio between the maximum pairwise distance and the minimum pairwise distance in the dataset.}
\label{tab:our_results}
\end{table*}

\begin{theorem}
\label{thm:center-g}
For the distributed $(k,t)$-center-g problem, Algorithm~\ref{alg:center-g} outputs $\sol(\A,k,(1+\epsilon)t)$ satisfying $\Csol(\A,k,(1+\epsilon)t) = O(1+1/\epsilon)\cdot\Copt(\A,k,t)$. The sites communicate a total of $\tilde O(skB+ s\log\Delta+tI)$ bits of information to the coordinator over $2$ rounds, where $I$ is the bit complexity to encode a node. The runtime at site $i$ is $\tilde O((k+t)n_i\log\Delta)$ and the runtime at the coordinator is $\tilde O((sk+t)^2)$.
\end{theorem}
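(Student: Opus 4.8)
The plan is to assemble Theorem~\ref{thm:center-g} from the pieces already established, treating the correctness, communication, and runtime bounds separately. First I would establish correctness. By Lemma~\ref{lem:implement}, the value $\hat\tau$ computed in Step~\ref{alg:g:tau_star} exists (the largest element $\tau_{\max}\in\mathbb{T}$ makes $\rho_{6\tau_{\max}}\equiv 0$) and satisfies conditions (i) and (ii). Feeding $\hat\tau$ into Lemma~\ref{lem:combine_global} tells us that a $\gamma$-approximation of the weighted center-g problem induced by the preclustering solutions $\{\sol(\A_i,2k,t_i(\hat\tau),\rho_{6\hat\tau})\}$ is an $O(\gamma)$-approximation of $\Copt(\A,k,t)$. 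It remains to observe that the coordinator in the last step of Algorithm~\ref{alg:center-g} must actually solve this weighted problem to within an $O(1+1/\epsilon)$ factor while ignoring only $(1+\epsilon)t$ points in the combining phase; this is supplied by the bicriteria guarantee of Lemma~\ref{needthis2} (the center analogue of Theorem~\ref{thm:alt-median}) applied on the coordinator's aggregated instance, combined with Corollary~\ref{cor-compose}(iii) for composing preclustering solutions with outliers (where the $(1+\epsilon)t$ ignored points decompose into the $\sum_i t_i(\hat\tau)\le (1+\epsilon)t$ from preclustering plus the outliers chosen at the coordinator, exactly as in the $(k,t)$-median analysis of Section~\ref{sec:dist5}). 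Chaining these gives $\Csol(\A,k,(1+\epsilon)t)=O(1+1/\epsilon)\cdot\Copt(\A,k,t)$.

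Next I would bound the communication. Algorithm~\ref{alg:center-g} invokes the machinery of Algorithm~\ref{alg:involved2b}/\ref{alg:involved2a} once per $\tau\in\mathbb{T}$, but the only per-$\tau$ information that must be communicated during the search is the scalar $\sum_i\Csol(\A_i,2k,t_i(\tau),\rho_{6\tau})$ (or the handful of values needed by the coordinator to run the stable-sort/rank step), so the search over $|\mathbb{T}|=O(\log\Delta)$ values contributes only $\tilde O(s\log\Delta)$ extra bits beyond a single invocation; here I would note that all parties compute $d_{\min},d_{\max}$ in Step~1, which costs $\tilde O(sB)$. The single invocation at $\hat\tau$ costs $\tilde O((sk+t)B)$ as in Theorem~\ref{thm:2-round}, except that when the $t_i(\hat\tau)$ potential outlier \emph{nodes} are shipped to the coordinator each one costs $I$ bits rather than $B$ (since the outliers here are uncertain nodes), and since $\sum_i t_i(\hat\tau)=O(t)$ by Lemma~\ref{lem:goodness} this is $\tilde O(tI)$; the $2k$ centers per site, which are deterministic points in $\PP$, still cost $\tilde O(skB)$. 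Summing: $\tilde O(skB+tI+s\log\Delta)$ over $2$ rounds. The runtime at each site is $\log\Delta$ times the cost of one preclustering computation; using Lemma~\ref{needthis2}'s $\tilde O((k+t)|Z|)$ bound on each site's instance — actually the site must solve $(k,t)$-center with truncated distance for $O(\log t)$ geometric values of $q$ and take a convex hull, mirroring Algorithm~\ref{alg:involved2a} — this gives $\tilde O((k+t)n_i\log\Delta)$. The coordinator's final call to the bicriteria center algorithm on $O(sk+t)$ weighted points costs $\tilde O((sk+t)^2)$.

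The main obstacle is the correctness argument, specifically making fully rigorous the claim that the coordinator's combining step with the \emph{truncated} distance behaves as required. Lemma~\ref{lem:combine_global} is stated for an abstract $\gamma$-approximation of the weighted center-g problem, but one must check that the center-g combining (Corollary~\ref{cor-compose}(i), which handles plain $k$-center) still goes through when the underlying distances on the coordinator's instance are the truncated $\rho_{6\hat\tau}$ values rather than a genuine metric: the relaxed/approximate triangle inequality $\LL_\tau(u_1,u_2)+\LL_\tau(u_2,u_3)\ge\LL_{2\tau}(u_1,u_3)$ used in Lemma~\ref{lem1-center-un} must be threaded through the "collapse/expand" bookkeeping so that the accumulated additive $\hat\tau$ terms remain $O(\hat\tau)=O(\Copt(\A,k,t))$. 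I would handle this by noting that the collapse and expand arguments in the proof of Lemma~\ref{lem:combine_global} already account for these additive truncation offsets (each collapse or expand step pays an extra $O(\hat\tau)$ that is \emph{common} across all nodes, hence absorbed once), and that the bicriteria guarantee of Lemma~\ref{needthis2} is robust to running on the truncated instance because $\rho_\tau$ is itself an expectation of truncated metric distances and the algorithm of \cite{CKMN01} only uses the triangle inequality up to constant factors. The remaining details — verifying the exact constants and that $\sum_i t_i(\hat\tau)\le(1+\epsilon)t$ is consistent with the $\rho=1+\delta$ versus $\rho=2$ choice in the subroutine — are routine given Lemmas~\ref{lem:goodness}, \ref{lem:convex}, and \ref{lem:implement}.
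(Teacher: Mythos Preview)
Your proposal follows the same route as the paper's (very terse) proof: invoke Lemma~\ref{lem:implement} and Lemma~\ref{lem:combine_global} for the approximation guarantee, attribute the extra $\log\Delta$ factor in communication and site runtime to the search over $\mathbb{T}$, and the $I$ factor to encoding outlier nodes rather than points. Your elaboration of the communication and runtime accounting is accurate.

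One point in your correctness argument is misstated. You write that the $(1+\epsilon)t$ ignored points ``decompose into the $\sum_i t_i(\hat\tau)\le(1+\epsilon)t$ from preclustering plus the outliers chosen at the coordinator.'' This is off on two counts. First, with $\rho=2$ Lemma~\ref{lem:goodness} only gives $\sum_i t_i\le 3t$, not $(1+\epsilon)t$. Second, and more importantly, the $t_i$ unassigned points are \emph{not ignored} in the preclustering; as the Remark after Algorithm~\ref{alg:involved2a} emphasizes, they are shipped to the coordinator and participate in the final weighted instance. All $(1+\epsilon)t$ outliers in the output are chosen by the coordinator alone. This confusion is harmless for the communication bound (since $3t=O(t)$) and for the final outlier count (since the coordinator controls it directly), but the narrative should be corrected.

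Your ``main obstacle'' discussion about threading the relaxed triangle inequality for $\LL_\tau$ through the combining step is work you need not redo: that bookkeeping is already fully encapsulated in the collapse/expand argument of Lemma~\ref{lem:combine_global} (via Lemma~\ref{lem1-center-un}), so once you cite it the approximation claim is complete without separately re-verifying Corollary~\ref{cor-compose} under $\rho_\tau$.
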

\begin{proof}
The claim on approximation ratio follows from Lemma~\ref{lem:combine_global}. To determine $\hat\tau$, the communication cost increases by a factor of $\log\Delta$; to send the preclustering solutions, the communication cost for sending the outliers increases by a factor of $I$. The runtime follows from Lemma~\ref{needthis2} with an increase of a factor of $\log \Delta$.
\end{proof}
We remark that the dependence on $\log \Delta$ can be removed with another pass where each site computes a $\tau_i$ using binary search. The discussion is omitted in the interest of simplicity. 

Other results claimed in Table~\ref{tab:our_results} follow from analogous amendments to Theorem~\ref{thm:clustering_only}.

\bibliographystyle{ACM-Reference-Format}
\bibliography{literature}

\balance
\appendix
\section{The Full Set of Our Results}
\label{sec:full-results}

We summarize the full set of our results in Table~\ref{tab:our_results}.  Besides the main results that already appear in Table~\ref{tab:main-results}, all the $1$-round results in Table~\ref{tab:our_results} basically follow from setting $t_i=t$ for all sites $i$.   The results for $(k,t)$-median/means that ignore $(2+\delta)t$ or $(2+\eps+\delta)t$ points basically follow from Theorem~\ref{thm:clustering_only}, where for $(k,t)$-median with $k$ centers (unicriterion) we need to apply again the $1$-round result, and for $(k,t)$-median/means with $(1+\eps)k$ centers we simply use the second inequality of Theorem~\ref{thm:alt-median} instead of the first one at the final clustering step at the coordinator.  The result for $(k,t)$-center that ignore $(2+\delta)t$ points is due to the following modifications on Algorithm~\ref{sec:center1}: sites do not send the total $(1+\delta)t$ local outliers to the coordinator, and thereafter the coordinator performs the second level clustering with (another) $t$ outliers, we have $(2+\delta)t$ outliers in total.

\section{Proof of Theorem~\ref{thm:alt-median}}
\label{proof:altmedian}

\begin{proof}
The result in \cite{CKMN01} prioritized approximation ratio and used
\cite{CG99} instead of \cite{JV01}. However the former increases the
running time to $\tilde{O}(n^3)$ to get the better approximation
factor. Using the latter result of \cite{JV01}, we get the running time of the
first part of the theorem. To observe the quality guarantee, note that
\cite{JV01} creates two solutions with $k_1$, $k_2$ centers and each solution ignores exactly $t$ outliers, where $k_1 <
k < k_2$. Although not explicitly stated in \cite{JV01}, but as observed in \cite{CKMN01}, the algorithm is applicable
to the outlier case as we can simply stop the algorithm when there are $t$ points unprocessed.

Set $a=(k_2 - k)/(k_2-k_1)$ and $b=1-a$ and consider
the convex combination of the two solutions. The convex
combination of their costs is a $3$-approximation. To get exactly $k$
centers, we iteratively pair off every center in the small solution
with its nearest (remaining) center in the large solution. With
probability $a$ we choose all the centers in the small solution and
otherwise we choose the paired centers in the large solution. In the latter case we have chosen $k_1$
centers and we choose the remaining $k-k_1$ centers at random from the
remaining centers in the large solution. Note that every center in the large solution is chosen with probability at least $1-a$.

In the current case we also have two solutions and each 
solution ignores exactly $t$ outliers. Notice that if a point is
labeled outlier in one solution and not in the other, it must be
directly connected to a center (in the language of \cite{JV01}). In
the case we choose the centers in the small solution, all the
points that were directly connected to the center continue to
satisfy that $6$ times its dual value is greater than the distance to its
center plus $6$ times its payment towards the centers
(\cite{CKMN01} reduced this to $4$ based on \cite{CG99}).
If we choose all the centers in the small solution then 
we cannot have more than $t$ outliers. If we choose the large solution 
then we may exceed $t$ outliers if all the points labeled outliers in the 
small solution were excluded and some of the points clustered in the large solution 
(but not in the small) cannot be accommodated because the corresponding center was 
not chosen. But this happens with probability at most $1-(1-a)=a$ and therefore in 
expectation we lose an extra $a\cdot t$ outliers. If $a \geq \epsilon/2$, we choose the small 
solution, which provides $6/\epsilon$ approximation. Otherwise we run the rounding  
part in \cite{JV01} multiple times and choose a solution with at most $t + \epsilon t$ outliers 
(which happens with probability $O(\epsilon)$ using Markov inequality).

For the second part observe that if $k_1+k_2 \geq (1+\epsilon)k$ then $a>\epsilon$ and we
have $3/\epsilon$ approximation with $k$ centers. Otherwise we use {\em all}
the $k_1+k_2$ centers. Now we can assert that the distance cost plus
$3$ times the cost towards centers is at most $3$ times the dual
value for all points not marked as outliers in both
solutions. Thus the total number of outliers will be the intersection
of the outliers of the two solutions and at most $t$.  The theorem
follows.

Note that the above rounding argument uses triangle inequality. While the triangle inequality does not hold for squares of distances (as in the $k$-means objective function), we instead use $2(a^2 + b^2) \geq (a+b)^2$.
\end{proof}

\section{Proof of Lemma~\ref{needthis2}}\label{sec:bicriteria_tau}
\begin{proof}
The proof is similar to that of Theorem~\ref{thm:alt-median}. The only different part is the 
accounting for the truncation. For the $(1+\epsilon)k$ result we note a pseudo-triangle inequality  (see \cite[Lemma 4.1]{GM09})
$\rho_{3\tau}(j,m)\leq \rho_{\tau}(j,m')+\rho_{\tau}(i,m') + \rho_{\tau}(i,m)$ for any $m'$, 
since in this case we assign points within three hops. For the $(1+\epsilon)t$ result we assign within $9$ hops---each point has a center in the large and small solutions within $3$ hops. The pairing of the centers in the two solutions show that the pair of a center in the small solution exists within $6$ hops. The whole argument for Theorem~\ref{thm:alt-median} then goes through.
\end{proof}

\end{document}